\newtheorem{theorem}{Theorem}
\newtheorem{definition}{Definition}
\newtheorem{proposition}{Proposition}
\newtheorem{lemma}{Lemma}
\newcommand{\cB}{\mathcal{B}}
\newcommand{\cI}{\mathcal{I}}
\newcommand{\cM}{\mathcal{M}}
\newcommand{\cR}{\mathcal{R}}
\newcommand{\cX}{\mathcal{X}}
\newcommand{\NN}{\mathbb{N}}
\newcommand{\thresh}{k}
\newcommand{\bound}{B}
\newcommand{\qi}{Q}
\newcommand{\nqi}{N}
\newcommand{\lap}[1]{\mathrm{Lap}\left(#1\right)}
\newcommand{\ds}{D}
\newcommand{\bias}[1]{\bm{\mathcal{B}}\left(#1\right)}
\newcommand{\biasi}[2]{\mathcal{B}\left(#1\right)_{#2}}
\newcommand{\bmech}{\cM_{\text{KA}}^{\beta}}
\newcommand{\prammech}{\cM_{\text{SW}}}
\newcommand{\lapmech}{\cM_{\text{Lap}}}
\newcommand{\tmech}{\cM_{\text{CS}}}
\newcommand{\qiattrs}{n_{\qi}} 
\newcommand{\nqiattrs}{n_{\nqi}} 
\newcommand{\qiuniverse}{\cX_{\qi}}
\newcommand{\acs}{\alpha_{\text{CS}}}
\newcommand{\asw}{\alpha_{\text{SW}}}
\newcommand{\alap}{\alpha_{\text{Lap}}}
\newcommand{\dom}[1]{\operatorname{dom}\left(#1\right)}
\newcommand{\norm}[1]{\left\lVert#1\right\rVert}
\newcommand{\pr}[1]{\operatorname{Pr}\left(#1\right)}
\newcommand{\EE}[2]{\operatorname{\mathbb{E}}_{#1}\left[#2\right]}
\newcommand{\mad}[1]{D_{\mathrm{mean}}\left(#1\right)}
\newcommand{\eps}{\epsilon}
\title{Privacy and Bias Analysis of Disclosure Avoidance Systems}
\author{
Keyu Zhu$^1$\and
Ferdinando Fioretto$^2$\and
Pascal Van Hentenryck$^{1}$\and
Saswat Das$^3$\And 
Christine Task$^4$
\affiliations
$^1$Georgia Institute of Technology\\
$^2$Syracuse University\\
$^3$National Institute of Science Education and Research\\
$^4$Knexus Research Corporation\\
\emails
keyu.zhu@gatech.edu,
ffiorett@syr.edu,
pvh@isye.gatech.edu, 
saswat.das@niser.ac.in,
christine.task@knexusresearch.com
}
\begin{document}

\maketitle

\begin{abstract}
Disclosure avoidance (DA) systems are used to safeguard the confidentiality of data while allowing it to be analyzed and disseminated for analytic purposes. These methods, e.g., cell suppression, swapping, and k-anonymity, are commonly applied and may have significant societal and economic implications. However, a formal analysis of their privacy and bias guarantees has been lacking. This paper presents a framework that addresses this gap: it proposes differentially private versions of these mechanisms and derives their privacy bounds. In addition, the paper compares their performance with traditional differential privacy mechanisms in terms of accuracy and fairness on US Census data release and classification tasks. The results show that, contrary to popular beliefs, traditional differential privacy techniques may be superior in terms of accuracy and fairness to differential private counterparts of widely used DA mechanisms.
\end{abstract}

\section{Introduction}

Disclosure avoidance (DA) systems are methods used to protect confidentiality while still enabling data analyses and dissemination. These techniques are used in various fields, such as economics, public health, social science, and data science, and have a long history in censuses and other data collection efforts. For example, the US Census Bureau has leveraged various traditional DA techniques from the 1930 decennial release on. These include suppressing certain tables based on the number of people or households in a given area and swapping data in records with similar characteristics. 

While traditional confidentiality measures, such as suppression \citep{kelly1992cell}, swapping \citep{dalenius1982data}, and k-anonymity \citep{sweeney2002k} are important for protecting against accidental or intentional disclosure, they lack formal guarantees that quantify the privacy risks that individuals incur upon data releases. This is important as it restricts the ability of participants to assess the impact of these protections on published data. 


In contrast, differential privacy (DP) \citep{Dwork:06} is a \emph{relatively newer} DA that provides a rigorous definition of privacy and allows for quantifiable privacy guarantees. In differential privacy, the privacy of an individual is preserved by adding noise to their data in a controlled way. Such a process ensures that the participation of an individual in a dataset does not significantly affect the results of subsequent queries. 
Marking a significant shift towards more rigorous privacy protections, the US Census has recently adopted differential privacy for the 2020 Census release. However, it is worth noting that many other data agencies and organizations still rely on traditional disclosure avoidance systems to protect the confidentiality of their data. 

While these approaches can be effective at protecting against accidental or intentional disclosures, it is unclear what privacy guarantees they provide when compared to differential privacy. On the other hand, while differential privacy can provide stronger privacy guarantees than traditional disclosure avoidance systems, it may come with a cost in terms of accuracy and fairness \citep{kuppam2019fair,Fioretto:IJCAIa,Fioretto:IJCAI22a}, a topic of considerable debate recently.

Given that these DA are used to release data products that inform decisions with significant societal and economic consequences, it is essential to conduct a rigorous comparison of traditional DA and differential privacy in terms of privacy, bias, and fairness.
However, one of the challenges faced in this comparison is the absence of a standardized framework for evaluating privacy protections. Differential privacy offers a rigorous definition of privacy and enables quantifiable privacy guarantees. On the other hand, traditional disclosure avoidance techniques may not have a distinct set of privacy metrics, making it challenging to directly compare the level of privacy protection they offer.

\begin{figure*}[!t]
\includegraphics[width=\linewidth]{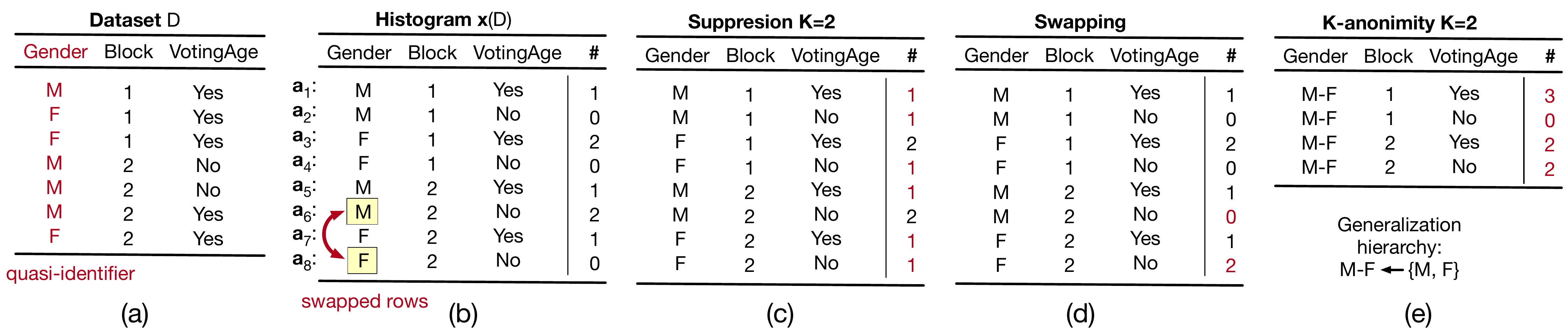}
\caption{\label{fig:scheme} Illustration of the various traditional DA mechanisms.}
\end{figure*}

\paragraph{Contributions} This paper aims at addressing this challenge: it proposes a framework for comparing traditional DA to differential privacy and makes four distinct contributions. (1) It first proposes {\em carefully randomized} versions of three widely adopted {\em traditional} DA: suppression, swapping, and k-anonymity. The resulting randomized mechanisms can then be analyzed rigorously. In particular, the paper  derives $(\epsilon,\delta)$-differential privacy bounds for these new mechanisms and demonstrates that they are close to their traditional counterparts in terms of accuracy. (2) The paper then derives bounds for the bias 
of the new DA mechanisms, allowing for a direct comparison with classical differential privacy techniques for which such bounds exist. 
(3) Next, the paper analyzes the fairness impact induced by the considered DA systems and shows that the fairness violations incurred by the randomized DA algorithms are close to those of their traditional counterparts. (4) Finally, it provides an extensive empirical analysis of the performance of the new DA mechanisms and a comparison with two classical differentially private algorithms on data release and classification tasks. 

{\em From a broader perspective, the paper demonstrates that, contrary to  popular belief, classical differential privacy mechanisms may be superior to traditional disclosure avoidance systems in important data release and learning tasks in terms of accuracy and fairness for the same privacy levels}. As a consequence, the results of this study have the potential to impact the way in which data agencies and organizations approach disclosure avoidance: indeed, it provides the first framework for comparing the relative strengths and limitations of traditional DA and differential privacy.

\section{Problem Setting}

The paper considers datasets of $m$ records with $d$ attributes, $A_1,\dots,A_d$. Each record is a $d$-dimensional tuple of attributes associated with a unique individual from a data universe $\cX\coloneqq \prod_{i=1}^d \dom{A_i}$,
where $\dom{A}$ represents the collection of all the possible values for the attribute $A$.
For convenience, assume that $\cX = \{\bm{a}_1, \ldots, \bm{a}_n\}$, with $n$ being the size of the data universe $\cX$, and consider the \emph{histogram} $\bm{x}(D) \in \NN^n_+$ of dataset $D$, whose $i^{th}$ entry $x_i(D)$ represents the count of the individual records with the combination of attributes $\bm{a}_i$.
When there is no ambiguity, the dataset $D$ is omitted in the expression $\bm{x}(D)$ for simplicity.
Additionally, and without loss of generality, the histogram $\bm{x}$ is assumed to be sorted in some increasing order, i.e., $x_i\leq x_j$, for any $i<j$. Finally, each entry of the histogram $\bm{x}$ is assumed to be bounded by a value $\bound > 0$, i.e., $\bm{x}\in [\bound]^n$. 

Consider, for example, the illustration in Figure \ref{fig:scheme}(a);
The dataset $D$ contains records with three attributes: (geographic)
``Block'', ``Gender'', and ``Voting Age''. The associated histogram is
illustrated in Figure \ref{fig:scheme}(b).  In this instance, the
attribute ``Gender'', when combined with external information like
``Zip code'', can become personally identifying information and
thus is known as a \emph{quasi-identifier} (QI) while the remaining
attributes are referred to as \emph{non-quasi-identifiers}. Throughout
the paper, the sets of quasi-identifiers and non-quasi-identifiers are
denoted by $\qi$ and $\nqi$, respectively. Given a record $\bm{a}$ and
a set $S$ of attributes, $\bm{a}[S]$ is the vector of values for
attributes $S$ in $\bm{a}$.

The goal of the paper is to analyze the privacy, utility, and fairness properties of traditional disclosure avoidance systems (reviewed in the next section) on the task of releasing a privacy-preserving version $\tilde{\bm{x}}(D)$ of the histogram $\bm{x}(D)$. 
The notion of privacy considered in this paper is that of differential privacy, which is reviewed in the next section. The notions of utility and fairness central to the analysis rely on the concept of (statistical) \emph{bias}. For any entry $i\in [n]$, the bias associated with a mechanism $\cM$ is 
\[
    \cB(\cM)_{i}=\EE{}{\cM(\ds)_i}-x_i(\ds)\,,
\]
where the expectation is taken over the randomness of the mechanism, and
\(
    \bias{\cM}=\left[
    \begin{matrix}
        \biasi{\cM}{1}&\dots&\biasi{\cM}{n}
    \end{matrix}
    \right].
\)
Fairness is defined as the maximal difference in biases across the histogram entries.
\begin{definition}[$\alpha$-fairness \citep{ijcai2022p559}]\label{def:a-fair}
    A  mechanism $\cM$ is said to be $\alpha$-fair if the maximum difference among 
    the biases is bounded by $\alpha$, i.e.,
    \begin{equation*}
        \lVert\bias{\cM}\rVert_{\rightleftharpoons} = \max_{i\in[n]}~\biasi{\cM}{i}-\min_{i\in[n]}~\biasi{\cM}{i}\leq \alpha.
    \end{equation*}
\end{definition} 

\section{DA for Private Data Release}
\label{sec:das}

This section provides an overview of the prevalent DA methods utilized by data agencies to safeguard sensitive information within datasets. 
To comply with space limitations, the paper reports the proofs of all theorems in the appendix. 

\paragraph{Differential Privacy.}
Differential privacy (DP) \citep{Dwork:06} is a strong privacy notion used to quantify and bound the privacy loss of an individual participation to a computation. 
 Informally, it  states that the probability of any output does not change much when a record is changed from a dataset, limiting the amount of information that the output reveals about any individual.  
The action of changing a record from a dataset $D$, resulting in a new dataset $D'$, defines the notion of \emph{adjacency}, denoted $D \sim D'$.
\begin{definition}
  \label{dp-def}
  A mechanism $\cM \!:\! \mathcal{D} \!\to\! \mathcal{R}$ with domain $\mathcal{D}$ and range $\mathcal{R}$ is $(\epsilon, \delta)$-differentially private, if, for any two inputs $D \sim D' \!\in\! \mathcal{D}$, and any subset of output responses $R \subseteq \mathcal{R}$:
  \[
      \Pr[\cM(D) \in R ] \leq  e^{\epsilon} 
      \Pr[\cM(D') \in R ] + \delta.
  \]
\end{definition}
\noindent 
Parameter $\epsilon > 0$ describes the \emph{privacy loss} of the algorithm, with values close to $0$ denoting strong privacy, while parameter $\delta \in [0,1)$ captures the probability of failure of the algorithm to satisfy $\epsilon$-DP. 
In particular, the \emph{Laplace mechanism} for histogram data release, defined by
\(
    \cM_{\text{Lap}}(\bm{x}) = \bm{x} + \text{Lap}(\nicefrac{2}{\epsilon}), 
\)
\noindent where $\text{Lap}(\eta)$ is the Laplace distribution centered at 0 and with scaling factor $\eta$, satisfies $(\epsilon, 0)$-DP. 
Additionally, the \emph{discrete Gaussian mechanism} \citep{canonne2020discrete}, defined by
\(
    \mathcal{M}_{\text{Gaus}}(\bm{x}) = \bm{x} + \mathcal{N}_{\mathbb{Z}}(0, \nicefrac{4}{\epsilon^2}), 
\)
where $\mathcal{N}_\mathbb{Z}(0, \sigma)$ is the discrete Gaussian distribution with $0$ mean and standard deviation $\sigma$, satisfies $(\frac{1}{2}\epsilon^2 + \epsilon \sqrt{2 \log(\nicefrac{1}{\delta})}, \delta)$-DP. 

We next discuss three predominant traditional DA systems which, in contrast to differential privacy, do not provide formal bounds on privacy leakage. 

\paragraph{Cell suppression.}
The cell suppression technique \citep{kelly1992cell}, frequently employed by statistical agencies (e.g., \citep{mog}), aims at concealing the low-frequency counts in histograms before data dissemination. 
\begin{definition}
Given a histogram $\bm{x}$ and a threshold value $k$, cell suppression returns a private histogram $\tilde{\bm{x}}$ with entries
\begin{equation}\label{eq:cell_suppression}
\tilde{\bm{x}}_i = \max\left\{\bm{x}_i, \nicefrac{k}{2}\right\}.
\end{equation}
\end{definition}
\noindent
Figure~\ref{fig:scheme}(c) illustrates the application of cell suppression with threshold value $k=2$ to the histogram of Figure~\ref{fig:scheme}(b). The affected row counts are highlighted in red. 
A significant limitation of this approach is that it only protects sensitive attributes with a low number of records while neglecting others.

\paragraph{Swapping.}
Swapping \citep{dalenius1982data} is a mechanism that swaps the values of a set of sensitive attributes (the quasi-identifiers) in a record with those of another record. Informally speaking, the basic steps of the algorithms can be summarized as follows:
\begin{enumerate}[leftmargin=*, parsep=0pt, itemsep=0pt, topsep=2pt]
    \item Select multiple pairs of records in the histogram with probability proportional to their discrepancies;
    \item Swap the values of the quasi-identifiers attributes for each selected pair of records.  
\end{enumerate}

\noindent
Like cell suppression, swapping produces a privacy-preserving histogram $\tilde{\bm{x}}$. However, contrary to cell suppression (and differential privacy mechanisms), swapping requires a piece of additional information: the quasi-identifier attributes of the dataset. Figure~\ref{fig:scheme}(d) illustrates the application of swapping where two rows are swapped, using ``Gender'' as the quasi-identifier attribute (see figure (a)). The affected row counts are highlighted in red. While swapping has been commonly used, for example by the US Census Bureau, to swap similar individuals within close geographies, it is not immune to reconstruction attacks \citep{garfinkel2019understanding}.

\paragraph{$k$-anonymity.\!\!\!\!}
Next, $k$-anonymity protects sensitive data in a dataset by ensuring that each record in the dataset is indistinguishable from at least $k-1$ other records.

\begin{definition}[$k$-Anonymity \citep{sweeney2002k}]\label{def:k-anonymity}
    A dataset satisfies $k$-anonymity, relative to a set of the quasi-identifiers, if and only if when the dataset is projected to include only quasi-identifiers, every record appears at least $k$ times.
\end{definition}

The basic idea behind $k$-anonymity is to generalize 
certain identifying attributes of individuals in the dataset such 
that each group of individuals with similar characteristics contains 
at least $k$ individuals. An outline of the algorithm is 
provided below (a formal description is given in Appendix~\ref{app:sec:DA_alg}):

\begin{enumerate}[leftmargin=*, parsep=0pt, itemsep=0pt, topsep=2pt]
    \item define a \emph{hierarchy} $H$ for each quasi-identifier;
    \item constructs a histogram that lists the number of records for each combination of quasi-identifiers;
    \item suppress the combinations in the generalization histogram that have fewer than $k$ instances.
    \item release the resulting histogram $\tilde{\bm{x}}(D, H)$.
\end{enumerate}
An important observation is that, contrary to the previous methods reviewed, $k$-anonymity produces a privacy-preserving histogram $\tilde{\bm{x}}(D,H)$ in a different space than $\cX$. 
This important observation will be relevant in the error analysis. 
It additionally requires access to quasi-identifier attributes as well as a generalization histogram. 

Figure~\ref{fig:scheme}(e) illustrates the application of $k$-anonymity with $k\!=\!2$ to the histogram of Figure~\ref{fig:scheme}(b), using a
generalization hierarchy grouping Males and Females into a single attribute. 
Despite being widely adopted to publish statistics and medical data, k-anonymity does not prevent re-identification attacks that exploit external public data \citep{li2011provably}. 


\section{DA Analysis Roadmap}
\label{sec:roadmap}

This section outlines the methodology followed in the rest of the
paper.
Section~\ref{sec:privacy_analysis} presents DP counterparts to traditional DA systems, including cell suppression, swapping, and k-anonymity. It aims to show that these DP counterparts preserve the main characteristics of the original mechanisms and provide an analysis of their privacy and errors under a unified privacy setting of histogram data release $\tilde{\bm{x}}(D)$. 
It is important to note that classical DP
algorithms (e.g., Laplace mechanism) and cell suppression, 
make no assumptions about data attributes. In contrast,
swapping relies on the use of quasi-identifiers, 
and $k$-anonymity further requires a generalization hierarchy 
This hierarchy forces k-anonymity to
produce a histogram $\tilde{\bm{x}}(D,H)$ in a different space than
that of $\tilde{\bm{x}}(D)$. While this does not affect the privacy analysis, which
allows for a meaningful comparison across all mechanisms, it
challenges the evaluation of the performance of these techniques.  The
paper addresses this challenge by also presenting a unified empirical
framework for comparing the errors and biases of the various
techniques in terms of the original data space $\cX$. This
necessitates a reconstruction step for k-anonymity, which is outlined
in Appendix~\ref{app:dp_kanonimity}. It is important to recognize that,
while the DP DA mechanisms share many characteristics with their
traditional DA counterparts, {\em they should not be considered as
``noisy'' versions of them}. As a result, the analytical and
experimental results presented may not necessarily show a decrease in
error as the privacy budget increases. In fact, they may even be more
precise than the traditional mechanisms for some privacy budgets.

Next, we present the DP versions of the traditional DA systems and 
their privacy analyses. These analyses specify the value of the $\delta$ parameter for a given value of $\epsilon$. Section~\ref{sec:fairness} analyzes the fairness
 results. Finally, Section~\ref{sec:experiments} presents an
 experimental evaluation on an extract of the American Community Survey (ACS) data \citep{SDNist}.


\section{Privacy and Errors Analysis}
\label{sec:privacy_analysis}

This section presents the first main contribution of the paper. It
introduces differentially private counterparts to the DA presented
earlier and analyzes their privacy guarantees and errors.  The
section starts with a technical lemma that specifies a sufficient
condition for $(\epsilon,\delta)$-DP. The lemma is a critical tool to
derive the privacy guarantees of the randomized versions of the DA
discussed next.

\begin{lemma}\label{lem:dp_cond}
Let $D$, $\ds'$ be datasets such that $D\sim \ds'$, let $S$ be defined as 
\begin{equation*}
        S\coloneqq\left\{\bm{o}~\middle\vert~ \frac{\pr{\cM(\ds)=\bm{o}}}{\pr{\cM(\ds')=\bm{o}}}\leq\exp(\epsilon) \right\}\,
\end{equation*}
and let $S^\complement$ denote the complement set of $S$. If   
\begin{equation*}
\pr{\cM(\ds)\in S^\complement}\leq \delta\,,
\end{equation*}
then mechanism $\cM$ is $(\epsilon,\delta)$-differentially private.
\end{lemma}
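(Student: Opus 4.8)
The plan is to verify the defining inequality of $(\eps,\delta)$-differential privacy, namely $\pr{\cM(\ds)\in R}\leq e^{\eps}\,\pr{\cM(\ds')\in R}+\delta$, for an arbitrary output set $R\subseteq\cR$, by splitting $R$ according to whether its outputs lie in the ``good'' set $S$ on which the likelihood ratio is controlled. Concretely, I would write $R=(R\cap S)\cup(R\cap S^{\complement})$ as a disjoint union, so that $\pr{\cM(\ds)\in R}=\pr{\cM(\ds)\in R\cap S}+\pr{\cM(\ds)\in R\cap S^{\complement}}$, and then bound the two terms separately. The entire argument is driven by the observation that the definition of $S$ gives pointwise control on one piece, while the hypothesis $\pr{\cM(\ds)\in S^{\complement}}\leq\delta$ handles the other.

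First I would treat the portion in $S$. By the definition of $S$, every output $\bm{o}\in S$ satisfies $\pr{\cM(\ds)=\bm{o}}\leq e^{\eps}\,\pr{\cM(\ds')=\bm{o}}$. Summing this pointwise bound over all $\bm{o}\in R\cap S$ yields
\[
\pr{\cM(\ds)\in R\cap S}\leq e^{\eps}\,\pr{\cM(\ds')\in R\cap S}\leq e^{\eps}\,\pr{\cM(\ds')\in R},
\]
where the final step merely enlarges the event from $R\cap S$ to $R$. For the portion in $S^{\complement}$, I would invoke monotonicity of probability together with the hypothesis: since $R\cap S^{\complement}\subseteq S^{\complement}$, we have $\pr{\cM(\ds)\in R\cap S^{\complement}}\leq\pr{\cM(\ds)\in S^{\complement}}\leq\delta$. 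Adding the two bounds gives exactly $\pr{\cM(\ds)\in R}\leq e^{\eps}\,\pr{\cM(\ds')\in R}+\delta$, and since $R$ was arbitrary this establishes the inequality for the ordered pair $(\ds,\ds')$; applying the identical argument to every adjacent pair, in either direction using the symmetry of $\sim$, yields the full definition of $(\eps,\delta)$-DP.

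I do not expect a genuine obstacle here, as the argument is essentially bookkeeping around a set partition. The only point requiring mild care is the treatment of the output space: the set $S$ is written via the pointwise mass $\pr{\cM(\cdot)=\bm{o}}$, which is natural for discrete ranges, so for continuous ranges I would replace these masses with densities and the sums with integrals, after which the same partition-and-bound scheme goes through verbatim. A secondary subtlety worth flagging is that the likelihood-ratio definition of $S$ tacitly presumes $\pr{\cM(\ds')=\bm{o}}>0$ on the outputs considered; I would note that any output with a vanishing denominator can be absorbed into $S^{\complement}$ without affecting either bound, so the conclusion is unchanged.
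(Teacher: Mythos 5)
Your proposal is correct and follows essentially the same route as the paper's proof: partition the output event into its intersections with $S$ and $S^{\complement}$, apply the pointwise likelihood-ratio bound on the first piece (then enlarge the event), and bound the second piece by $\delta$ via monotonicity and the hypothesis. Your added remarks on densities versus point masses and on vanishing denominators are sensible refinements the paper leaves implicit.
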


\subsection{Differentially Private Cell Suppression}
\label{sec:cell_suppression}


While cell suppression protects the privacy of the minorities of the
dataset, it neglects the privacy protection of the majorities and thus
does not satisfy the requirements of differential privacy.  Indeed,
the deterministic nature of this mechanism prevents it from generating
different outputs for two neighboring datasets. An extended discussion
is deferred to Appendix~\ref{app:sec:DA_alg}. To address this issue, the
paper introduces a randomized version of cell suppression, referred
to as \emph{DP cell suppression}. This mechanism, denoted by $\tmech$,
releases a private count for every $i \in [n]$ as follows:
\begin{equation}
\label{eq:DPsuppression}
\tmech(D)_i = \hat{x}_i =
\begin{cases}
x_i & \text{if } x_i+\eta_i \geq k, \\
\nicefrac{k}{2} & \text{otherwise}
\end{cases},
\end{equation}
where $\eta_i \sim \lap{2/\epsilon}$ is an additive noise variable drawn from a 0-centered Laplace distribution with factor $\nicefrac{2}{\epsilon}$ and $k$ is the cell suppression threshold.

Mechanism $\tmech$ has similarities with the \textsl{Sparse Vector
Technique} (SVT) \citep{dwork2014algorithmic} which, given a sequence
of queries and a real-valued threshold, outputs a vector indicating
whether each (noisy) query answer is above or below the corresponding
(noisy) threshold.  However, there are three fundamental differences:
(1) $\tmech$, does not perturb the threshold value $k$; (2) it
generates numeric outputs in contrast to binary outputs
of \textsl{SVT}; and (3) it reports true counts rather than noisy
counts, as long as the noisy counts are above the threshold (first
condition of Equation~\eqref{eq:DPsuppression}).

Figure~\ref{fig:error}(left) reports the empirical errors of $\tmech$
for several threshold values $k$ ($x$-axis) and $\epsilon$ parameters.
The errors are given for the ACS Massachusetts dataset \citep{SDNist} (described in
details in Appendix~\ref{app:datasets}): they report the $\ell_1$
distances $\| \tilde{\bm{x}} - \bm{x}\|_1$ between the
histograms of the cell suppression and its DP counterpart. Notice
how close the errors incurred by $\tmech$ are with respect to the
original mechanism. 
This is important as it enables a meaningful comparison of $\tmech$ and other DP mechanisms, since
$\tmech$ has a similar bias as the traditional cell suppression that is
currently widely adopted by statistical agencies and organizations.

\paragraph{Privacy Analysis.}
The next theorem reports the privacy guarantee provided by $\tmech$.
\begin{figure*}[!t]
\centering
\includegraphics[width=0.3\textwidth]{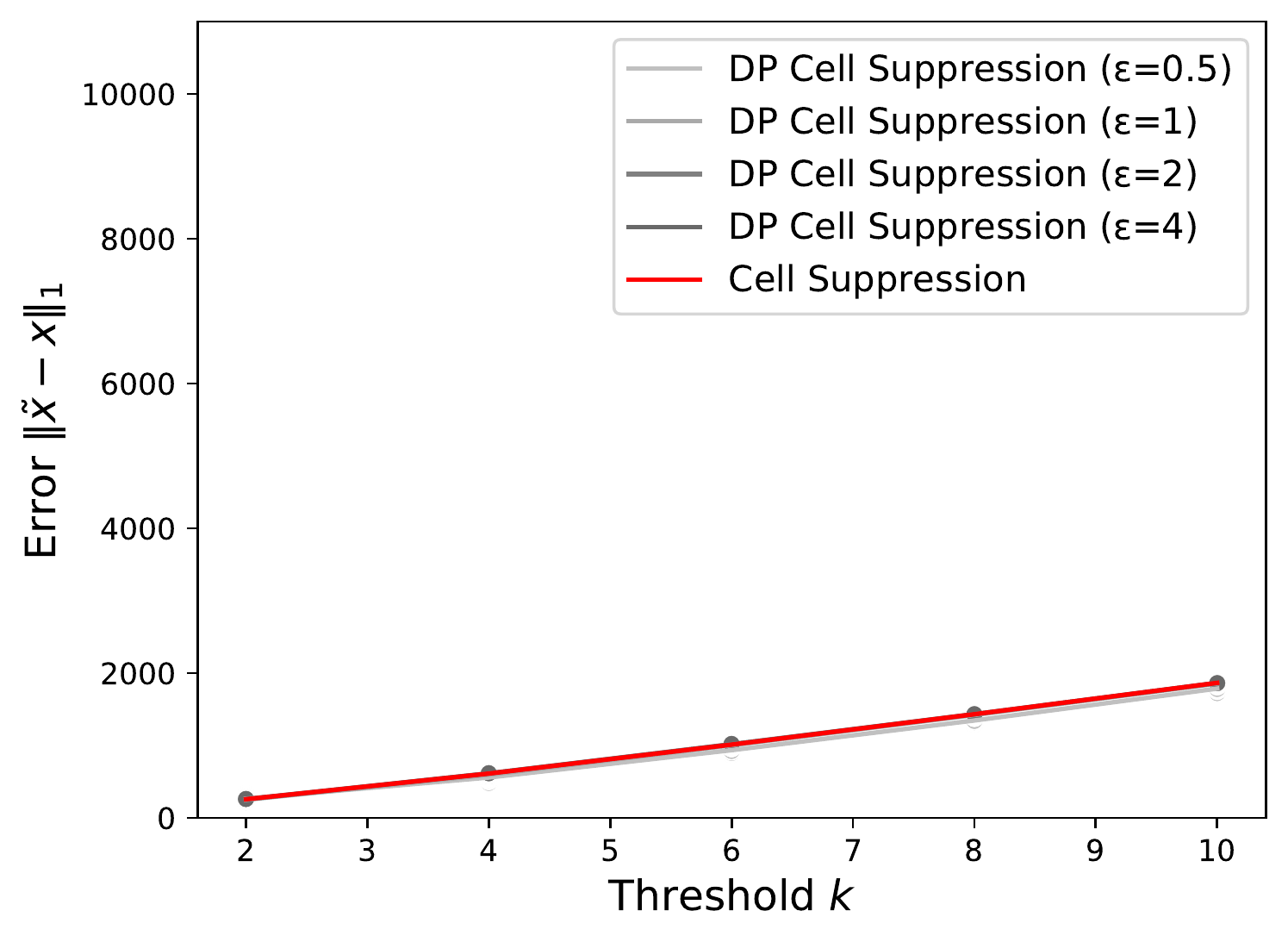} 
\includegraphics[width=0.37 \textwidth]{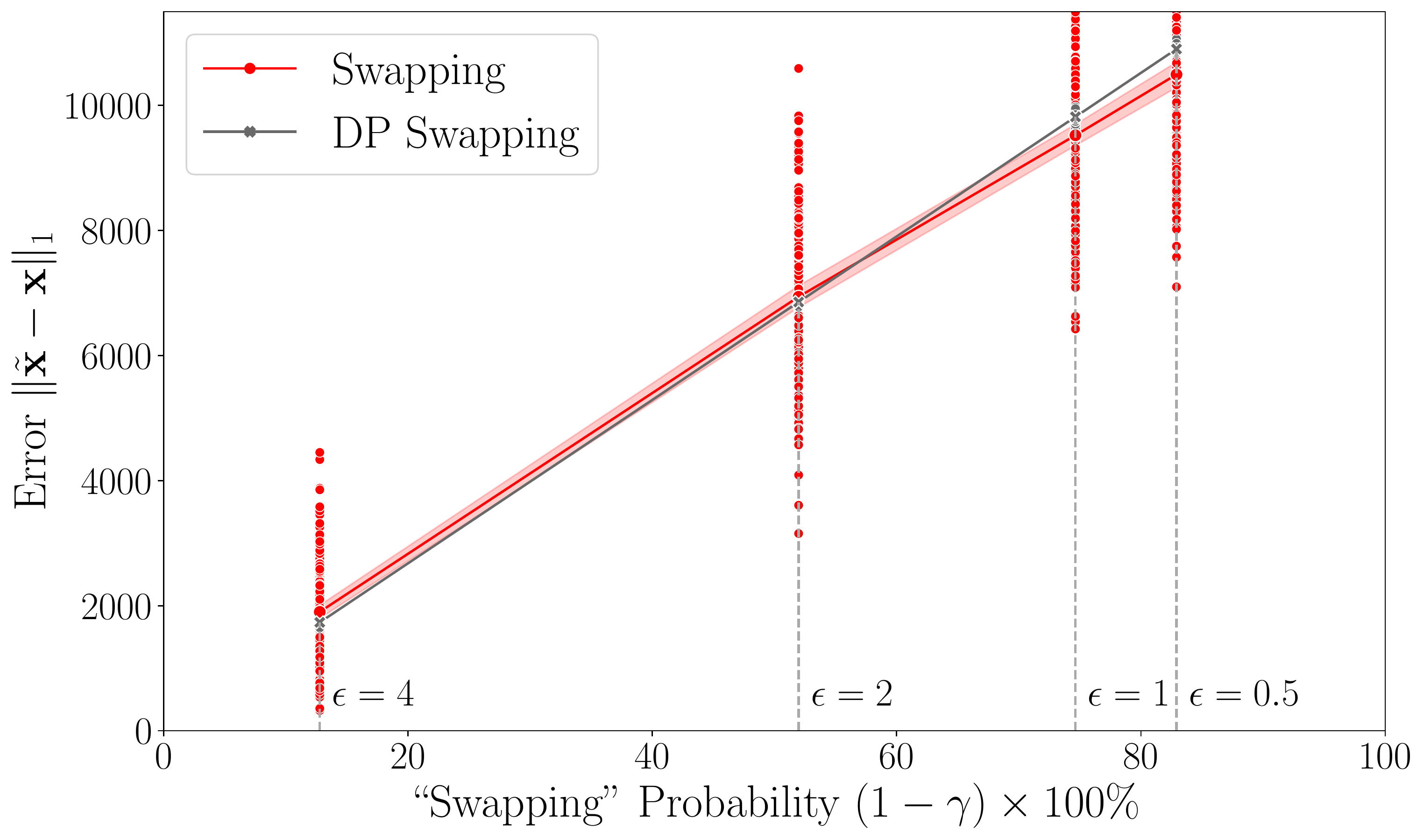}
\includegraphics[width=0.3\textwidth]{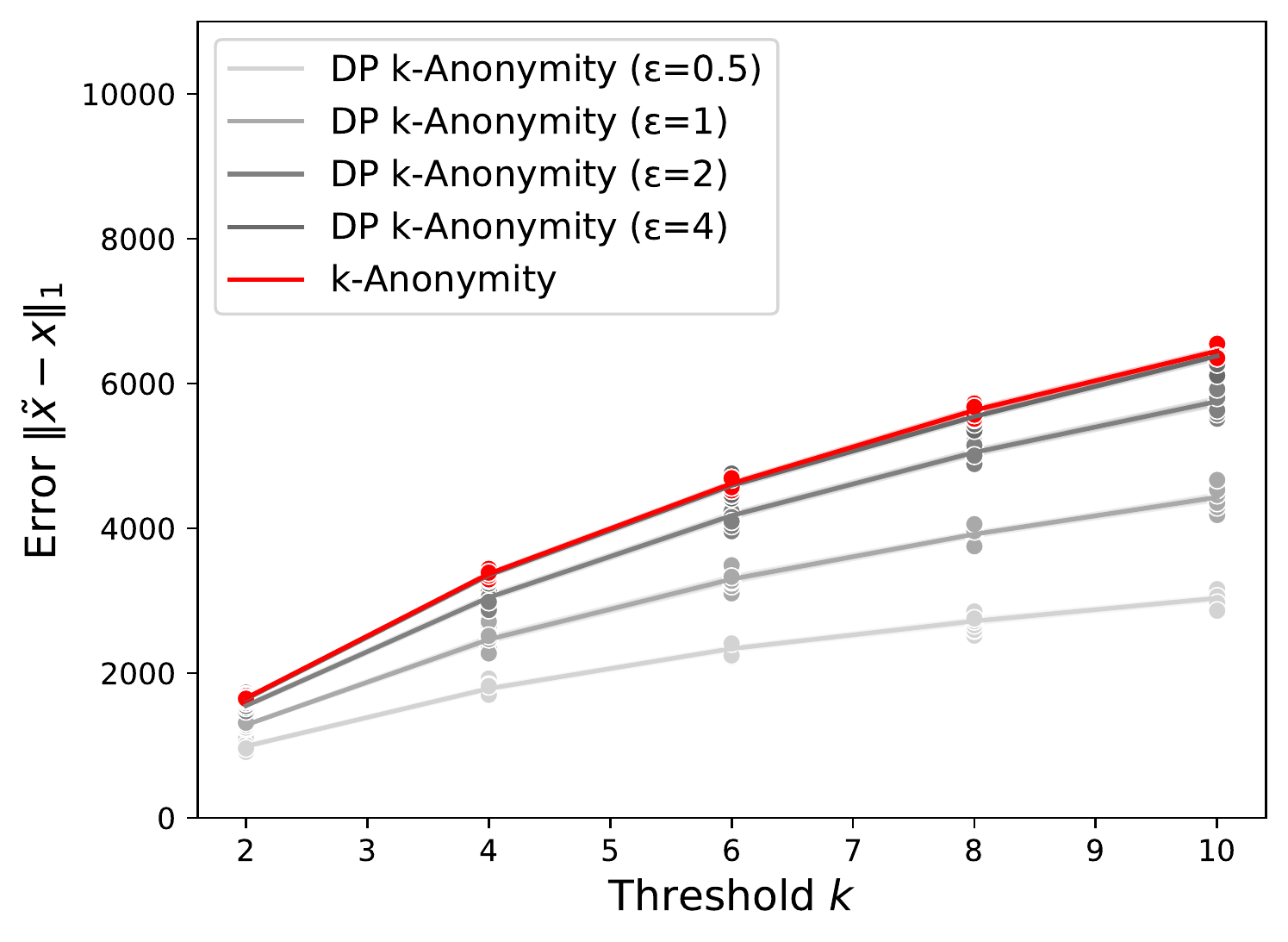}
\caption{MA ACS dataset: Errors $\Vert \tilde{\bm{x}}- \bm{x}\Vert_1$ for cell suppression (left), 
swapping (center) and $k$-anonymity (right) and their differentially 
private counterparts (average of 200 repetitions).}
\label{fig:error}
\end{figure*}

\begin{theorem}\label{thm:sup_dp_param}
Given a value $\epsilon > 0$ and a threshold $\thresh<\bound$, mechanism $\tmech$ is $(\epsilon,\delta)$-differentially private with
\[
        \delta= 1-\frac{1}{4}\exp\left(-\epsilon(\bound-\thresh)\right)\,,
\]
    where $B$ is a bound on the histogram entries.
\end{theorem}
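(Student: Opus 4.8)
The plan is to invoke Lemma~\ref{lem:dp_cond} with $\cM=\tmech$, so that it suffices to identify the ``bad'' set $S^\complement$ and show $\pr{\tmech(\ds)\in S^\complement}\le\delta$ for the claimed $\delta$. First I would exploit the adjacency structure: changing one record relocates it from one attribute combination to another, so the histograms of $\ds$ and $\ds'$ differ in exactly two coordinates, say $i$ and $j$, with one count increased by $1$ and the other decreased by $1$, while all remaining coordinates coincide. Since the noise variables $\eta_\ell$ are independent, the likelihood ratio $\pr{\tmech(\ds)=\bm{o}}/\pr{\tmech(\ds')=\bm{o}}$ factorizes over coordinates, and only the two differing coordinates can contribute a factor other than $1$.

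Next I would characterize which outputs are dangerous. In each coordinate the output is either the suppressed value $\nicefrac{\thresh}{2}$ or the exact count. If in a differing coordinate the mechanism does \emph{not} suppress, it releases the true count $x_i(\ds)$, a value that is unattainable under $\ds'$ (whose count there differs by one and is not $\nicefrac{\thresh}{2}$); the matching factor then has a zero denominator and the whole output lands in $S^\complement$. Conversely, if both differing coordinates are suppressed, the output equals $\nicefrac{\thresh}{2}$ on coordinates $i,j$ and the common true counts elsewhere, so the ratio reduces to a product of two suppression-probability ratios. Each such factor is the ratio of a threshold test $x_\ell+\eta_\ell\ge\thresh$ of sensitivity $1$ with Laplace scale $\nicefrac{2}{\eps}$, hence bounded by $\exp(\nicefrac{\eps}{2})$; by composition over the two coordinates the product is at most $\exp(\eps)$, placing these outputs in $S$. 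Thus $S^\complement$ is precisely the event that at least one differing coordinate is left unsuppressed.

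It then remains to bound $\pr{\tmech(\ds)\in S^\complement}=1-\pr{\text{coordinates } i,j \text{ are both suppressed}}$. By independence this equals $1-F(\thresh-x_i)F(\thresh-x_j)$, where $F$ is the CDF of $\lap{2/\eps}$, since coordinate $\ell$ is suppressed exactly when $\eta_\ell<\thresh-x_\ell$. As $F$ is increasing and every count is at most $\bound$, each factor obeys $F(\thresh-x_\ell)\ge F(\thresh-\bound)=\tfrac{1}{2}\exp\!\left(-\tfrac{\eps}{2}(\bound-\thresh)\right)$, using the negative-argument branch $F(t)=\tfrac12\exp(t/b)$ with $b=\nicefrac{2}{\eps}$ and $\thresh<\bound$. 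Multiplying the two bounds gives $F(\thresh-x_i)F(\thresh-x_j)\ge\tfrac{1}{4}\exp\!\left(-\eps(\bound-\thresh)\right)$, hence $\pr{\tmech(\ds)\in S^\complement}\le 1-\tfrac14\exp(-\eps(\bound-\thresh))$, matching the stated $\delta$.

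The main obstacle I anticipate is the bookkeeping in the second step: verifying that the ``both suppressed'' outputs genuinely remain inside $S$. This requires confirming that the threshold-test ratio is bounded by $\exp(\nicefrac{\eps}{2})$ uniformly over all count values, not merely in the large-count regime where the Laplace tail is a clean exponential, and handling the degenerate case where a differing coordinate has true count exactly $\nicefrac{\thresh}{2}$, in which suppression leaks nothing and that coordinate drops out of $S^\complement$. The rest is routine once the worst case---both differing counts pushed to the bound $\bound$---has been identified.
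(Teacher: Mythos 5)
Your proposal is correct and follows essentially the same route as the paper's proof: restrict attention to outputs where both differing histogram coordinates are suppressed, bound each suppression-probability ratio by $\exp(\nicefrac{\epsilon}{2})$ via the unit sensitivity of the counts under Laplace noise of scale $\nicefrac{2}{\epsilon}$, and lower-bound the probability of joint suppression in the worst case $x_\ell=\bound$ by $\tfrac{1}{4}\exp(-\epsilon(\bound-\thresh))$. The only cosmetic difference is that the paper merely establishes the inclusion of the ``both suppressed'' set into $S$ (which suffices), whereas you additionally argue the reverse inclusion; this extra claim is not needed for the bound.
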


\paragraph{Error Analysis.}
Having examined privacy, the paper shows how close the histograms $\tilde{\bm{x}}$ returned by $\tmech$
are to the original histogram $\bm{x}$. The
error analysis focuses on the statistical bias which, for each entry $i\in[n]$, 
can be expressed as
\begin{equation*}
    \resizebox{.99\linewidth}{!}{$
            \displaystyle
            \biasi{\tmech}{i} =\EE{}{\tmech(\ds)_i}-x_i
    =\left(\frac{\thresh}{2}-x_i\right)\cdot \pr{x_i+\eta_i<\thresh}.
        $}
\end{equation*}%

\noindent
Observe that the error merely takes place when the noisy count is
below the threshold $k$ and is quantified as the difference between
half of the threshold and the true count. Therefore, the following
theorem relates the errors associated with $\tmech$ with the 
probabilities of noisy counts being below the
threshold, and the differences between half of the threshold and the
counts of the original histogram.

\begin{theorem}\label{prop:sup_bias_bound}
The statistical bias of the DP cell suppression mechanism $\tmech$ can be bounded as follows,
    \begin{equation*}
        \lVert \bias{\tmech}\rVert_1\leq \left\lVert \nicefrac{\thresh}{2}\cdot \bm{1}_n - \bm{x}\right\rVert_2 \cdot 
           \left\Vert \bm{p}\right\rVert_2\,,
    \end{equation*}
    where $\bm{p}$ is a shorthand for the vector 
    \begin{equation}\label{eq:p_vector}
        \bm{p}\coloneqq\left[
        \begin{matrix}
         \pr{x_1+\eta_1<\thresh}&\dots&\pr{x_n+\eta_n<\thresh}
        \end{matrix}
        \right].
    \end{equation}
\end{theorem}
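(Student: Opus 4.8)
The plan is to reduce the claim to a single application of the Cauchy–Schwarz inequality, starting from the closed-form per-entry bias stated immediately before the theorem. Recall that for each $i\in[n]$ we have $\biasi{\tmech}{i}=\left(\nicefrac{\thresh}{2}-x_i\right)\cdot\pr{x_i+\eta_i<\thresh}$. First I would write out the $\ell_1$ norm of the bias vector as the sum over all $i$ of the absolute values of these entries, i.e. $\lVert\bias{\tmech}\rVert_1=\sum_{i=1}^n\left|\left(\nicefrac{\thresh}{2}-x_i\right)\cdot\pr{x_i+\eta_i<\thresh}\right|$.

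The key observation is that each factor $\pr{x_i+\eta_i<\thresh}$ is a probability and hence nonnegative, so it can be pulled outside the absolute value: $\left|\biasi{\tmech}{i}\right|=\left|\nicefrac{\thresh}{2}-x_i\right|\cdot\pr{x_i+\eta_i<\thresh}$. This exhibits $\lVert\bias{\tmech}\rVert_1$ as the standard inner product of the vector with entries $\left|\nicefrac{\thresh}{2}-x_i\right|$ with the probability vector $\bm{p}$ defined in Equation~\eqref{eq:p_vector}.

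Applying Cauchy–Schwarz to this inner product bounds it by the product of the two $\ell_2$ norms. Since entrywise absolute values leave an $\ell_2$ norm unchanged, the first factor is exactly $\left\lVert\nicefrac{\thresh}{2}\cdot\bm{1}_n-\bm{x}\right\rVert_2$ and the second is $\left\lVert\bm{p}\right\rVert_2$, which is precisely the asserted bound.

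As for difficulty, there is essentially no hard step: once the per-entry bias formula is in hand, the result is a textbook application of Cauchy–Schwarz. The only points requiring a little care are (i) justifying that the probabilities are nonnegative so that the absolute value factors cleanly, and (ii) noting that the $\ell_2$ norm is invariant under entrywise absolute value, so the signed difference vector $\nicefrac{\thresh}{2}\cdot\bm{1}_n-\bm{x}$ can appear directly in the bound without absolute signs.
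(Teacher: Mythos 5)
Your proposal is correct and follows exactly the same route as the paper's proof: both express $\lVert\bias{\tmech}\rVert_1$ as the sum of absolute per-entry biases and apply the Cauchy--Schwarz inequality in one step. Your additional remarks on the nonnegativity of the probabilities and the invariance of the $\ell_2$ norm under entrywise absolute value simply make explicit details the paper leaves implicit.
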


\noindent

\subsection{Differentially Private Swapping}\label{subsec:dpswap}
Despite its randomized nature, the swapping mechanism fails to meet
the requirements of differential privacy. To illustrate its failure,
let us take a look at an instance of two neighboring datasets $\ds$
and $\ds'$. Suppose that $\ds'$ has a record, say $\bm{a}_1$, which
does not match any record in $\ds$ for any attribute $A\in Q$. No
matter how swapping is performed, it cannot generate a record
$\bm{a}_1$ from the input dataset $\ds$.

To obtain a DP counterpart to swapping, {\em it is thus critical to
reason about the universe of quasi-identifiers, not simply the set of
quasi-identifiers present in the database.} Let $\qiuniverse
= \{\bm{q}_1,\dots,\bm{q}_{\qiattrs}\}$ denote the data universe of
quasi-identifiers. Instead of swapping quasi-identifiers, the
mechanism will randomly choose some quasi-identifiers from
$\qiuniverse$.  
The mechanism, referred to as DP swapping and denoted
by $\cM_{\text{SW}}$, works as follows: for every $\bm{a}_i \in \cX$,
consider the pair $(\bm{a}_i, x_i)$ denoting the tuple and its
associated count in the histogram $\bm{x}$. $\cM_{\text{SW}}$ defines
$\tilde{\bm{a}}_i[N] = \bm{a}_i[N]$ and
\begin{equation}
\label{eq:dp_swap_def}
\tilde{\bm{a}}_i[Q] =
\begin{cases}
  \bm{a}_i[Q] & \text{w.p. } \gamma = \frac{\exp(\epsilon)}{\exp(\epsilon)+n_Q-1}, \\
\text{Uniform}(\qiuniverse\setminus \bm{a}_i[Q]) & \text{w.p. } 1 - \gamma
\end{cases}
\end{equation}
where $\text{Uniform}(C)$ denotes the uniform probability over the
event space $C$.  The result of the step above may create multiple
entries $(\tilde{\bm{a}}_i, x_i)$ and $(\tilde{\bm{a}}_j, x_j)$ with
$\tilde{\bm{a}}_i = \tilde{\bm{a}}_j$. The procedure collapses all
such tuples by summing the various $x_i$ and $x_j$. The induced
sub-histogram is then extended to a histogram $\tilde{\bm{x}}$.  
Notice that $\cM_{\text{SW}}$ only modifies quasi-identifiers and produces
a private histogram $\tilde{\bm{x}}(\tilde{\ds})$, similarly to what 
done by the original swapping algorithm.

Figure~\ref{fig:error} (center) compares the $\ell_1$ distances
$\| \tilde{\bm{x}} - \bm{x}\|_1$ between the histograms generated by
$\prammech$ and its traditional counterpart for various amounts of
rows swapped (in \%) and parameters $\epsilon$. Once again, observe how close the
errors of the two mechanisms are.

\paragraph{Privacy analysis.}
Recall that differential privacy protects the disclosure of any individual user participating in the dataset. On the other hand, DP swapping operates at the level of a histogram count and it does so in way which impedes an analysis relying on pure $(\epsilon, 0)$-DP. The privacy analysis of $\cM_{\text{SW}}$ is reported in the following theorem.
\begin{theorem}\label{thm:new_pram_dp_param}
    For a given $\epsilon > 0$, the $\cM_{\text{SW}}$ algorithm is $\left(\epsilon,\delta\right)$-DP 
    with $\delta$ given by
    \begin{equation*}
        1-\frac{1-\gamma^2}{\qiattrs-1}-\left(\frac{1-\gamma}{\qiattrs-1}\right)^2,
    \end{equation*}
    with $\gamma$ defined in Equation \eqref{eq:dp_swap_def} and $\qiattrs=\vert\qiuniverse\vert$, 
\end{theorem}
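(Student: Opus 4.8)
The plan is to apply Lemma~\ref{lem:dp_cond}: it suffices to exhibit the set $S$ of outputs on which the likelihood ratio $\pr{\prammech(\ds)=\bm o}/\pr{\prammech(\ds')=\bm o}$ is at most $\exp(\epsilon)$, and then to show that the complementary mass $\pr{\prammech(\ds)\in S^\complement}$ is at most the claimed $\delta$. Since $\prammech$ acts independently on the universe tuples, perturbs only the quasi-identifier coordinate through a randomized response over $\qiuniverse$, and leaves the non-quasi-identifier coordinate untouched, the first step is to reduce to the essential neighboring instance: a single record whose quasi-identifier changes from some value $q_s$ to some value $q_t$, with its non-quasi-identifier part fixed. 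All other tuples contribute identically under the two runs, so I would condition on their draws and analyze only the source tuple (true value $q_s$) and the target tuple (true value $q_t$).

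For the likelihood ratio the key observation is that $\gamma$ is calibrated precisely so that the ``retain the true value'' branch and the ``flip to a given alternative'' branch differ by exactly $\exp(\epsilon)$: writing $p\coloneqq (1-\gamma)/(\qiattrs-1)$ for the probability of each individual flip, the definition of $\gamma$ gives $\gamma/p=\exp(\epsilon)$. Consequently a lone randomized response is already $\epsilon$-indistinguishable, and a positive $\delta$ can only come from the final \emph{collapse} step, in which tuples mapped to a common quasi-identifier have their counts summed. It is this summation that breaks the clean coordinatewise analysis: certain joint realizations of the source- and target-tuple draws produce a collapsed histogram that is simply unreachable under the neighbor $\ds'$ (a vanishing denominator, hence an unbounded ratio), and these are exactly the realizations that must be excluded from $S$.

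I would therefore take $S$ to be the outputs for which the two relevant draws $\tilde q_s\sim\mathrm{RR}(q_s)$ and $\tilde q_t\sim\mathrm{RR}(q_t)$ are \emph{compatible} after collapsing, i.e.\ land so that the resulting histogram is producible under both $\ds$ and $\ds'$ with ratio at most $\exp(\epsilon)$, and bound $\pr{\prammech(\ds)\in S^\complement}$ through its complement $\pr{\prammech(\ds)\in S}$. Computing this complement amounts to summing, over the product distribution of the two independent randomized responses, the probabilities of the favorable configurations; I expect that grouping the ``both retain / both agree'' contribution together with the residual ``matched-alternative'' contribution gives $\pr{\prammech(\ds)\in S} = \frac{1-\gamma^2}{\qiattrs-1}+\left(\frac{1-\gamma}{\qiattrs-1}\right)^2$, so that Lemma~\ref{lem:dp_cond} yields $\delta = 1-\frac{1-\gamma^2}{\qiattrs-1}-\left(\frac{1-\gamma}{\qiattrs-1}\right)^2$, as claimed.

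The main obstacle is the collapse step. In contrast to the Laplace and cell-suppression analyses, relocating entire counts and then summing colliding tuples creates outputs with one-sided support, which is precisely why $\prammech$ cannot be pure $(\epsilon,0)$-DP and why the slack $\delta$ is forced; the delicate part is to enumerate exactly which joint draw-configurations keep the ratio within $\exp(\epsilon)$ and to verify the bound in both directions of the differential-privacy inequality. A secondary subtlety is justifying the reduction to a single quasi-identifier flip: because the non-quasi-identifier coordinate is preserved verbatim, one must argue that this canonical neighbor is the binding case rather than one touching the non-quasi-identifiers, and that reasoning over the full universe $\qiuniverse$—not merely the values present in $\ds$—is what makes every output reachable in the first place.
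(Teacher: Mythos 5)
Your overall route coincides with the paper's: invoke Lemma~\ref{lem:dp_cond}, reduce to a canonical neighboring pair in which one record's quasi-identifier changes from $q_s$ to $q_t$ with the non-quasi-identifiers fixed, take the good set $S$ to be the outputs on which the two affected universe tuples collapse into the same quasi-identifier cell (so that the $\pm 1$ count transfer between them becomes invisible after the collapse), and read off $\delta$ as the probability of the complement. That is exactly the paper's choice $S=\{\tilde{\bm{x}} \mid \tilde{\bm{a}}_{n-1}[\qi]=\tilde{\bm{a}}_{n}[\qi]\}$, and your remark that $\gamma$ is calibrated so that $\gamma = e^{\epsilon}\,(1-\gamma)/(\qiattrs-1)$ is the same ingredient the paper uses to control the likelihood ratio on $S$.

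The gap is that the one step that actually produces the stated $\delta$ --- the evaluation of $\pr{\prammech(\ds)\in S}$ --- is asserted (``I expect that grouping \dots gives'') rather than carried out, and the grouping you name does not match the event. Since the two tuples start with \emph{distinct} quasi-identifier values, ``both retain'' contributes nothing to the coincidence event; the exhaustive case analysis is (i) one retains while the other flips to match, which happens in two ways, each with probability $\gamma\cdot\frac{1-\gamma}{\qiattrs-1}$, and (ii) both flip to a common third value, contributing $(\qiattrs-2)\bigl(\frac{1-\gamma}{\qiattrs-1}\bigr)^2$. Summing gives
\begin{equation*}
\pr{\prammech(\ds)\in S}=\frac{2\gamma(1-\gamma)}{\qiattrs-1}+(\qiattrs-2)\left(\frac{1-\gamma}{\qiattrs-1}\right)^{2}=\frac{1-\gamma^{2}}{\qiattrs-1}-\left(\frac{1-\gamma}{\qiattrs-1}\right)^{2},
\end{equation*}
so the squared term enters $\delta=1-\pr{\prammech(\ds)\in S}$ with the \emph{opposite} sign from the one you (and the theorem statement) report. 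Be aware that the paper's own proof wobbles at exactly this point---it writes a coefficient $\nqiattrs-2$ where $\qiattrs-2$ is meant, and its final simplification does not balance---so agreement with the displayed $\delta$ is not confirmation; you must perform the enumeration explicitly. A secondary omission: you never verify that the likelihood ratio is within $e^{\pm\epsilon}$ on $S$. The per-record calibration argument you give would establish pure $\epsilon$-DP and is not the relevant computation here, because $\prammech$ relocates entire counts $x_i$ of universe tuples; what must be checked is that, once the two affected tuples land in a common cell, their joint contribution $x_{n-1}+x_n$ is identical under $\ds$ and $\ds'$ and the probabilities of the corresponding landing configurations differ by at most a factor $e^{\epsilon}$.
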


\paragraph{Error analysis.\!\!}
Next we discuss how close the errors of the histograms 
returned by swapping and its DP counterparts are.

\begin{proposition}\label{prop:pram_bias_exp}
The bias associated with each element $i\in[n]$ of the DP swapping 
histogram can be expressed as
\begin{equation*}
    \biasi{\prammech}{i}=\frac{\sum_{j\in \cI_i} x_{j} - \qiattrs\cdot x_{i}}{\exp(\epsilon)+\qiattrs-1},
\end{equation*} 
with the index set $\cI_i$ collecting all the elements of the data universe $\cX=\left\{\bm{a}_j\mid j\in [n]\right\}$, which share the same non-quasi-identifiers
$\nqi$ with $\bm{a}_i$, i.e.,
\begin{equation}\label{eq:index_set}
    \cI_i \coloneqq\left\{j\in[n]~\middle\vert~ \bm{a}_j[\nqi]=\bm{a}_i[\nqi]\right\}.
\end{equation}
\end{proposition}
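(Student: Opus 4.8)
The plan is to compute $\EE{}{\prammech(\ds)_i}$ directly by linearity of expectation and then subtract $x_i$. The guiding observation is that the output count at the coordinate corresponding to $\bm{a}_i$ is an additive aggregate: each input tuple $(\bm{a}_j, x_j)$ contributes its whole mass $x_j$ to coordinate $i$ exactly when its randomly reassigned quasi-identifier equals $\bm{a}_i[\qi]$, while its non-quasi-identifier $\bm{a}_j[\nqi]$ is left untouched by $\prammech$. Hence I would write
\[
\prammech(\ds)_i=\sum_{j\in[n]} x_j\cdot \mathbbm{1}\!\left[\tilde{\bm{a}}_j=\bm{a}_i\right],
\]
and note that the collapse-and-sum step of the mechanism is precisely this additive pooling, so it introduces no nonlinearity and linearity of expectation applies verbatim.

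First I would argue that only tuples in $\cI_i$ can contribute. Since $\prammech$ preserves the non-quasi-identifiers ($\tilde{\bm{a}}_j[\nqi]=\bm{a}_j[\nqi]$), the event $\tilde{\bm{a}}_j=\bm{a}_i$ forces $\bm{a}_j[\nqi]=\bm{a}_i[\nqi]$, i.e. $j\in\cI_i$, and then reduces to $\tilde{\bm{a}}_j[\qi]=\bm{a}_i[\qi]$. Thus
\[
\EE{}{\prammech(\ds)_i}=\sum_{j\in\cI_i} x_j\cdot \pr{\tilde{\bm{a}}_j[\qi]=\bm{a}_i[\qi]}.
\]
Next I would evaluate the transition probabilities from Equation~\eqref{eq:dp_swap_def}, separating $j=i$ from $j\in\cI_i\setminus\{i\}$. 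For $j=i$, the reassignment branch draws uniformly from $\qiuniverse\setminus\bm{a}_i[\qi]$ and can never return $\bm{a}_i[\qi]$, so the event holds only through the ``keep'' branch, with probability $\gamma$. For $j\neq i$ in $\cI_i$, the two universe elements share the non-quasi-identifier but differ in the quasi-identifier, so $\bm{a}_i[\qi]\in\qiuniverse\setminus\bm{a}_j[\qi]$ and the event holds only through the reassignment branch, with probability $\tfrac{1-\gamma}{\qiattrs-1}$.

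Substituting $\gamma=\tfrac{\exp(\epsilon)}{\exp(\epsilon)+\qiattrs-1}$ gives $\tfrac{1-\gamma}{\qiattrs-1}=\tfrac{1}{\exp(\epsilon)+\qiattrs-1}$, so
\[
\EE{}{\prammech(\ds)_i}=\frac{\exp(\epsilon)\,x_i+\sum_{j\in\cI_i\setminus\{i\}} x_j}{\exp(\epsilon)+\qiattrs-1}.
\]
Finally, subtracting $x_i$ and writing $\sum_{j\in\cI_i\setminus\{i\}} x_j=\sum_{j\in\cI_i} x_j-x_i$, the $\exp(\epsilon)\,x_i$ term cancels against the matching term from $x_i\cdot(\exp(\epsilon)+\qiattrs-1)$, and the remaining numerator collapses to $\sum_{j\in\cI_i} x_j-\qiattrs\, x_i$, which is the claimed identity. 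I expect no genuine obstacle here, only care points in the bookkeeping: justifying the additivity of the collapse step, restricting the contributing index set to $\cI_i$ through non-quasi-identifier preservation, and correctly handling the asymmetry of the reassignment distribution (the draw excludes the source quasi-identifier, which is exactly why the $j=i$ term uses $\gamma$ while the $j\neq i$ terms use $\tfrac{1-\gamma}{\qiattrs-1}$). The closing algebraic cancellation is then routine.
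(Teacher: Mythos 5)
Your proof is correct and follows essentially the same route as the paper's: both expand $\EE{}{\prammech(\ds)_i}$ by linearity over the contributions $x_j\cdot\mathbbm{1}\{\tilde{\bm{a}}_j=\bm{a}_i\}$, split the index set into $j=i$ (probability $\gamma$), $j\in\cI_i\setminus\{i\}$ (probability $\tfrac{1-\gamma}{\qiattrs-1}$), and $j\notin\cI_i$ (probability $0$), then substitute $\gamma$ and simplify. Your added remarks on why the collapse step is additive and why the reassignment draw excludes the source quasi-identifier are correct clarifications of points the paper leaves implicit.
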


\begin{theorem}
The statistical bias of the DP swapping mechanism $\prammech$ can be expressed as follows,
    \begin{equation*}
        \norm{\bias{\prammech}}_1=\frac{\qiattrs}{\exp(\epsilon)+\qiattrs-1}\sum_{i=1}^{n}\mad{\bm{x}_{\cI_i}},
    \end{equation*}
    where $\cI_i$ is an index set defined in Equation \eqref{eq:index_set} and
    $\bm{x}_{\cI_i}$ is the reduced histogram consisting of the count $x_j$ for any $j\in\cI_i$.
   Additionally,
    $\mad{\bm{x}_{\cI_i}}$ is the mean absolute deviation of the histogram $\bm{x}_{\cI_i}$, i.e.,
    \begin{equation*}
        \mad{\bm{x}_{\cI_i}}\coloneqq\frac{1}{\qiattrs}\sum_{j\in \cI_i} \left\vert x_{j} - \frac{\sum_{l\in\cI_i} x_{l}}{\qiattrs}\right\vert.
    \end{equation*}
\end{theorem}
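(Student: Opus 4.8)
The plan is to derive the closed form directly from the per-entry bias supplied by Proposition~\ref{prop:pram_bias_exp}, since the $\ell_1$ norm is just $\norm{\bias{\prammech}}_1 = \sum_{i=1}^n \lvert\biasi{\prammech}{i}\rvert$. First I would factor the class size $\qiattrs$ out of the numerator of the bias expression. Writing $\mu_i \coloneqq \frac{1}{\qiattrs}\sum_{j\in\cI_i} x_j$ for the mean of the reduced histogram $\bm{x}_{\cI_i}$, Proposition~\ref{prop:pram_bias_exp} gives
\begin{equation*}
\lvert\biasi{\prammech}{i}\rvert = \frac{\lvert\sum_{j\in\cI_i} x_j - \qiattrs\, x_i\rvert}{\exp(\epsilon)+\qiattrs-1} = \frac{\qiattrs}{\exp(\epsilon)+\qiattrs-1}\,\lvert x_i - \mu_i\rvert,
\end{equation*}
so that $\norm{\bias{\prammech}}_1 = \frac{\qiattrs}{\exp(\epsilon)+\qiattrs-1}\sum_{i=1}^n \lvert x_i - \mu_i\rvert$. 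It then remains to prove the identity $\sum_{i=1}^n \lvert x_i - \mu_i\rvert = \sum_{i=1}^n \mad{\bm{x}_{\cI_i}}$.

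The key structural observation is that the index sets $\{\cI_i\}_{i\in[n]}$ defined in Equation~\eqref{eq:index_set} form a partition of $[n]$ into equivalence classes: two indices lie in the same class exactly when their records agree on the non-quasi-identifiers $\nqi$, and since $\cX = \qiuniverse\times\nqiuniverse$, every such class contains exactly one record per value of $\qiuniverse$, i.e.\ $\lvert\cI_i\rvert = \qiattrs$. Moreover, both $\mu_i$ and $\mad{\bm{x}_{\cI_i}}$ depend only on the class of $i$, not on $i$ itself. I would therefore regroup each of the two sums over $i\in[n]$ as an outer sum over classes and an inner sum over the $\qiattrs$ indices within a class. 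For a fixed class $\cI_c$, the inner sum of $\lvert x_i - \mu_c\rvert$ equals $\qiattrs\cdot\mad{\bm{x}_{\cI_c}}$ by the definition of the mean absolute deviation, while the inner sum of the constant $\mad{\bm{x}_{\cI_c}}$ is also $\qiattrs\cdot\mad{\bm{x}_{\cI_c}}$ because there are $\qiattrs$ identical terms. Hence the two sums agree class by class, which establishes the identity and completes the proof.

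The argument is essentially bookkeeping, so the only point requiring care---and the step I would flag as the main (if minor) obstacle---is the normalization: the factor $\frac{1}{\qiattrs}$ baked into the definition of $\mad{\cdot}$ must cancel exactly against the class size $\lvert\cI_i\rvert = \qiattrs$. This cancellation is precisely what lets $\sum_i \lvert x_i-\mu_i\rvert$ (a sum of single deviations) coincide with $\sum_i \mad{\bm{x}_{\cI_i}}$ (a sum of averaged deviations), and it hinges on the universe being the full product $\qiuniverse\times\nqiuniverse$ so that each class has exactly $\qiattrs$ elements. I would state this partition and size fact explicitly before collapsing the sums; the rest of the derivation follows by substitution and regrouping.
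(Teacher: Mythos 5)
Your proof is correct, and it is the natural (surely intended) argument: the paper itself gives no proof of this theorem in the appendix, presenting it as a direct consequence of Proposition~\ref{prop:pram_bias_exp}, which is exactly how you derive it. You are also right to flag the one point that makes the bookkeeping work, namely that each equivalence class $\cI_i$ has exactly $\qiattrs$ elements because $\cX$ is a full product universe; this is the same fact the paper implicitly uses when it normalizes $\mad{\bm{x}_{\cI_i}}$ by $\qiattrs$ rather than by $\lvert\cI_i\rvert$, so your class-by-class cancellation of the $\nicefrac{1}{\qiattrs}$ factor against the class size is exactly the right justification.
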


\subsection{Differentially Private $k$-anonymity}

Like cell suppression, $k$-anonymity is a deterministic algorithm,
which cannot produce outputs satisfying DP.  In recent years, however, there
were several attempts to integrate $k$-anonymity and differential
privacy. In particular, \citet{li2011provably} proposed a mechanism
that applies the generalization histogram and cell suppression with
parameter $k$ to a subsampled version of the original dataset. This
paper proposes a generalization of this mechanism.

\begin{figure*}[!t]
\centering
\includegraphics[width=0.3\textwidth]{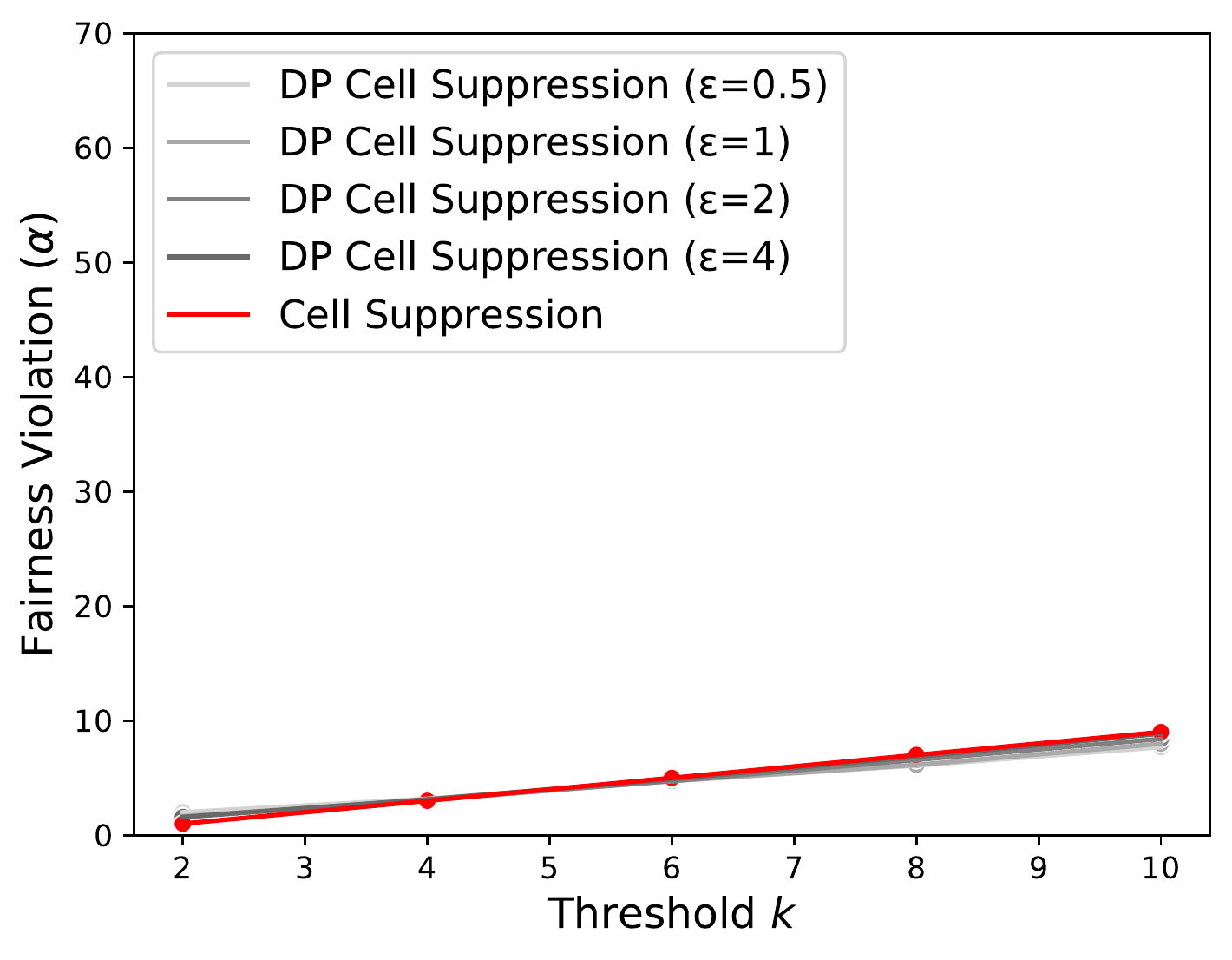}
\includegraphics[width=0.343\textwidth]{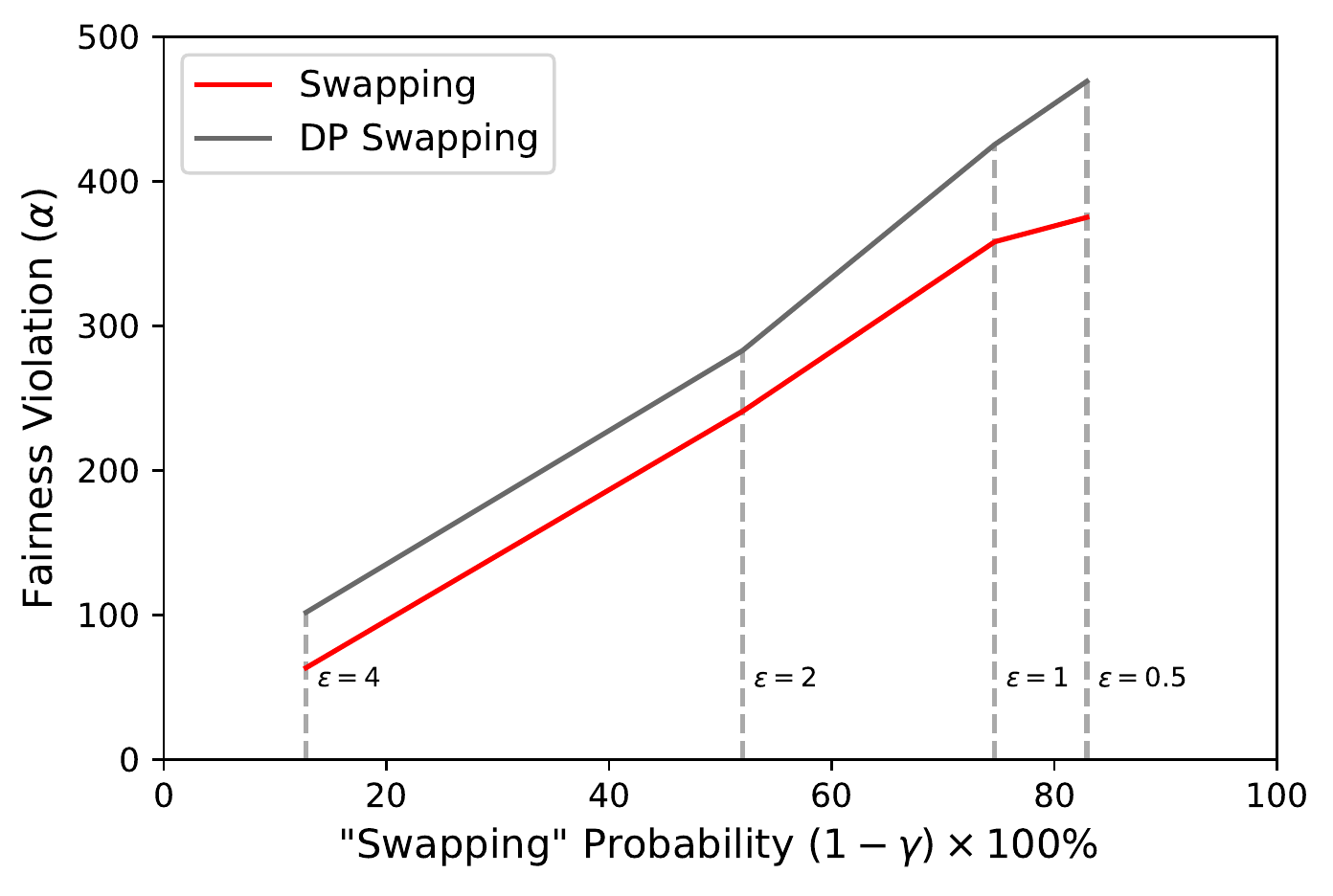}
\includegraphics[width=0.3\textwidth]{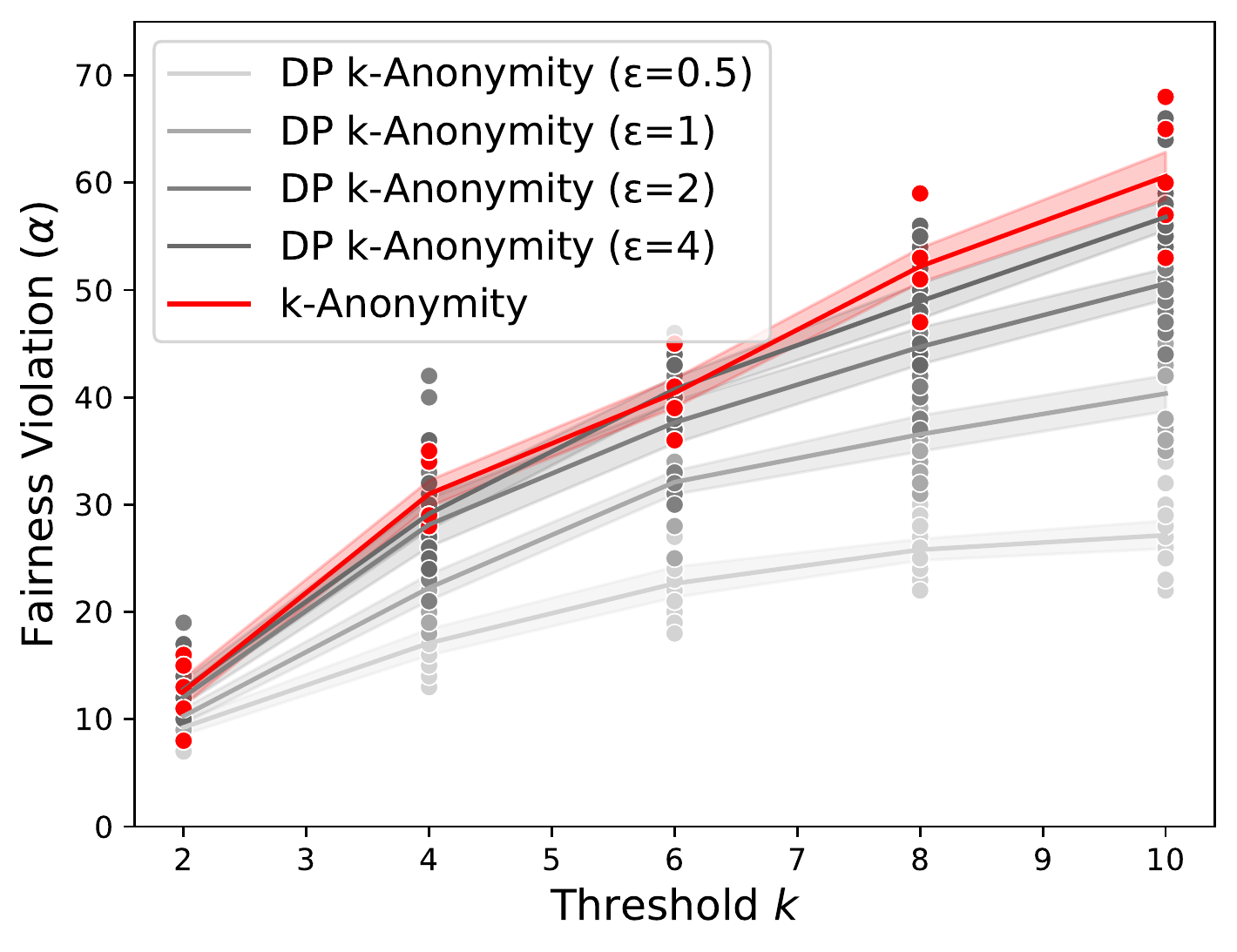}
\caption{MA ACS dataset: Fairness values $\alpha$ for cell suppression (left), 
swapping (center) and $k$-anonymity (right) and their differentially 
private counterparts (average of 200 repetitions).}
\label{fig:fairness}
\end{figure*}

\paragraph{DP $k$-anonymity algorithm.}
The paper proposes a simple modification to the $k$-anonymity 
algorithm presented in the previous section. The mechanism, 
called DP-$k$-anonymity and denoted by $\cM_{\text{KA}}$, operates in two steps:
\begin{enumerate}[leftmargin=*, parsep=2pt, itemsep=0pt, topsep=2pt]
    \item Produce a subsampled version $D_\beta$ of the  dataset $D$ in which each row is retained with probability $\beta \!\in\! (0,1)$.
    \item Apply the classical $k$-anonymity algorithm on $\ds_\beta$.
\end{enumerate}

\noindent
Contrary to \citep{li2011provably}, $\cM_{\text{KA}}$ makes it possible
to use a generalization hierarchy for merging cells, like in its
deterministic counterpart.  Figure~\ref{fig:error} (right) reports the
empirical errors of $\cM_{\text{KA}}$ for several values $k$ (x-axis);
The values of $\beta$ are implied by the choice of the privacy
parameter $\epsilon$ (see Theorem~\ref{thm:beta_dp_param}).  The
figure reports the $\ell_1$ distances between the histograms (in the
original space $\cX$) reconstructed from the generalized DP
k-anonymized and the original histogram $\bm{x}(\ds)$ as well as those
derived via its deterministic counterpart. A description of
the reconstruction step adopted is provided in Appendix~\ref{app:sec:DA_alg}.
Once again, notice that the errors incurred by the DP and
deterministic $k$-anonymity counterparts are very close to each other:
in fact, the DP versions improve upon the deterministic mechanism for
small values of $\epsilon$ as an artifact of the sampling procedure 
which considers fewer records.

\paragraph{Privacy Analysis.}
The next result generalizes \citep{li2011provably} and reports the
privacy guarantees provided by $\cM_{\text{KA}}$.
\begin{theorem}\label{thm:beta_dp_param}
For a given $k\!>\!0$ and sampling probability $\beta \!\in\! (0,1)$, $\cM_{\text{KA}}$ satisfies
$(\epsilon,\delta)$-DP for $\delta \!=\! d(k, \beta, \epsilon)$, where the
function $d$ is defined as
\begin{equation}\label{eq:beta_mech_delta_def}
    d(k, \beta, \epsilon)= 1-\min_{w\in [B]}~
       \left(\sum_{j=0}^{\nu}f(j;w,\beta)\right)^2\,,
\end{equation}
$\nu$ is a shorthand for $\lfloor(1-\exp(-\epsilon))w\rfloor$
and $f$ represents the probability mass function of the binomial
random variable,
\begin{equation}\label{eq:binom_pmf}
f(j;w,\beta)=\binom{w}{j}\beta^w(1-\beta)^{w-j}\,,\quad \forall~j\in
[w], \end{equation}
\end{theorem}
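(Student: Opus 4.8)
The plan is to derive the stated $\delta$ by instantiating the sufficient condition of Lemma~\ref{lem:dp_cond}. Fix neighboring datasets $\ds \sim \ds'$ and let $S^\complement$ be the set of outputs whose likelihood ratio under $\ds$ and $\ds'$ exceeds $\exp(\epsilon)$; it then suffices to show $\pr{\cM_{\text{KA}}(\ds)\in S^\complement}\leq d(k,\beta,\epsilon)$. First I would describe the output precisely: $\cM_{\text{KA}}$ subsamples each record independently with probability $\beta$ and then releases, for every generalization cell, either its \emph{exact} subsampled count (when that count is at least $k$) or a suppression symbol (when it is below $k$). Because subsampling is performed independently per record, the subsampled count of a cell with true count $w$ is distributed as a binomial $\mathrm{Bin}(w,\beta)$, and distinct cells are mutually independent. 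Changing a single record alters the true count of at most two generalization cells (the cell the record leaves and the cell it enters) by $\pm 1$, while leaving the distribution of every other cell identical. Consequently the overall likelihood ratio $\pr{\cM_{\text{KA}}(\ds)=\bm{o}}/\pr{\cM_{\text{KA}}(\ds')=\bm{o}}$ factorizes into the contributions of these (at most) two affected cells.

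Next I would carry out the per-cell likelihood-ratio computation. For a cell whose true count differs by one between $\ds$ and $\ds'$, each admissible output is either suppression or an exact count $v\geq k$. Using the binomial pmf $f$ from Equation~\eqref{eq:binom_pmf}, the ratio of ``release $v$'' probabilities between a cell of count $w$ and one of count $w-1$ equals $\tfrac{w}{w-v}(1-\beta)$, whereas the ``suppress'' contribution is monotone in the count and can be bounded by $(1-\beta)^{-1}$ on the side where the cell shrinks and by $1$ on the side where it grows. The crucial observation is that the two factors of $(1-\beta)$ coming from the two oppositely-oriented cells cancel, so the whole ratio collapses to a product of terms of the form $\tfrac{w}{w-v}$; and $\tfrac{w}{w-v}\leq \exp(\epsilon)$ holds exactly when $v\leq\nu=\lfloor(1-\exp(-\epsilon))w\rfloor$. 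This motivates defining the good event $G$ in which both affected cells retain at most $\nu$ records in the subsample; on $G$ the output lies in $S$.

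Finally I would bound the failure probability. By the independence of the two cells' subsampling, $\pr{G}$ factors as a product of two probabilities of the form $\pr{\mathrm{Bin}(w,\beta)\leq\nu}=\sum_{j=0}^{\nu}f(j;w,\beta)$, one for each affected cell. Worst-casing each factor over the admissible cell sizes $w\in[B]$ gives $\pr{G}\geq\big(\min_{w\in[B]}\sum_{j=0}^{\nu}f(j;w,\beta)\big)^2$, whence $\pr{\cM_{\text{KA}}(\ds)\in S^\complement}\leq \pr{G^\complement}\leq 1-\big(\min_{w\in[B]}\sum_{j=0}^{\nu}f(j;w,\beta)\big)^2 = d(k,\beta,\epsilon)$, and Lemma~\ref{lem:dp_cond} yields the claim. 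I expect the delicate part to be the two-sided, two-cell bookkeeping of the second step: one must track suppression versus exact-count outputs simultaneously in both privacy directions and verify that the $(1-\beta)$ factors cancel so that the combined ratio is governed solely by the threshold $\nu$ --- it is this product of two binomial tail events, together with the uniform worst case over $w\in[B]$, that produces the squared term in $d(k,\beta,\epsilon)$.
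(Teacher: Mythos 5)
Your overall architecture (Lemma~\ref{lem:dp_cond}, a ``good event'' in which both affected cells keep at most $\nu$ sampled records, a product of two binomial tails worst-cased over $w\in[B]$) matches the paper's, but the place where you run the likelihood-ratio analysis --- the final $k$-anonymized output --- creates a genuine gap that the paper deliberately avoids. The paper proves the $(\epsilon,\delta)$ bound for the \emph{subsampling step} $\cM_{\beta}$ alone, i.e., for the mechanism that releases the subsampled dataset $\ds_\beta$ itself, and then invokes post-processing immunity: the $k$-anonymization (generalization plus suppression) only ever sees $\ds_\beta$, so it cannot increase the privacy loss. At the level of $\ds_\beta$ both affected cells always ``release'' their exact retained counts, the record change is a paired $+1/-1$ so the total count is unchanged, the $(1-\beta)$ powers cancel identically, and the ratio collapses to $\frac{x_n}{x_n-v_n}\cdot\frac{x_{n-1}+1-v_{n-1}}{x_{n-1}+1}$, which lies in $[e^{-\epsilon},e^{\epsilon}]$ exactly on the event you describe.

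Your version breaks in the mixed case where one affected cell is suppressed and the other releases an exact count. The suppression probability $\pr{\mathrm{Bin}(w,\beta)<k}$ aggregates over many subsample values and does not carry a clean $(1-\beta)^{\pm 1}$ factor; the tight two-sided bound on its ratio between counts $w$ and $w-1$ is $[1-\beta,1]$ (and $[1,\tfrac{1}{1-\beta}]$ for the growing cell). Pairing a suppressed shrinking cell (ratio at most $1$) with a growing cell that releases $v$ (ratio $\frac{w+1-v}{w+1}\cdot\frac{1}{1-\beta}$) yields an overall ratio that can be as large as $\frac{1}{1-\beta}$, which exceeds $e^{\epsilon}$ whenever $\beta>1-e^{-\epsilon}$ --- yet the theorem is stated for every $\beta\in(0,1)$. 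So the claimed cancellation does not hold on all of your good event $G$, and the inclusion of $G$ into the set $S$ of Lemma~\ref{lem:dp_cond} fails. A second, related issue is that the actual mechanism merges cells via a data-dependent generalization hierarchy, so the output is not simply ``per-cell exact count or suppression symbol''; analyzing the final output would require handling that dependence as well. The fix is exactly the paper's move: establish the bound for $\cM_{\beta}$ and treat everything downstream as post-processing.
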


\paragraph{Error Analysis.\!\!}
The goal of the analysis is again to show that the DP version of
$k$-anonymity is close, in errors, to its deterministic counterpart.
Recall that, contrary to cell suppression and swapping, $k$-anonymity
does not return a histogram in the same space of the
attribute universe, due to its application of the generalization
hierarchy.  As a consequence, the output of $k$-anonymity consists of
the counts of individual records with respect to the data universe
$\cX_H$ of the generalization hierarchy. Because of this critical
difference, the rest of this section analyzes the mechanism errors by
bounding whether a count of the histogram $\bm{x}(\ds_\beta)$
(produced in step 1) is merged by the $k$-anonymization step (step~2).

\noindent
Let $\bmech$ denote the binary vector
\begin{equation*}
    \bmech\left(\ds_\beta\right)\coloneqq
    \left[
    \begin{matrix}
         \bm{1}\left\{x_1\left(\ds_\beta\right)<k\right\}&\dots&\bm{1}\left\{x_n\left(\ds_\beta\right)<k\right\}
    \end{matrix}
   \right].
\end{equation*}
The error analysis focuses on statistical bias regarding whether a
count would be merged by the generalization hierarchy, i.e., the
difference between $\bmech\left(\ds_\beta\right)$ and $\bmech(\ds)$:
\begin{align*}
     \bias{\bmech}
    &=\EE{}{\bmech\left(\ds_\beta\right)}-\bmech(\ds)\\
    &=\left[
    \begin{matrix}
        \mathbb{E}\left[\bm{1}\left\{x_1\left(\ds_\beta\right)<k\right\}\right]
        -\bm{1}\left\{x_1<k\right\}\\
        \vdots\\
         \mathbb{E}\left[\bm{1}\left\{x_n\left(\ds_\beta\right)<k\right\}\right]-\bm{1}\left\{x_n<k\right\}
    \end{matrix}
    \right]^\top.
\end{align*}

\noindent
Next, we establish the equivalence between the count
$x_i(\ds_\beta)$ and a binomial random variable $B(x_i, \beta)$ and
presents the mathematical expressions characterizing the bias.
\begin{theorem}
The statistical bias associated of the DP $k$-anonymity mechanism $\bmech$ 
ca be expressed as, each $i\in[n]$,
    \begin{align*}
        &\biasi{\bmech}{i}=\EE{}{\bm{1}\left\{x_i\left(\ds_{\beta}\right)<k\right\}}-\bm{1}\left\{x_i <k\right\}\\
        =~&\begin{cases}
            \sum_{j=x_i-k+1}^{x_i} \binom{x_i}{j} \beta^{x_i-j}(1-\beta)^j\,, &x_i\geq k,\\
            0, &\mathrm{otherwise.}
        \end{cases}
    \end{align*}
\end{theorem}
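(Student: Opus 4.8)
The plan is to leverage the distributional fact flagged immediately before the theorem, namely that the subsampled count $x_i\left(\ds_\beta\right)$ is a binomial random variable $B(x_i,\beta)$. First I would justify this: each of the $x_i$ records carrying the attribute combination $\bm{a}_i$ is independently retained in $\ds_\beta$ with probability $\beta$, so the surviving count is a sum of $x_i$ i.i.d.\ $\mathrm{Bernoulli}(\beta)$ variables, hence $x_i\left(\ds_\beta\right)\sim B(x_i,\beta)$. It is cleaner to track the number $j$ of the $x_i$ records that are \emph{dropped}: then $j\sim B(x_i,1-\beta)$ and $x_i\left(\ds_\beta\right)=x_i-j$, with $\pr{j \text{ dropped}}=\binom{x_i}{j}\beta^{x_i-j}(1-\beta)^j$.

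Since $\bm{1}\left\{x_i\left(\ds_\beta\right)<k\right\}$ is a $0/1$ indicator, its expectation equals $\pr{x_i\left(\ds_\beta\right)<k}$, so the bias reduces to $\pr{x_i\left(\ds_\beta\right)<k}-\bm{1}\left\{x_i<k\right\}$, and I would split on the two cases of the claimed formula. For $x_i\geq k$, the second term vanishes and the bias equals $\pr{x_i\left(\ds_\beta\right)<k}$. Rewriting the event through the dropped count, $\left\{x_i\left(\ds_\beta\right)<k\right\}=\left\{x_i-j<k\right\}=\left\{j\geq x_i-k+1\right\}$, and summing the binomial mass of $j$ over $j=x_i-k+1,\dots,x_i$ gives exactly $\sum_{j=x_i-k+1}^{x_i}\binom{x_i}{j}\beta^{x_i-j}(1-\beta)^j$, matching the stated expression. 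For $x_i<k$, the second term is $1$, and the key observation is that the subsampled count never exceeds the original count, i.e.\ $x_i\left(\ds_\beta\right)\leq x_i<k$ with probability one; hence $\bm{1}\left\{x_i\left(\ds_\beta\right)<k\right\}=1$ almost surely, its expectation is $1$, and the bias is $1-1=0$.

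The argument is essentially bookkeeping, so I do not expect a genuine obstacle. The one step I would handle most carefully is the reindexing in the first case: making sure the summation limits and the exponents of $\beta$ and $1-\beta$ are consistently matched to whether $j$ counts retained or dropped records (equivalently, checking against the substitution $m=x_i-j$ that turns the sum into the standard binomial CDF $\sum_{m=0}^{k-1}\binom{x_i}{m}\beta^m(1-\beta)^{x_i-m}$). A brief sentence noting that $x_i\left(\ds_\beta\right)\leq x_i$ deterministically is what makes the $x_i<k$ case collapse to zero, and I would state it explicitly rather than leave it implicit.
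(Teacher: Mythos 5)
Your argument is correct and follows exactly the route the paper intends (the paper states the equivalence $x_i(\ds_\beta)\sim B(x_i,\beta)$ just before the theorem but omits a written proof of this particular result from the appendix): the reindexing by dropped records, the summation limits, and the observation that $x_i(\ds_\beta)\leq x_i$ forces the $x_i<k$ case to vanish are all handled properly. No gaps.
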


\section{Fairness analysis }
\label{sec:fairness}

The second main contribution of this paper is an analysis of the
fairness of various differentially private DA algorithms, compared to
traditional differential privacy. The definition of fairness used in
this paper (Definition~\ref{def:a-fair}) is the maximum difference in
biases in the privacy-preserving histograms.  It should be noted that
the bias of the DP $k$-anonymity algorithm, which utilizes a
generalization histogram, is examined in a different context than the
one of the other mechanisms. Thus, the paper specifically focuses on
an analytical comparison of the fairness of the DP $k$-anonymity
algorithm.

The next result quantifies the unfairness of the DP cell suppression and swapping, along with the Laplace mechanism.

\begin{theorem}[$\alpha$-fairness for $\cM_{\text{CS}}$]\label{prop:sup_fair_bound}
  The DP cell suppression algorithm is $\acs$-fair with $\acs$ given by
    \begin{equation*}
        \left(x_n-x_1\right)p_1+\max\left\{\left\vert\frac{\thresh}{2}-x_1\right\vert,\left\vert\frac{\thresh}{2}-x_n\right\vert\right\}(p_1-p_n),
    \end{equation*}
    where $p_1$ and $p_n$ are the first and last entries of $\bm{p}$ defined in Equation \eqref{eq:p_vector} respectively.
\end{theorem}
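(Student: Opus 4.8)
The plan is to work directly from the closed-form bias expression stated just above the theorem, namely $\biasi{\tmech}{i} = \left(\frac{\thresh}{2}-x_i\right) p_i$ with $p_i = \pr{x_i+\eta_i<\thresh}$, and to bound the fairness quantity $\max_{i} \biasi{\tmech}{i} - \min_{i}\biasi{\tmech}{i}$ by controlling the difference $\biasi{\tmech}{i}-\biasi{\tmech}{j}$ for an arbitrary ordered pair $(i,j)$ uniformly in $i$ and $j$.

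First I would establish the monotonicity of the vector $\bm{p}$. Since every $\eta_i$ is drawn from the same distribution $\lap{2/\epsilon}$, we have $p_i = F(\thresh - x_i)$, where $F$ is the (increasing) CDF of that Laplace distribution. Because the histogram is sorted so that $x_1 \le \dots \le x_n$, the arguments $\thresh - x_i$ are non-increasing in $i$, hence $p_1 \ge p_2 \ge \dots \ge p_n$. This justifies identifying $p_1$ and $p_n$ as the maximum and minimum entries of $\bm{p}$, and it is the structural fact that pins the bound down at the endpoints.

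The key algebraic step is the decomposition
\[
\biasi{\tmech}{i}-\biasi{\tmech}{j} = (x_j - x_i)\,p_i + \left(\tfrac{\thresh}{2}-x_j\right)(p_i - p_j),
\]
obtained by adding and subtracting $\left(\frac{\thresh}{2}-x_j\right)p_i$. I would then bound the two terms separately. For the first term, using $0 \le p_i \le p_1$ together with $x_j - x_i \le x_n - x_1$ (and noting the term is negative, hence harmless, when $x_j < x_i$), we get $(x_j-x_i)p_i \le (x_n - x_1)p_1$. For the second term, monotonicity of $\frac{\thresh}{2}-x_j$ in $j$ gives $\left\lvert\frac{\thresh}{2}-x_j\right\rvert \le \max\{\lvert\frac{\thresh}{2}-x_1\rvert, \lvert\frac{\thresh}{2}-x_n\rvert\}$, while monotonicity of $\bm{p}$ gives $\lvert p_i - p_j\rvert \le p_1 - p_n$; multiplying these yields exactly the second summand of $\acs$. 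Adding the two estimates reproduces the claimed value of $\acs$ for every pair $(i,j)$, and since this bound is independent of $i$ and $j$ it caps $\max_i\biasi{\tmech}{i}-\min_j\biasi{\tmech}{j}$, establishing $\acs$-fairness.

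The hard part is really just recognizing the right decomposition and the monotonicity of $\bm{p}$; once those are in place the term-by-term bounds are immediate. The one subtlety to check carefully is the sign handling in the first term: the estimate must hold for all ordered pairs, including those with $x_j < x_i$, which is why I phrase the first-term bound as an inequality valid in both cases rather than assuming $x_j \ge x_i$.
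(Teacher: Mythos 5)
Your proposal is correct and follows essentially the same route as the paper's proof: the identical add-and-subtract decomposition $\biasi{\tmech}{i}-\biasi{\tmech}{j} = (x_j - x_i)p_i + (\nicefrac{\thresh}{2}-x_j)(p_i - p_j)$, the monotonicity $p_1\geq\dots\geq p_n$ inherited from the sorted histogram, and term-by-term bounds landing on the two summands of $\acs$. Your explicit sign check for pairs with $x_j < x_i$ is a minor presentational difference from the paper's use of absolute values over pairs $i<j$, but the argument is the same.
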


\begin{theorem}[$\alpha$-fairness for $\cM_{\text{SW}}$]
\label{prop:pram_fair_bound}
    The DP swapping algorithm $\cM_{\text{SW}}$ is
     $\asw$-fair with $\asw$ given by
    \begin{equation*}
       \frac{2\qiattrs\norm{\bm{x}}_{\rightleftharpoons}}{\exp(\epsilon)+\qiattrs-1}=
       \frac{2\qiattrs}{\exp(\epsilon)+\qiattrs-1}(x_n-x_1).
    \end{equation*}
\end{theorem}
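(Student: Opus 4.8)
The plan is to start from the closed-form per-entry bias furnished by Proposition~\ref{prop:pram_bias_exp}, namely $\biasi{\prammech}{i} = \frac{1}{c}\left(\sum_{j\in\cI_i} x_j - \qiattrs\, x_i\right)$ with $c \coloneqq \exp(\eps)+\qiattrs-1 > 0$, and to control the range of these values directly. The first fact I would record is a structural observation about the index sets: since the universe factorizes as $\cX = \qiuniverse\times\nqiuniverse$, fixing the non-quasi-identifier block $\bm{a}_i[\nqi]$ and letting the quasi-identifier block range over $\qiuniverse$ produces exactly $\qiattrs = \lvert\qiuniverse\rvert$ records, so $\lvert\cI_i\rvert = \qiattrs$ for every $i\in[n]$. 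This lets me write $\qiattrs\, x_i = \sum_{j\in\cI_i} x_i$ and hence recast the bias as a sum of deviations, $c\cdot\biasi{\prammech}{i} = \sum_{j\in\cI_i}(x_j - x_i)$, which is the cleanest form for the estimates that follow.

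Next I would bound the maximum and minimum biases separately, rather than attempting a per-group range (which would not collapse to a clean closed form). Because the histogram is sorted, every entry obeys $x_1 \le x_j \le x_n$. For the maximum, applying $x_j \le x_n$ to each of the $\qiattrs$ summands together with $x_i \ge x_1$ gives $c\cdot\biasi{\prammech}{i} = \sum_{j\in\cI_i} x_j - \qiattrs\, x_i \le \qiattrs x_n - \qiattrs x_1 = \qiattrs(x_n - x_1)$ for all $i$, whence $\max_{i}\biasi{\prammech}{i}\le \qiattrs(x_n-x_1)/c$. Symmetrically, using $x_j \ge x_1$ and $x_i \le x_n$ yields $\min_{i}\biasi{\prammech}{i}\ge -\qiattrs(x_n-x_1)/c$.

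Combining the two extremes gives $\lVert\bias{\prammech}\rVert_{\rightleftharpoons} = \max_i\biasi{\prammech}{i} - \min_i\biasi{\prammech}{i} \le 2\qiattrs(x_n-x_1)/c$, which is exactly $\asw$ after substituting $c = \exp(\eps)+\qiattrs-1$ and identifying $\norm{\bm{x}}_{\rightleftharpoons} = x_n - x_1$ for a sorted histogram, thereby delivering both equivalent forms in the statement. I do not anticipate a real obstacle: the only delicate point is the cardinality identity $\lvert\cI_i\rvert = \qiattrs$, which licenses the rewriting into deviation form, together with the choice to bound the two extreme biases globally and independently, which is what produces the stated closed form. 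Note also that the bound is not claimed to be tight, since saturating the maximum would require a group all of whose entries equal $x_n$ yet containing an index equal to $x_1$; the argument only needs the upper estimate.
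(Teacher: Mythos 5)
Your proof is correct, but it takes a genuinely different route from the paper's. Both arguments start from Proposition~\ref{prop:pram_bias_exp} and both rely (the paper implicitly, you explicitly) on the cardinality identity $\lvert\cI_i\rvert=\qiattrs$ coming from the product structure of the universe. The paper, however, first observes that the biases sum to zero within each group $\cI_i$, so that $\max_{j\in\cI_i}\biasi{\prammech}{j}\geq 0\geq\min_{j\in\cI_i}\biasi{\prammech}{j}$; it then writes the global range as $\biasi{\prammech}{\overline{g}}-\biasi{\prammech}{\underline{h}}$ and uses the sign anchoring to dominate it by the sum of two \emph{within-group} ranges, each equal to $\qiattrs$ times a within-group spread of counts divided by $\exp(\eps)+\qiattrs-1$, before relaxing those spreads to $x_n-x_1$. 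You skip the zero-sum observation entirely and instead bound $\max_i\biasi{\prammech}{i}$ and $\min_i\biasi{\prammech}{i}$ globally and independently via $x_1\leq x_j\leq x_n$, landing on the same constant $2\qiattrs(x_n-x_1)/(\exp(\eps)+\qiattrs-1)$. Your argument is shorter and more elementary; the paper's buys a sharper intermediate estimate in terms of within-group spreads (which can be much smaller than $x_n-x_1$ when counts are homogeneous within each non-quasi-identifier class), even though the final stated bound is identical. As a minor aside, since $i\in\cI_i$ contributes a zero deviation, your own estimate could be tightened to $(\qiattrs-1)(x_n-x_1)$ per extreme, but this is immaterial for establishing the claimed $\asw$.
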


\begin{theorem}[$\alpha$-fairness for $\cM_{\text{Lap}}$]
\label{prop:lap_fair_bound}
The Laplace mechanism $\cM_{\text{Lap}}$ is $\alap$-fair with $\alap$ given by
    \begin{equation*}
        \frac{\exp\left(-\epsilon x_1/2\right)}{2}\norm{\bm{x}}_{\rightleftharpoons}=
       \frac{\exp\left(-\epsilon x_1/2\right)}{2}\left(x_n-x_1\right).
    \end{equation*}
\end{theorem}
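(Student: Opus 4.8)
The plan is to compute the entrywise bias $\biasi{\lapmech}{i}$ in closed form, observe that it is monotone in the underlying count $x_i$, and then bound the resulting maximum-minus-minimum. The first point to settle is \emph{where the bias comes from}: the additive noise $\eta_i\sim\lap{2/\eps}$ has mean zero, so $\bm{x}+\lap{2/\eps}$ is unbiased and would be trivially $0$-fair. The bias quantified by the statement therefore originates in the post-processing that projects the noisy histogram back onto valid (nonnegative) counts, i.e.\ in releasing $\max\{0,x_i+\eta_i\}$. This clipping is the sole source of unfairness and is where I would begin the analysis.

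For the bias itself I would write $\max\{0,y\}=y+(-y)^+$ with $y=x_i+\eta_i$, so that
\begin{align*}
\biasi{\lapmech}{i}&=\EE{}{\max\{0,x_i+\eta_i\}}-x_i\\
&=\EE{}{\eta_i}+\EE{}{\bigl(-(x_i+\eta_i)\bigr)^+}=\EE{}{\bigl(-(x_i+\eta_i)\bigr)^+},
\end{align*}
using $\EE{}{\eta_i}=0$. The remaining expectation is supported on $\{\eta_i<-x_i\}$, where (because $x_i\geq 0$) the Laplace density equals $\tfrac{\eps}{4}\exp(\eps\eta/2)$; the substitution $s=-(x_i+\eta)$ converts it into $\tfrac{\eps}{4}\exp(-\eps x_i/2)\int_0^\infty s\,\exp(-\eps s/2)\,ds$, which evaluates to $\tfrac{1}{\eps}\exp(-\eps x_i/2)$. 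Hence $\biasi{\lapmech}{i}=\tfrac{1}{\eps}\exp(-\eps x_i/2)$.

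With the bias in hand the fairness is routine. Since $\tfrac{1}{\eps}\exp(-\eps x_i/2)$ is decreasing in $x_i$ and the histogram is sorted with $x_1\leq\cdots\leq x_n$, the maximal bias is attained at $x_1$ and the minimal at $x_n$, giving exactly $\norm{\bias{\lapmech}}_{\rightleftharpoons}=\tfrac{1}{\eps}\bigl(\exp(-\eps x_1/2)-\exp(-\eps x_n/2)\bigr)$. Factoring out $\exp(-\eps x_1/2)$ and applying the elementary inequality $1-e^{-t}\leq t$ with $t=\eps(x_n-x_1)/2\geq 0$ (equivalently, the mean value theorem applied to $u\mapsto\exp(-\eps u/2)$) yields $\tfrac{1}{2}\exp(-\eps x_1/2)(x_n-x_1)=\tfrac{1}{2}\exp(-\eps x_1/2)\norm{\bm{x}}_{\rightleftharpoons}$, which is the stated bound. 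The only genuine obstacle is the first step: correctly recognizing that the bias is produced by the nonnegativity clipping and evaluating the clipped-Laplace expectation; once $\biasi{\lapmech}{i}$ is computed, the monotonicity and the single convexity inequality finish the argument immediately.
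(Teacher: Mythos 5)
Your proposal is correct and follows essentially the same route as the paper: identify the per-entry bias as $\exp(-\epsilon x_i/2)/\epsilon$, note that it is decreasing in $x_i$ so the extremes are attained at $x_1$ and $x_n$, and bound the resulting difference $\frac{1}{\epsilon}\left(\exp(-\epsilon x_1/2)-\exp(-\epsilon x_n/2)\right)$ by the mean value theorem (your inequality $1-e^{-t}\leq t$ is the same estimate). The one place you go beyond the paper is in actually deriving the bias formula: the paper's proof takes $\biasi{\lapmech}{i}=\exp(-\epsilon x_i/2)/\epsilon$ as given, whereas your clipped-Laplace computation correctly justifies it from the nonnegativity projection, which is consistent with the paper's remark that a post-processing projection onto the nonnegative orthant is applied.
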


Figure~\ref{fig:fairness} illustrates the fairness violations values,
represented by the value of $\alpha$, for cell suppression, swapping,
and $k$-anonymity, as well as their differentially private
counterparts, for various privacy parameters $\epsilon$ and values of
$k$ (for cell suppression and $k$-anonymity) or percentage of rows
swapped (for swapping). It can be observed that the fairness
violations of the differentially private mechanisms are comparable (or
better) to those of their traditional counterparts. This is
particularly noteworthy as the privacy parameter $\epsilon$
increases. As previously mentioned, it is important to remember that
the differentially private mechanisms are not ``noisy'' versions of
their traditional counterparts; rather they are conceptually similar
mechanisms. Consequently, they may exhibit lower fairness violations
compared to their traditional counterparts, as seen for DP $k$-anonymity.

The following theorem is the third key result of this paper. {\em It
proves the superiority of the Laplace mechanism over DP cell suppression and swapping in terms of fairness errors.} 

\begin{theorem}\label{thm:fair-comp}
    Suppose that the minimum count of the original histogram $\bm{x}(\ds)$
    is between $2$ and the threshold $\thresh$, i.e., $2\leq x_1\leq \thresh$. Then,
    the fairness error associated with the Laplace mechanism is not
    greater than that of the DP cell suppression or DP swapping mechanism, namely,
    \begin{equation*}
        \alap\leq \acs\quad\text{and}\quad\alap\leq \asw.
    \end{equation*}
\end{theorem}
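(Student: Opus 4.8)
The plan is to reduce both inequalities to elementary comparisons by exploiting the common factor $(x_n - x_1)$ that appears in all three fairness bounds (Theorems~\ref{prop:sup_fair_bound}, \ref{prop:pram_fair_bound}, and~\ref{prop:lap_fair_bound}). Since the histogram is sorted, $x_n - x_1 \geq 0$, so whenever $x_n = x_1$ all three quantities vanish and both claims hold trivially; I would therefore assume $x_n > x_1$ and compare the coefficients multiplying $(x_n - x_1)$.

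For the comparison $\alap \leq \acs$, I would first discard the nonnegative second summand of $\acs$. Indeed $\max\{|\thresh/2 - x_1|,\, |\thresh/2 - x_n|\} \geq 0$, and because $x_1 \leq x_n$ implies $\thresh - x_1 \geq \thresh - x_n$, monotonicity of the Laplace CDF gives $p_1 \geq p_n$; hence both factors of that summand are nonnegative and $\acs \geq (x_n - x_1)\,p_1$. It then suffices to show $\exp(-\epsilon x_1/2)/2 \leq p_1$. Here the hypothesis $x_1 \leq \thresh$ is exactly what is needed: writing $p_1 = \pr{\eta_1 < \thresh - x_1}$ with $\eta_1 \sim \lap{2/\epsilon}$ and using that the Laplace law is symmetric about $0$, one gets $p_1 \geq \pr{\eta_1 < 0} = 1/2$, while $\exp(-\epsilon x_1/2)/2 \leq 1/2$ since $x_1 > 0$. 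Chaining these two bounds through $1/2$ and multiplying by $(x_n - x_1) \geq 0$ yields $\alap \leq (x_n - x_1)\,p_1 \leq \acs$.

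For the comparison $\alap \leq \asw$, after cancelling the common factor it remains to prove $\exp(-\epsilon x_1/2)/2 \leq 2\qiattrs/(\exp(\epsilon) + \qiattrs - 1)$. This is where the lower bound $x_1 \geq 2$ becomes essential: it gives $\exp(-\epsilon x_1/2) \leq \exp(-\epsilon)$, so cross-multiplying by the positive denominator reduces the goal to $\exp(-\epsilon)(\exp(\epsilon) + \qiattrs - 1) \leq 4\qiattrs$, i.e., $1 + \exp(-\epsilon)(\qiattrs - 1) \leq 4\qiattrs$. Since $\exp(-\epsilon) \leq 1$, the left side is at most $1 + (\qiattrs - 1) = \qiattrs \leq 4\qiattrs$, which closes the argument.

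The calculations are short, so the only real subtlety — and the step I would check most carefully — is the role of the two hypotheses. The upper bound $x_1 \leq \thresh$ is precisely what forces $p_1 \geq 1/2$ in the cell-suppression comparison, and the lower bound $x_1 \geq 2$ is precisely what cancels the $\exp(\epsilon)$ growth in the swapping denominator; weakening either one would break the corresponding inequality. I would also treat the degenerate case $x_n = x_1$ explicitly and confirm $p_1 \geq p_n$ (so the discarded term in $\acs$ is genuinely nonnegative) rather than assuming it.
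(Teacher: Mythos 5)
Your proposal is correct and follows essentially the same route as the paper's proof: both arguments drop the nonnegative second summand of $\acs$ and chain $\alap \leq \tfrac{1}{2}(x_n-x_1) \leq p_1(x_n-x_1) \leq \acs$ via $x_1 \leq \thresh$, and both use $x_1 \geq 2$ together with $\qiattrs \geq 1$ to reduce the swapping comparison to $\exp(-\epsilon x_1/2) \leq \exp(-\epsilon) \leq \nicefrac{4\qiattrs}{(\exp(\epsilon)+\qiattrs-1)}$. Your explicit treatment of the degenerate case $x_1 = x_n$ and the verification that $p_1 \geq p_n$ are minor presentational additions, not substantive differences.
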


It is worth noting that $k$-anonymity operates in a different space
from the original histogram space, thus a theoretical comparison
between the Laplace mechanism and DP $k$-anonymity is not feasible. However, the paper next presents empirical evidence that the Laplace mechanism has 
a significant advantage over DP $k$-anonymity as well.

\section{Experimental Evaluation}
\label{sec:experiments}

This study assesses the performance of the DP variants of 
traditional DA mechanisms and compares them with two key DP
mechanisms, the Laplace and the Discrete Gaussian Mechanisms, reviewed
in Section~\ref{sec:das}. The experiments use the ACS 2019 IPUMS
datasets for Massachusetts, Texas, and Outlier \citep{SDNist}. All the experiments report the average of 200 repetitions. Results for the latter two datasets are included in the appendix as their trends are similar 
to the former. The appendix also includes a more extensive description of the
dataset and experimental settings. This section focuses on evaluating
the mechanisms in two settings: data release and classification.

\begin{table}[t]
\small
\centering
\resizebox{0.95\linewidth}{!}
{
    \begin{tabular}{c|l|r|r|r}
    \toprule
    $\epsilon$ & Mechanism & \multicolumn{1}{c|}{$\delta$} & Bias ($\ell_1$ norm) & $\alpha$-fairness \\
    \midrule
    \multirow{5}[2]{*}{0.5} & Laplace & \textbf{0} & \textbf{763.775} & \textbf{3.655} \\
          & Discrete Gaussian & 0.363  & 980.81  & 4.945  \\
          & DP Suppression & 0.999  & 935.525  & 4.345  \\
          & DP Swapping & 0.868 & 10906.79  & 469.015  \\
          & DP $k$-anonymity & 0.878  & 2337.8  & 22.65  \\
    \midrule
    \multirow{5}[2]{*}{1} & Laplace & \textbf{0} & \textbf{342.885} & \textbf{1.845} \\
          & Discrete Gaussian & 0.132  & 659.215  & 3.065  \\
          & DP Suppression & 0.999  & 1003.035  & 4.5  \\
          & DP Swapping & 0.874 & 9859.26  & 425.335  \\
          & DP $k$-anonymity & 0.906  & 3297.4  & 32.1\\
    \midrule
    \multirow{5}[2]{*}{2} & Laplace & \textbf{0} & \textbf{154.78} & \textbf{0.905} \\
          & Discrete Gaussian & 0.017  & 436.925  & 2.2  \\
          & DP Suppression & 0.999  & 1018.335  & 4.72  \\
          & DP Swapping & 0.899 & 6841.19  & 282.73  \\
          & DP $k$-anonymity & 0.981  & 4175.5  & 37.65  \\
    \midrule
    \multirow{5}[2]{*}{4} & Laplace & \textbf{0} & \textbf{67.34} & \textbf{0.465} \\
          & Discrete Gaussian & 3E-4  & 290.715  & 1.495  \\
          & DP Suppression & 0.999  & 1014.63  & 4.92  \\
          & DP Swapping & 0.969 & 1664.63  & 101.645  \\
          & DP $k$-anonymity & 0.999  & 4590.7  & 40.75 \\
    \bottomrule
    \end{tabular}%
}
  \caption{MA dataset data release: Comparison of DP mechanisms in terms of $\delta$, $\ell_1$ norm of the empirical bias and $\alpha$-fairness.  \label{tab:compareDPmethods}}
\end{table}%

\paragraph{Data Release.}

The first task compares datasets reconstructed from histograms
generated by the various DP mechanisms studied. Readers are referred
to Appendix~\ref{app:sec:DA_alg} for details on the reconstruction
algorithms.  Table~\ref{tab:compareDPmethods} assesses the performance
of the DP variants of the traditional DA mechanisms and the Laplace
and the discrete Gaussian mechanisms in terms of errors and fairness
violations. In mechanisms that may produce negative counts, a
simple post-processing projection into the non-negative orthant is applied.

These results are particularly significant: contrary to commonly held
beliefs, they demonstrate that classical DP algorithms not only
provide strong privacy guarantees (see the $\delta$ values), but also
produce histograms that improve over traditional DA mechanisms in
terms of both accuracy (see the bias column) and fairness metrics (see
$\alpha$-fairness column). {\em As a consequence, when agencies desire
to release data sets, it is advisable they consider traditional DP
mechanisms.}

\begin{figure}[!t]
\centering
        \includegraphics[width=0.8\linewidth]{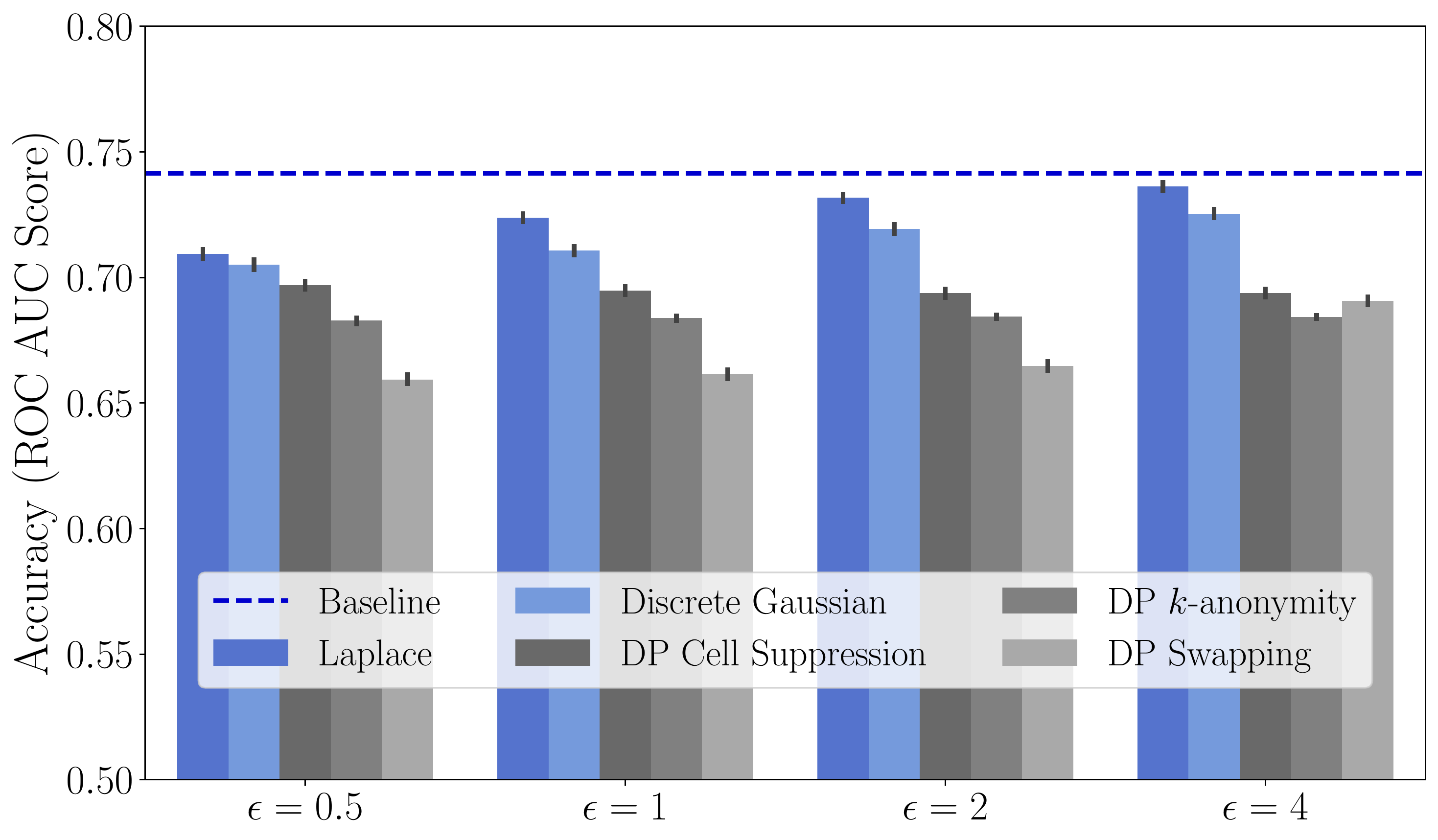}
    \caption{MA dataset: Results for Logistic Regression.}
    \label{fig:Log_errors}
\end{figure}
\paragraph{Classification.} 

It is also important to compare the performance of all the DP
mechanisms in a classification task. The setting employs the private
datasets obtained through a data-release query in order to train a
logistic regression classifier. The task is to predict whether an
individual earns more than \$50,000 per year, and the results in Figure~\ref{fig:Log_errors} are presented in terms of accuracy on the original,
non-private dataset. Observe how the Laplace and discrete Gaussian mechanisms
lead to classifiers with much higher accuracies than classifiers
trained over data produced by other traditional DA
mechanisms. Notably, the classification accuracy of Laplace and discrete 
Gaussian is much closer to that of the baseline method (trained on
non-private datasets) than any other method.  Again, this is
significant: {\em despite their simplicity, these tasks are the
basis for numerous statistical analyses performed routinely by data
agencies and organizations.}

\section{Conclusion}

This paper presented a framework for comparing traditional disclosure
avoidance systems (DA) to differential privacy. It proposed carefully 
randomized versions of three widely adopted traditional DA methods, i.e., 
suppression, swapping, and k-anonymity, and derived $(\epsilon,\delta)$-DP bounds 
for these mechanisms. The paper
also analyzed these DP algorithms empirically and showed
that they are close to their traditional counterparts both in terms of
accuracy and fairness. The DP DA mechanisms were then compared experimentally
with traditional DP mechanisms (i.e., the Laplace or the discrete Gaussian
mechanisms)  on data release and classification tasks. Contrary to popular belief, 
the experimental evaluation showed that classical DP mechanisms may be superior to traditional DA in terms of accuracy and fairness for the same privacy levels. This study has the
potential to impact the way in which data agencies and organizations
approach disclosure avoidance in the future as it provides a framework that enables 
a comparison of the strengths and limitations of traditional DA and differential privacy.

\newpage
\section*{Ethical Statement}
From an ethical standpoint, the study's purpose is not to condone the release of data by agencies that have not fully considered the privacy implications of their actions. The study should not be taken as a means to discredit traditional DA methods. Furthermore, the empirical analysis presented should be understood as specific to the mechanisms and datasets discussed in the study.

It is also important to consider the potential benefits of the study, such as improved accuracy and fairness in data release which may be gained with the adoption of traditional differentially private tools. Additionally, the study has the potential to advance the development of more effective privacy-preserving technologies.

\section*{Acknowledgments}

This research is partially supported by NSF grant 2133169, AI institute 2112533, 
and NSF CAREER Award 2143706. 
Fioretto is also supported by a Google Research Scholar Award and an Amazon Research Award. 
Its views and conclusions are those of the authors only. 
The authors would also like to thank Matt Williams for the helpful discussions 
on swapping and other traditional DA techniques.

\bibliographystyle{named}
\bibliography{ijcai23}

\newpage
\appendix

\section{Missing Proofs}

\begin{proof}[Proof of Lemma \ref{lem:dp_cond}]
    For any output $O\subseteq\cR$ and neighboring datasets $\ds, \ds' \in \cX^m$,
    \begin{align*}
        &\pr{\cM(\ds)\in O}\\
        =~&\pr{\cM(\ds)\in \left(O\cap S\right)}+\pr{\cM(\ds)\in \left(O\cap S^{\complement}\right)}\\
        \leq~& \int_{\bm{o}\in \left(O\cap S\right)} \pr{\cM(\ds)=\bm{o}}d\bm{o}+\pr{\cM(\ds)\in S^{\complement}}\\
        \leq~& \exp(\epsilon)\cdot\int_{\bm{o}\in \left(O\cap S\right)} \pr{\cM(\ds')=\bm{o}}d\bm{o}+\delta\\
        \leq~& \exp(\epsilon)\cdot\int_{\bm{o}\in O} \pr{\cM(\ds')=\bm{o}}d\bm{o}+\delta\\
        =~&\pr{\cM(\ds')\in O}+\delta.
    \end{align*}
\end{proof}

\begin{proof}[Proof of Theorem \ref{thm:sup_dp_param}]
   Without loss of generality, the neighboring datasets, $\ds, \ds'\in\cX^m$, are 
   assumed to have different last records
   with $\bm{a}_n$ in $\ds$ and $\bm{a}_{n-1}$ in $\ds'$. It implies that
   the histograms of $\ds$ and $\ds'$ differ in the last two entries, i.e.,
   \begin{align*}
       \bm{x}(\ds')&=
\left[
       \begin{matrix}
           x'_1 & \dots & x'_{n-1} & x'_n
       \end{matrix}
       \right]
       \\
       &=\left[
       \begin{matrix}
           x_1 & \dots & x_{n-1}+1 & x_n - 1
       \end{matrix}
       \right].
   \end{align*}
   Consider the following set
   \begin{equation*}
       \underline{S}=\left\{\bm{o}\in\NN_+^n~\middle\vert~o_{n-1}=o_n=\frac{T}{2}\right\}.
   \end{equation*}
   For any element $\bm{o}\in \underline{S}$, it follows that
   \begin{align}
        &\frac{\pr{\tmech(\ds)=\bm{o}}}{\pr{\tmech(\ds')=\bm{o}}}\nonumber
        \\=~&\frac{\pr{x_{n-1}+\eta_{n-1}<\thresh }}{\pr{x'_{n-1}+\eta'_{n-1}<\thresh }}\cdot
        \frac{\pr{x_{n}+\eta_{n}<\thresh }}{\pr{x'_{n}+\eta'_{n}<\thresh }}\label{eq:thresh_mech_priv_aux_0}\\
        =~&\frac{\int_{-\infty}^{\thresh}\exp\left(-\frac{\epsilon}{2}\left\vert
        v-x_{n-1}\right\vert\right)dv}{\int_{-\infty}^{\thresh}\exp\left(-\frac{\epsilon}{2}\left\vert
        v-x'_{n-1}\right\vert\right)dv}\cdot
        \frac{\int_{-\infty}^{\thresh}\exp\left(-\frac{\epsilon}{2}\left\vert
        v-x_{n}\right\vert\right)dv}{\int_{-\infty}^{\thresh}\exp\left(-\frac{\epsilon}{2}\left\vert
        v-x'_{n}\right\vert\right)dv}\nonumber\\
        \leq~&\exp\left(\frac{\epsilon}{2}\right)\cdot \exp\left(\frac{\epsilon}{2}\right)\label{eq:thresh_mech_priv_aux_1}\\
        =~&\exp(\epsilon)\nonumber,
   \end{align}
   where, in Equation \eqref{eq:thresh_mech_priv_aux_0}, $\eta_{n-1}$, $\eta'_{n-1}$, $\eta_{n}$, and $\eta'_{n}$ are all
   i.i.d. Laplacian random variables with the parameter $2/\epsilon$.
   Besides, Equation \eqref{eq:thresh_mech_priv_aux_1} comes from the triangle inequality.
   This inequality implies that the set $\underline{S}$ is a subset of $S$ in Lemma \ref{lem:dp_cond}.
    As a result, the probability $\pr{\tmech(\ds)\in S^{\complement}}$ can then be
    evaluated as follows
   \begin{align}
      &\pr{\tmech(\ds)\in S^{\complement}}\nonumber\\
      \leq~& \pr{\tmech(\ds)\in \underline{S}^{\complement}}=1-\pr{\tmech(\ds)\in \underline{S}}\nonumber\\
      =~&1-\pr{x_{n-1}+\eta_{n-1}<\thresh }\cdot\pr{x_{n}+\eta_{n}<\thresh }\nonumber\\
      \leq ~& 1-\pr{\bound+\eta_{n-1}<\thresh }\cdot\pr{\bound+\eta_{n}<\thresh }\label{eq:thresh_mech_priv_2}\\
      =~&1-\frac{1}{4}\exp\left(-\epsilon(\bound-\thresh)\right),\label{eq:thresh_mech_priv_3}
   \end{align}
   where Equation \eqref{eq:thresh_mech_priv_2} comes from the fact that the function
   $x\mapsto \pr{x+\eta<\thresh}$ with $\eta\sim\lap{2/\epsilon}$ is decreasing and the histogram
   $\bm{x}(\ds)$ is assumed to be bounded by the constant $\bound$ entrywise.
   Additionally, Equation \eqref{eq:thresh_mech_priv_3} is due to the assumption that $\thresh<\bound$.
   By Lemma \ref{lem:dp_cond}, the privacy guarantee is established for the mechanism $\tmech$,
   which completes the proof here.
   
\end{proof}

\begin{proof}[Proof of Theorem \ref{prop:sup_bias_bound}]
    By the Cauchy-Schwarz inequality,
    \begin{align*}
            \lVert \bias{\tmech}\rVert_1&=
            \sum_{i=1}^n \left\vert\left(\frac{\thresh}{2}-x_i\right)\cdot \pr{x_i+\eta_i<\thresh}\right\vert\\
            &\leq \left\lVert \frac{\thresh}{2}\cdot \bm{1}_n - \bm{x}\right\rVert_2 \cdot 
           \left\Vert \bm{p}\right\rVert_2\,.
        \end{align*}
\end{proof}

\begin{proof}[Proof of Theorem \ref{thm:new_pram_dp_param}]
       Without loss of generality, the neighboring datasets, $\ds, \ds'\in\cX^m$, are 
   assumed to have different last records
   with $\bm{a}_n$ in $\ds$ and $\bm{a}_{n-1}$ in $\ds'$, where $\bm{a}_n[\qi]\neq\bm{a}_{n-1}[\qi]$ and $\bm{a}_n[\nqi]=\bm{a}_{n-1}[\nqi]$.
   Consider the following set
   \begin{equation*}
       S=\left\{\tilde{\bm{x}}\in\NN_+^{n}~\middle\vert~\tilde{\bm{a}}_{n-1}[\qi]=
       \tilde{\bm{a}}_{n}[\qi]
       \right\}\,,
   \end{equation*}
   where $\tilde{\bm{x}}$ is the resulting histogram associated with
   $\{\tilde{\bm{a}}_i\mid i\in[n]\}$ generated by the DP swapping mechanism. It is straightforward to see that, for 
   any $\tilde{\bm{x}}\in S$,
   \begin{equation*}
       \exp\left(-\epsilon\right)\leq\frac{\pr{\prammech(\ds)=\tilde{\bm{x}}}}{\pr{\prammech(\ds')=\tilde{\bm{x}}}}\leq \exp\left(\epsilon\right)\,.
   \end{equation*}
   Then, it follows that
   \begin{align*}
       &\pr{\prammech(\ds)\in S^\complement}\\
       =~&1-\pr{\prammech(\ds)\in S}\\
       =~&1-\sum_{j\in\cI_n}\pr{\tilde{\bm{a}}_{n-1}[\qi]=
       \tilde{\bm{a}}_{n}[\qi]=\bm{a}_j[\qi]}\\
       =~&1-2\frac{\gamma(1-\gamma)}{\qiattrs-1}-(\nqiattrs -2)\left(\frac{1-\gamma}{\qiattrs-1}\right)^2\\
       =~&1-\frac{1-\gamma^2}{\qiattrs-1}-\left(\frac{1-\gamma}{\qiattrs-1}\right)^2\,.
   \end{align*}
   By Lemma \ref{lem:dp_cond}, it provides the privacy guarantee 
   for the DP swapping mechanism.
\end{proof}

\begin{proof}[Proof of Theorem \ref{thm:beta_dp_param}]
    In order to derive the privacy guarantee for the DP $k$-anonymity mechanism,
    it suffices to show that the sample dataset $\ds_{\beta}$
    satisfies $(\epsilon,\delta)$-differential privacy with $\delta$
    defined in Equation \eqref{eq:beta_mech_delta_def} because 
    of the significant property, known as post-processing immunity 
    \citep{dwork2014algorithmic}.

Prior to analysis, let $\cM_{\beta}$ denote the randomized mechanism which generates the sample dataset.
    Without loss of generality, assume that the neighboring datasets, $\ds, \ds'\in\cX^m$, have different last records
   with $\bm{a}_n$ in $\ds$ and $\bm{a}_{n-1}$ in $\ds'$. It implies that
   the histograms of $\ds$ and $\ds'$ differ in the last two entries, i.e.,
   \begin{align*}
       \bm{x}(\ds')&=
\left[
       \begin{matrix}
           x'_1 & \dots & x'_{n-1} & x'_n
       \end{matrix}
       \right]
       \\
       &=\left[
       \begin{matrix}
           x_1 & \dots & x_{n-1}+1 & x_n - 1
       \end{matrix}
       \right]\,.
   \end{align*}
   Consider the set 
   \begin{multline*}
        S_{\ds,\ds'}\coloneqq\\
        \left\{ \ds_{\beta}~ \middle\vert~ \begin{array}{l}
    x(\ds_{\beta})_n\leq \lfloor \left(1-\exp\left(-\epsilon\right)\right)\cdot x_n\rfloor\\
    x(\ds_{\beta})_{n-1}\leq \lfloor \left(1-\exp\left(-\epsilon\right)\right)\cdot (x_{n-1}+1)\rfloor
  \end{array}\right\}\,.
   \end{multline*}
   For any sample dataset $\ds_{\beta} \in S_{\ds,\ds'}$,
   \begin{align*}
      & &x(\ds_{\beta})_n&\leq \lfloor \left(1-\exp\left(-\epsilon\right)\right)\cdot x_n\rfloor\,,\\
       &\implies &x(\ds_{\beta})_n&\leq \left(1-\exp\left(-\epsilon\right)\right)\cdot x_n\,,\\
      &\implies& \exp\left(-\epsilon\right)x_n&\leq x_n-x(\ds_{\beta})_n\\
       &\implies &\frac{x_n}{x_n-x(\ds_{\beta})_n}&\leq \exp\left(\epsilon\right)\,.
   \end{align*}
   Notice that the count $x(\ds_{\beta})_n$ is non-negative, which implies the following
   \begin{equation}\label{eq:beta_mech_priv_1}
       1\leq \frac{x_n}{x_n-x(\ds_{\beta})_n}\leq \exp\left(\epsilon\right)\,.
   \end{equation}
   Likewise, for the attribute $\bm{a}_{n-1}$, the following inequalities hold
   \begin{equation}\label{eq:beta_mech_priv_2}
       \exp\left(-\epsilon\right)\leq \frac{x_{n-1}+1-x(\ds_{\beta})_{n-1}}{x_{n-1}+1}\leq 1\,.
   \end{equation}
   Therefore,
   \begin{align*}
       &\frac{\pr{\bmech(\ds)=\ds_{\beta}}}{\pr{\bmech(\ds')=\ds_{\beta}}}\\
       =~&\frac{ \pr{x(\ds_{\beta})_{n-1}\mid x_{n-1}}}{\pr{x(\ds_{\beta})_{n-1}\mid x(\ds')_{n-1}}}\cdot 
       \frac{ \pr{x(\ds_{\beta})_{n}\mid x_{n}}}{\pr{x(\ds_{\beta})_{n}\mid x(\ds')_{n}}}\\
       =~&\frac{\binom{x_{n}}{x(\ds_{\beta})_{n}}\binom{x_{n-1}}{x(\ds_{\beta})_{n-1}}
       \left(\frac{\beta}{1-\beta}\right)^{\!x(\ds_{\beta})_{n}+x(\ds_{\beta})_{n-1}}\!\!\!\!(1-\beta)^{x_n+x_{n-1}}}{\binom{x_{n}-1}{x(\ds_{\beta})_{n}}\binom{x_{n-1}+1}{x(\ds_{\beta})_{n-1}}
       \left(\frac{\beta}{1-\beta}\right)^{\!x(\ds_{\beta})_{n}+x(\ds_{\beta})_{n-1}}\!\!\!\!(1-\beta)^{x_n+x_{n-1}}}\\
       =~&\frac{x_n}{x_n-x(\ds_{\beta})_n}\cdot  \frac{x_{n-1}+1-x(\ds_{\beta})_{n-1}}{x_{n-1}+1}\,.
   \end{align*}
   Then, by Equation \eqref{eq:beta_mech_priv_1} and \eqref{eq:beta_mech_priv_2},
   \begin{equation*}
       \exp\left(-\epsilon\right)\leq \frac{\pr{\cM_{\beta}(\ds)=\ds_{\beta}}}{\pr{\cM_{\beta}(\ds')=\ds_{\beta}}}\leq \exp\left(\epsilon\right)\,.
   \end{equation*}
   Thus, for any neighboring datasets $\ds$ and $\ds'$,
    $S_{\ds,\ds'}$ turns out to be a set, each element of which
    $\cM_{\beta}(\ds)$ and $\cM_{\beta}(\ds')$ generate with similar probability.
   By Lemma \ref{lem:dp_cond}, the error probability $\delta$ can then be
   computed as
   \begin{align*}
       \delta&=\max_{\ds\sim\ds'}~\pr{\cM_{\beta}(\ds)\in S_{\ds,\ds'}^{\complement}}\\
       &=1-\min_{\ds\sim\ds'}~\pr{\cM_{\beta}(\ds)\in S_{\ds,\ds'}}\\
       &=1-\min_{w\in [B]}~
       \left(\sum_{j=0}^{\nu}f(j;w,\beta)\right)^2\,,
   \end{align*}
   where $\nu$ is a shorthand for $\lfloor(1-\exp(-\epsilon))w\rfloor$ and
   $f$ is a probability mass function 
   defined in Equation \eqref{eq:binom_pmf}. 
   In this way, the privacy guarantee has been established for the
   DP $k$-anonymity mechanism.
\end{proof}

\begin{proof}[Proof of Theorem \ref{prop:sup_fair_bound}]
Observe that, for any $i<j$,
\begin{align}
    &\biasi{\tmech}{i} - \biasi{\tmech}{j}=\left(\frac{\thresh}{2}-x_i\right) p_i-\left(\frac{\thresh}{2}-x_j\right) p_j\nonumber\\
    =~&\left[\left(\frac{\thresh}{2}-x_i\right) p_i-\left(\frac{\thresh}{2}-x_j\right) p_i\right]+\nonumber\\
    &\left[\left(\frac{\thresh}{2}-x_j\right) p_i-\left(\frac{\thresh}{2}-x_j\right) p_j\right]\nonumber\\
    =~&\left(x_j-x_i\right)p_i+\left(\frac{\thresh}{2}-x_j\right)(p_i-p_j)\label{eq:thresh_mech_a_fair_1}
\end{align}
It follows that the fairness error $\acs$ can be computed as
    \begin{align}
        &\norm{\bias{\tmech}}_{\rightleftharpoons}=\max_{i\in [n]}~\biasi{\tmech}{i}-
        \min_{j\in[n]}~\biasi{\tmech}{j}\nonumber\\
        =~&\max_{1\leq i<j\leq n}~\left\vert \biasi{\tmech}{i} - \biasi{\tmech}{j}\right\vert\nonumber\\
        =~&\max_{1\leq i<j\leq n}~\left\vert \left(x_j-x_i\right)p_i+\left(\frac{\thresh}{2}-x_j\right)(p_i-p_j)\right\vert\label{eq:thresh_mech_a_fair_2}\\
        \leq~&\max_{1\leq i<j\leq n}~\left\vert \left(x_j-x_i\right)p_i\right\vert+\max_{1\leq i<j\leq n}~\left\vert \left(\frac{\thresh}{2}-x_j\right)(p_i-p_j)\right\vert\label{eq:thresh_mech_a_fair_3}\\
        =~&(x_n-x_1)p_1+\max_{1\leq i<j\leq n}~\left\vert \left(\frac{\thresh}{2}-x_j\right)(p_i-p_j)\right\vert\label{eq:thresh_mech_a_fair_4}\\
        \leq~&(x_n-x_1)p_1+(p_1-p_n)\max_{j\in[n]}~\left\vert \left(\frac{\thresh}{2}-x_j\right)\right\vert\nonumber\\
        \leq~&\left(x_n-x_1\right)p_1+\max\left\{\left\vert\frac{\thresh}{2}-x_1\right\vert,\left\vert\frac{\thresh}{2}-x_n\right\vert\right\}(p_1-p_n)\,,\nonumber
        \end{align}
        where Equation \eqref{eq:thresh_mech_a_fair_2} is derived from Equation \eqref{eq:thresh_mech_a_fair_1} and Equation \eqref{eq:thresh_mech_a_fair_3}
        comes from the triangle inequality. Besides, Equation \eqref{eq:thresh_mech_a_fair_4}
        is due to the fact that the histogram $\bm{x}$ is sorted in an increasing order, i.e.,
        $x_1\leq\dots\leq x_n$ and, as a consequence, 
        the probabilities $\bm{p}$ in Equation \eqref{eq:p_vector} appear in a decreasing
        order, i.e., $p_1\geq \dots\geq p_n$. Except for
        the trivial case that, the
        the counts are all the same, i.e., $x_1=\dots=x_n$,
        this inequality
        is also tight when the maximum count of the original 
        histogram is exactly half of the 
        threshold, i.e., $x_n=k/2$.
        
\end{proof}

\begin{proof}[Proof of Proposition \ref{prop:pram_bias_exp}]
    For any $\bm{a}_i\in\cX$,  it follows that
    \begin{align}
        &\biasi{\prammech}{i}=\EE{}{\prammech(\ds)_i}-x_i\nonumber\\
        =~&\sum_{j=1}^n\EE{}{x_j\cdot \bm{1}\left\{\tilde{\bm{a}}_j=\bm{a}_i\mid \bm{a}_i\right\}}-x_i\nonumber\\
        =~&\EE{}{x_i\cdot \bm{1}\left\{\tilde{\bm{a}}_i=\bm{a}_i\mid \bm{a}_i\right\}}+\nonumber\\
        &\sum_{j\in\cI_i\setminus\{i\}}\EE{}{x_j\cdot \bm{1}\left\{\tilde{\bm{a}}_j=\bm{a}_i\mid \bm{a}_i\right\}}+\nonumber\\
        &\sum_{j\in[n]\setminus\cI_i}\EE{}{x_j\cdot \bm{1}\left\{\tilde{\bm{a}}_j=\bm{a}_i\mid \bm{a}_i\right\}}
        -x_i\nonumber\\
        =~& x_i\cdot \gamma +\sum_{j\in\cI_i\setminus\{i\}}\left( x_j\cdot\frac{1-\gamma}{\qiattrs - 1}\right) + 0 -x_i\nonumber\\
        =~&\frac{x_i\exp(\epsilon)}{\exp(\epsilon)+n_Q-1} +\frac{\sum_{j\in\cI_i} x_j-x_i}{\exp(\epsilon)+n_Q-1}+0-x_i\label{eq:dp_swap_bias_1}\\
        =~&\frac{\sum_{j\in\cI_i} x_j - \qiattrs x_i}{\exp(\epsilon)+\qiattrs-1}\nonumber\,,
    \end{align}
    where Equation \eqref{eq:dp_swap_bias_1} just
    plugs in the value $\gamma$ defined in Equation \eqref{eq:dp_swap_def}.
    Thus, it manages to establish the mathematical expression of the bias of the DP swapping
    mechanism $\prammech$. 
\end{proof}

\begin{proof}[Proof of Theorem \ref{prop:pram_fair_bound}]
In the first place, notice that, for any $i\in[n]$,
\begin{equation*}
    \sum_{j\in\cI_i}\biasi{\prammech}{j}=\sum_{j\in\cI_i}
    \frac{\sum_{l\in\cI_i} x_l - \qiattrs x_j}{\exp(\epsilon)+\qiattrs-1}=0\,,
\end{equation*}
which implies the following
\begin{equation}\label{eq:pram_fair_1}
    \max_{j\in \cI_i}~\biasi{\prammech}{j}\geq 0\geq
    \min_{j\in \cI_i}~\biasi{\prammech}{j}\,.
\end{equation}
    Suppose that $\overline{g}$ and
    $\underline{h}$ are the indices associated with
    maximum and minimum biases respectively, i.e.,
    \begin{equation*}
        \overline{g} = \underset{l\in [n]}{\arg\max}~\biasi{\prammech}{l}\,,\quad
        \underline{h} = \underset{l\in [n]}{\arg\min}~\biasi{\prammech}{l}\,.
    \end{equation*}
    and $\underline{g}$ (or $\overline{h}$) represents
    the index associated with the minimum (or maximum) bias
    over the index set $\cI_{\overline{g}}$ (or $\cI_{\underline{h}}$), i.e.,
    \begin{equation*}
        \underline{g} = \underset{l\in \cI_{\overline{g}}}{\arg\min}~\biasi{\prammech}{l}\,,\quad
        \overline{h} = \underset{l\in \cI_{\underline{h}}}{\arg\max}~\biasi{\prammech}{l}\,.
    \end{equation*}
    By Equation \eqref{eq:pram_fair_1}, the following
    inequalities hold
 \begin{equation}\label{eq:pram_fair_2}
        \biasi{\prammech}{\underline{g}}\leq 0\leq \biasi{\prammech}{\overline{h}}\,.
    \end{equation}
    Then, it follows that
    \begin{align}
        &\norm{\bias{\prammech}}_{\rightleftharpoons}=
        \biasi{\prammech}{\overline{g}} - \biasi{\prammech}{\underline{h} }\nonumber\\
    \leq~& \left(\biasi{\prammech}{\overline{g}} - \biasi{\prammech}{\underline{g}}\right) +
    \left(\biasi{\prammech}{\overline{h}} - \biasi{\prammech}{\underline{h} }\right)\label{eq:pram_fair_3}\\
    =~&\frac{\qiattrs \left(x_{\underline{g}}-x_{\overline{g}}\right)}{\exp(\epsilon)+\qiattrs-1}+\frac{\qiattrs \left(x_{\underline{h}}-x_{\overline{h}}\right)}{\exp(\epsilon)+\qiattrs-1}\nonumber\\
    \leq ~&\frac{2\qiattrs \left(x_n-x_1\right)}{\exp(\epsilon)+\qiattrs-1}\nonumber\\
    =~&\frac{2\qiattrs \norm{\bm{x}}_{\rightleftharpoons}}{\exp(\epsilon)+\qiattrs-1}\nonumber\,,
    \end{align}
    where Equation \eqref{eq:pram_fair_3} is a direct
    consequence of Equation \eqref{eq:pram_fair_2}.
\end{proof}

\begin{proof}[Proof of Theorem \ref{prop:lap_fair_bound}]
    By Definition \ref{def:a-fair} of $\alpha$-fairness, the fairness violation coefficient $\alpha$ can be computed as
    \begin{align}
        \norm{\bias{\lapmech}}_{\rightleftharpoons}&=
        \max_{j\in [n]}~\biasi{\lapmech}{j}- \min_{j\in [n]}~\biasi{\lapmech}{j}\nonumber\\
        &=\biasi{\lapmech}{1} - \biasi{\lapmech}{n}\label{eq:lap_mech_fair_aux_1}\\
        &=\frac{\exp\left(-\epsilon x_1/2\right)-\exp\left(-\epsilon x_n/2\right)}{\epsilon} \nonumber\\
        &\leq \frac{x_n-x_1}{\epsilon} \sup_{x\in (x_1,x_n)}\left\vert\frac{d \exp\left(-\epsilon x/2\right)}{dx}\right\vert\label{eq:lap_mech_fair_aux_2}\\
        &=\frac{\exp\left(-\epsilon x_1/2\right)}{2}\left(x_n-x_1\right)\nonumber\\
        &=\frac{\exp\left(-\epsilon x_1/2\right)}{2}\norm{\bm{x}}_{\rightleftharpoons}\nonumber\,,
    \end{align}
    where Equation \eqref{eq:lap_mech_fair_aux_1} comes from the fact that the biases decrease,
    as the counts increase, i.e., $\biasi{\lapmech}{1}\geq\dots\geq \biasi{\lapmech}{n}\geq 0$.
    Besides, Equation \eqref{eq:lap_mech_fair_aux_2} is due to the mean value inequalities \citep{clarke1994mean}. It completes the proof here.
\end{proof}

\begin{proof}[Proof of Theorem \ref{thm:fair-comp}]
    First of all, note that
    \begin{align}
        \acs&=\left(x_n-x_1\right)p_1+\max\left\{\left\vert\frac{\thresh}{2}-x_1\right\vert,\left\vert\frac{\thresh}{2}-x_n\right\vert\right\}(p_1-p_n)\nonumber\\
        &\geq \left(x_n-x_1\right)p_1\nonumber\\
        &\geq \frac{1}{2} (x_n - x_1)\label{eq:fair-comp-1}\\
        &\geq \frac{\exp\left(-\epsilon x_1/2\right)}{2}\left(x_n-x_1\right)\nonumber\\
        &=\alap\nonumber\,,
    \end{align}
    where the inequality in Equation \eqref{eq:fair-comp-1} comes from
    the fact that the function $x\mapsto \pr{x+\eta\leq \thresh}$ with $\eta\sim\lap{\nicefrac{2}{\epsilon}}$ is 
    decreasing and $x_1$ is below the threshold $\thresh$, which implies the following
    \begin{equation*}
        p_1=\pr{x+\eta\leq \thresh}\geq \pr{\thresh+\eta\leq \thresh}=\frac{1}{2}\,.
    \end{equation*}
    Besides, $\qiattrs$ is the cardinality of the restricted data
    universe $\qiuniverse$, which is assumed to be non-empty
    and thus $\qiattrs$ is at least $1$.
    Then, it follows that
    \begin{align}
        \asw&=\frac{2\qiattrs}{\exp(\epsilon)+\qiattrs-1}(x_n-x_1)\nonumber\\
        &=2\left(1-\frac{\exp(\epsilon)-1}{\exp(\epsilon)+\qiattrs-1}\right)(x_n-x_1)\nonumber\\
        &\geq 2\exp\left(-\epsilon\right)(x_n-x_1)\nonumber\\
        &\geq 2\exp\left(-\epsilon x_1/2\right)(x_n-x_1)\label{eq:fair-comp-2}\\
        &\geq \frac{\exp\left(-\epsilon x_1/2\right)}{2}\left(x_n-x_1\right)\nonumber\\
        &=\alap\nonumber\,,
    \end{align}
    where Equation \eqref{eq:fair-comp-2} is based on the assumption that
    $x_1$ is no less than $2$.
\end{proof}

\section{Traditional DA Algorithms}
\label{app:sec:DA_alg}
This section presents more formal specifications of  the traditional DA algorithms adopted in the paper.

\subsection{DP Cell Suppression}
\label{app:dp_cell_suppression}

Algorithms \ref{das:cellsupp} and \ref{das:dpcellsupp} provide the pseudocode for the traditional cell suppression mechanism and its differentially private counterpart, respectively. 

To further elaborate, algorithm \ref{das:cellsupp}, which describes the traditional cell suppression mechanism, takes as input a histogram $\bm{x}(D)$ and a threshold $k\in\mathbb{Z}_+$ and returns a private version $\tilde{\bm{x}}(D)$ of $\bm{x}(D)$. The algorithm iterates through each record of the histogram and suppresses each value $x_i$ with value $\lfloor\nicefrac k2\rfloor$ if $x_i < k$ or releases the original value $x_i$ otherwise (lines 1--3).
\begin{algorithm}[h]
\caption{Cell Suppression}\label{das:cellsupp}
\begin{algorithmic}[1]
    \item[\algorithmicrequire] Histogram $\bm{x}(D)$ with vector of counts $(x_i)_{i\in[n]}$, threshold $k\in\mathbb{Z}_+$.
\item[\algorithmicfunction]{cellSuppress}{($\bm{x}(D),k$):}
\For{$i\in[n]$}

\State $\tilde x_i\gets\begin{cases}\lfloor\nicefrac{k}{2}\rfloor\text{ if } x_i<k\\x_i \text{ otherwise}\end{cases}$
\EndFor
\State\algorithmicreturn$\,\text{Histogram }\tilde{\bm{x}}(D)\text{ with counts } \tilde{\bm x}=(\tilde{x}_i)_{i\in[n]}$
\end{algorithmic}
\end{algorithm}

Algorithm \ref{das:dpcellsupp} describes the differentially private counterpart of cell suppression. It takes as input a histogram $\bm{x}(D)$, a threshold $k\in\mathbb{Z}_+$, and a privacy parameter $\eps>0$ and returns a private version $\tilde{\bm{x}}(D)$ of the original histogram $\bm{x}(D)$. First the threshold $k$ is perturbed with Laplace noise (of scale $\nicefrac{2}{\eps}$) to obtain $\tilde{k}$ (line 1). Then the algorithm iterates over every record of the histogram and suppresses each count $x_i$ with value $\lfloor\nicefrac{\tilde{k}}{2}\rfloor$ if $x_i<\tilde{k}$ or releases the original value $x_i$ otherwise (lines 2--4).

\begin{algorithm}[h]
\caption{DP Cell Suppression}\label{das:dpcellsupp}
\begin{algorithmic}[1]
\item[\algorithmicrequire] Histogram $\bm{x}(D)$ with vector of counts $(x_i)_{i\in[n]}$, threshold $k\in\mathbb{Z}_+$, privacy parameter $\eps>0$.
\item[\algorithmicfunction]{DPCellSuppress}{($\bm{x}(D),k,\eps$):}
\State $\tilde k\gets k+\text{Laplace}(\nicefrac{2}{\eps})$
\For{$i\in[n]$}
\State $\tilde x_i\gets\begin{cases}\lfloor \nicefrac{\tilde k}{2}\rfloor\text{ if } x_i<\tilde k\\x_i \text{ otherwise}\end{cases}$
\EndFor
\State\algorithmicreturn$\,\text{Histogram }\tilde{\bm{x}}(D)\text{ with counts } (\tilde x_i)_{i\in[n]}$
\end{algorithmic}
\end{algorithm}
\paragraph{On why cell suppression does not satisfy differential privacy.}
For instance, there exists a pair of neighboring datasets, $\ds$ and $\ds'$.
Suppose that $\ds$ has one more record of $\bm{a}_n$ than $\ds'$ while $\ds'$ has one more record of $\bm{a}_{n-1}$ than $\ds$.
The attributes $\bm{a}_{n-1}$ and $\bm{a}_n$ are assumed to be the ``majorities" in whichever dataset,
$\ds$ or $\ds'$, i.e.,
\begin{align*}
    \bm x(\ds')_{n-1} > \bm x(\ds)_{n-1}&\geq \thresh\,,\\
    \bm x(\ds)_{n} > \bm x(\ds')_{n}&\geq \thresh\,.
\end{align*}
Thus, given the input
datasets $\ds$ and $\ds'$,
the outputs of the original cell suppression mechanism associated with
$\bm{a}_n$ are still $\bm x(\ds)_n$ and $\bm x(\ds')_n = \bm x(\ds)_n-1$ respectively.
It means that this mechanism, due to the nature that it is deterministic, can hardly
derive the same output from these two neighboring datasets, which violates the requirements of differential privacy.
\subsection{DP Swapping}
\label{app:dp_swapping}
Swapping is done with respect to a metric that quantifies the discrepancies between any two records. Given the set of features $\Lambda$, this metric, let us denote it by $d_\text{swap}$, is defined over the domain of possible records of a histogram as
\begin{align*}
  d_\text{swap}(\bm{a}_i,\bm{a}_j)\triangleq
  &\sum_{\lambda\in\Lambda} \mathbbm{1}_\text{cat}(\lambda)\rho(\bm{a}_i[\lambda],\bm{a}_j[\lambda]) \\ 
  &+ \mathbbm{1}_\text{num}(\lambda)\frac{\vert \bm{a}_i[\lambda]-\bm{a}_j[\lambda]\vert}{\lambda_\text{range}}
\end{align*}
Where $\rho$ is the discrete metric (i.e. $\rho(a,b)=0\iff a=b$ and $1$ otherwise) and $\lambda_\text{range}$ is the range of the possible values taken by a numerical feature $\lambda$. 
$\mathbbm{1}_\text{cat}$ and $\mathbbm{1}_\text{num}$ are characteristic functions of the sets of categorical and numerical features of the histogram respectively. Refer to algorithm \ref{das:swapping} for details on the non-private/deterministic swapping algorithm.

Algorithm \ref{das:swapping} describes the traditional swapping mechanism. This takes as input a histogram $\bm{x}(D)$ with $N$ records, a swapping parameter $\gamma\in[0,1]$, and a list of quasi-identifiers. For $\lfloor\nicefrac{(1-\gamma)N}2\rfloor$ times, the algorithm picks a hitherto unswapped record $\bm{a}_i$ of the histogram, picks the closest unswapped record $\bm{a}_s$ to $\bm{a}_i$ (w.r.t. the metric $d_\text{swap}$) and swaps the quasi identifiers of $\bm{a}_i$ and $\bm{a}_s$ (lines 1--4).
\begin{algorithm}[!h]
\caption{Swapping}\label{das:swapping}
\begin{algorithmic}[1]
\item[\algorithmicrequire] Histogram $\bm{x}(D)$ of size $N$, Swapping Parameter $\gamma\in[0,1]$, list of quasi-identifiers $Q$
\item[\algorithmicfunction]{swapping}{($\bm{x}(D),\gamma,Q$):}
\For{$\lfloor\nicefrac{(1-\gamma)N}2\rfloor$ times}
\State Randomly pick an unswapped row $\bm{a}_i$ of $\bm{x}(D)$
\State $\bm{a}_s\gets\displaystyle{\arg\min_{\substack{\bm{a}_j\in \bm{x}(D)\setminus \bm{a}_i\\ \bm{a}_j\text{ is unswapped}}}}d_\text{swap}(\bm{a}_j,\bm{a}_i)$
\State $\bm{a}_i[Q],\bm{a}_s[Q]\gets \bm{a}_s[Q], \bm{a}_i[Q]$
\EndFor
\State\algorithmicreturn$\,$Swapped histogram $\bm{x}(D)$.
\end{algorithmic}
\end{algorithm}

The differentially private counterpart of swapping was described in subsection \ref{subsec:dpswap}. Algorithm \ref{das:dpswapping} presents this form of swapping. It takes as input a histogram $\bm{x}(D)$ with $N$ records, a privacy parameter $\eps>0$, and a list of quasi-identifiers $Q$ and returns a private/modified histogram $\bm{x}(D)$. For each record $\bm{a}_i$ of the histogram, the algorithm preserves it with probability $\gamma\triangleq\frac{\exp(\epsilon)}{\exp(\epsilon)+n_Q-1}$; else with probability $1-\gamma$ picks a set of values of quasi-identifiers from $\chi_Q\setminus \bm{a}_i[Q]$ uniformly at random, where $\chi_Q$ is the data universe of quasi-identifiers, and assigns it to $\bm{a}_i[Q]$ (lines 1--3).

\begin{algorithm}[!h]
    \caption{DP Swapping}\label{das:dpswapping}
    \begin{algorithmic}[1]
        \item[\algorithmicrequire] Histogram $\bm{x}(D)$ of size $N$, privacy parameter $\eps>0$, list of quasi-identifiers $Q$
        \item[\algorithmicfunction]{DPSwapping}{$(\bm{x}(D),\eps,Q):$}
        \For{row $\bm{a}_i$ in $D$}
        \State $\bm{a}_i[Q] \gets
\begin{cases}
  \bm{a}_i[Q] &\text{w.p. } \frac{\exp(\epsilon)}{\exp(\epsilon)+n_Q-1}, \\
\text{Uniform}(\qiuniverse\setminus \bm{a}_i[Q]) &\text{otherwise}
\end{cases}$
\EndFor
\State\algorithmicreturn$\,$Swapped histogram $\bm{x}(D)$ with rows $\{\bm{a}_i\}$
    \end{algorithmic}
\end{algorithm}
\subsection{DP $k$-anonymity}
\label{app:dp_kanonimity}
In this paper, to $k$-anonymize a dataset we utilize the Mondrian algorithm (\cite{LeFevre2006Mondrian}). This is a top-down greedy algorithm that takes a dataset as input and outputs a $k$-anonymized version of it. Interested readers may refer to the cited paper for details about this algorithm. \texttt{anonypy}, an anonymization package for python, includes an implementation for $k$-anonymity via the Mondrian algorithm, which has been used for the results on $k$-anonymity in this paper. 

\begin{algorithm}[!h]
    \caption{Producing Synthetic $k$-Anonymized Dataset}\label{das:kanon}
    \begin{algorithmic}[1]
        \item[\algorithmicrequire] Dataset $D$, anonymization parameter $k\in\mathbb{Z}_+$
        \item[\algorithmicfunction]{produceKanonymizedDataset}{$(D,k):$} 
        \State $k$-anonymize $D$ to get $D_\text{$k$-anon}$ using the Mondrian method (\cite{LeFevre2006Mondrian}). 
        \State$\tilde{D}\gets$reconstructDataset$(D_\text{$k$-anon})$
        \item[\algorithmicreturn] Reconstructed dataset $\tilde{D}$.
        \item[\algorithmicfunction] {reconstructDataset}{$(D_\text{$k$-anon},D):$}
        \State Initialise an empty dataset $\tilde{D}$ with the same set of features as $D$
        \For{every row $r$ in $D_\text{$k$-anon}$}
        \For{$r$[count] many times}
        \State Create new row $\tilde{r}$ for $\tilde{D}$
        \For{each feature $\lambda$}
        \If{$\lambda$ is categorical}
        \State Assign $\tilde{r}(\lambda)$ a value from the list $r[\lambda]$ uniformly at random.
        \Else
        \State Assign $\tilde{r}(\lambda)$ a value from the Gaussian $\mathcal{N}(\mu,\sigma)$ (rounded off to the nearest non-negative integer), where $\mu\triangleq\frac{a+b}2$ and $\sigma\triangleq\frac{b-a}{4}$, where $a, b$ are the endpoints of the interval $r[\lambda]\triangleq[a,b]$.
        \EndIf
        \EndFor
        \EndFor
        \EndFor
        \item[\algorithmicreturn] Reconstructed dataset $\tilde{D}$.
    \end{algorithmic}
\end{algorithm}

\begin{algorithm}[!h]
    \caption{DP $k$-Anonymity}\label{das:dpkanon}
    \begin{algorithmic}[1]
        \item[\algorithmicrequire] Dataset $D$, anonymization parameter $k\in\mathbb{Z}_+$, privacy parameter $\eps>0$
        \item[\algorithmicfunction]{produceDPKanonymizedDataset}{$(D,k,\eps):$} 
        \State $\beta\gets 1-(\exp(-\eps))$
        \State Create a subset $D'$ of the dataset $D$ by sampling from the rows of $D$ uniformly at random with probability $\beta$.
        \State $\tilde{D}\gets$produceKanonymizedDataset($D',k$)\algorithmiccomment{Using produceKanonymizedDataset from algorithm \ref{das:kanon}}
        \item[\algorithmicreturn] Reconstructed dataset $\tilde{D}$.
    \end{algorithmic}
\end{algorithm}

In the $k$-anonymized version, categorical attribute values are grouped together as lists and numerical ones are grouped together as intervals and each row is assigned a count attribute corresponding to how many rows of the original dataset the said row in the $k$-anonymized version represents. 

\paragraph{Reconstruction step.}
However, this makes it difficult to analyze the anonymized output with the original dataset in the same space. Thus it is necessary to reconstruct a synthetic dataset from the $k$-anonymized version that is in the same space as that of the original dataset. In our experiments, we include a reconstruction step for $k$-anonymity.

Algorithm \ref{das:kanon} describes how we obtain a reconstructed, privatized version of the dataset using the traditional $k$-anonymity algorithm. This algorithm involves two components: $k$-anonymization and reconstructing an output, privatized dataset in the space of the original dataset. It takes as input a dataset $D$ and $k\in\mathbb{Z}_+$ and outputs a reconstructed dataset $\tilde{D}$. 

First, the original dataset $D$ is $k$-anonymized using the Mondrian method (\cite{LeFevre2006Mondrian}) to obtain $D_\text{$k$-anon}$ (line 1). As $D_\text{$k$-anon}$ is not in the same space as $D$, the algorithm uses a reconstruction step (line 2). 

To perform the reconstruction, $D_\text{$k$-anon}$ is taken and a new empty dataset $\tilde{D}$ in the space of $D$ is created (line 3). The algorithm iterates over every row $r$ in the $k$-anonymized dataset for $r[\text{count}]$ times (i.e. once for every row in the original dataset that is represented by $r$ in $D_\text{$k$-anon}$), creates a new row $\tilde{r}$ for $D$; for each feature $\lambda$, if $\lambda$ is categorical, then the algorithm chooses one of the merged values of $\lambda$ in $r[\lambda]$ uniformly at random for $\tilde{r}[\lambda]$, or if $\lambda$ is numeric, then a random value is chosen from $\mathcal{N}(\mu,\sigma)$ and rounded off to the nearest non-negative integer, where $\mu$ is the midpoint of the interval $r[\lambda]$ and $\sigma$ is $\nicefrac{1}{4}$ times the length of the interval $r[\lambda]$ (lines 4-15). 

Algorithm \ref{das:dpkanon} describes the DP counterpart of the aforementioned $k$-anonymity process. It takes as input a dataset $D$, $k\in\mathbb{Z}_+$, and a privacy parameter $\eps>0$ and outputs a reconstructed, $k$-anonymized dataset $\tilde{D}$. The algorithm computes a sampling probability $\beta\triangleq 1-(\exp(-\eps))$ and samples rows of $D$ uniformly at random with probability $\beta$ to obtain $D'$ (lines 1--2). Then $D'$ and $k$ are passed as input to algorithm \ref{das:kanon} to produce the reconstructed dataset $\tilde{D}$ (line 3).
\section{Extended Results}
\label{app:results}

\subsection{Datasets and Settings adopted}
\label{app:datasets}
\paragraph{Datasets} The data adopted in our experimental studies was the Diverse Community Excerpts Benchmark Data, provided by the National Institute of Standards and Technology and available on the SDNist synthetic data evaluation library on GitHub.  The excerpts are a curated selection of geography and features derived from the American Community Survey (ACS). Each of these datasets is further divided into geographical regions known as PUMAs (Public Use Microdata Areas). In particular, we use data provided for Massachusetts, Texas, and Outlier PUMAs; these three datasets contain information about 5, 6, and 20 PUMAs respectively. 

In particular, while the full ACS data has about 200 features, the Diverse Community Excerpts Benchmark Data uses about 20 features. Out of these features, we use a slice of the dataset for our experiments corresponding to the features \texttt{['RACE', 'SEX', 'OWNERSHP', 
'AGE','INCTOT']}, which correspond to the race, sex, house ownership status, age and the total annual income of an individual respectively. Wherever necessary, the numeric features \texttt{AGE} and \texttt{INCTOT} are discretized/binarized respectively into groups. For instance, unless stated otherwise, we binarize \texttt{INCTOT} into whether a person earns more than $\$\, 50000$ per annum (1) or not (0). The age attribute, wherever used, may be discretized into groups/age brackets, depending upon the experiment. For example, our classification experiments involve dividing ages equitably into 5 age brackets. 

Note that in doing so, all features in the dataset slice being considered are now categorical, and this is especially convenient when it comes to the reconstruction step of the $k$-anonymity algorithm: now all the rows' features can be chosen uniformly at random from the list of merged attributes in the anonymized version rather than sampling from a Gaussian centered around the midpoint of an interval, which carries a slight risk of sampling values outside of the region defined by the endpoints of the interval.

\paragraph{Settings} These experiments have been coded and run using Python 3.9 and above. Some tasks involving heavy computation were performed using a cluster equipped with AMD EPYC 7452 32-Core CPUs (@ 1.5 GHz) and 8GB of RAM.
\subsection{Data Release}
\label{app:data_release}
Here empirical results on the data release via different mechanisms are provided for the Texas and Outlier datasets as done earlier for the Massachusetts dataset. 

Tables \ref{tab:TXcompareDPmethods} and \ref{tab:OTcompareDPmethods} provide the values of $\delta$, biases (w.r.t. the $\ell_1$ norm), and the fairness violation bound $\alpha$ respectively (wherever applicable, the threshold $k$ is set to be 6). 

Here, a similar trend is seen as for table \ref{tab:compareDPmethods} in the main text and the DP mechanisms (Laplace and discrete Gaussian) almost always offer lower values of $\delta$, biases, and $\alpha$ than the rest. This again demonstrates that DP methods do indeed offer better privacy protection, higher accuracy of data release, and better fairness guarantees than the other traditional DA mechanisms.

Figures \ref{fig:TX_error} and \ref{fig:OT_error} provide plots showing the errors ($\Vert\tilde{\bm x}-\bm x\Vert_1$) associated with the data release of each DA method and its DP variants. Figures \ref{fig:TX_fairness} and \ref{fig:OT_fairness} provide plots showing the fairness values ($\alpha$) associated with the same. 

As for the Massachusetts dataset in the main text, it is again seen here for the Texas and Outlier datasets that as $\eps$ increases, the DP counterpart of each DA mechanism approaches the original DA mechanism in terms of errors and fairness violations. This further reinforces the observation that these differentially private mechanisms are conceptually similar and perform similarly.

\begin{table}[!h]
\small
\centering
\resizebox{0.95\linewidth}{!}
{
    \begin{tabular}{c|l|r|r|r}
    \toprule
    $\epsilon$ & Mechanism & \multicolumn{1}{c|}{$\delta$} & Bias ($\ell_1$ norm) & $\alpha$-fairness \\
    \midrule
    \multirow{5}[2]{*}{0.5} & Laplace & \textbf{0} & \textbf{901.21
} & \textbf{3.645
} \\
          & Discrete Gaussian & 0.363  & 1156.63

  & 4.62

  \\
          & DP Suppression & 0.999  & 1138.62
  &  4.53
     \\
          & DP Swapping & 0.868 &  12988.58

 & 437.105
  \\
          & DP $k$-anonymity & 0.878  & 2963.3  & 24.7  \\
    \midrule
    \multirow{5}[2]{*}{1} & Laplace & \textbf{0} & \textbf{409.03
} & \textbf{1.815
} \\
          & Discrete Gaussian & 0.132  & 777.455
  & 2.96
      \\
          & DP Suppression & 0.999  & 1205.315
  & 4.47
  \\
          & DP Swapping & 0.874 &  11624.64
 &  394.28
 \\
          & DP $k$-anonymity & 0.906  & 4296.8  & 35.4 \\
    \midrule
    \multirow{5}[2]{*}{2} & Laplace & \textbf{0} & \textbf{187.59
    } & \textbf{0.905} \\
          & Discrete Gaussian & 0.017  & 523.825
  & 1.995
      \\
          & DP Suppression & 0.999  &  1219.19
 & 4.71
     \\
          & DP Swapping & 0.899 & 8212.7
  &  266.78
 \\
          & DP $k$-anonymity & 0.981  & 5406.7  &  43.4 \\
    \midrule
    \multirow{5}[2]{*}{4} & Laplace & \textbf{0} & \textbf{81.63
} & \textbf{0.46
} \\
          & Discrete Gaussian & 3E-4  & 353.99
  & 1.545
  \\
          & DP Suppression & 0.999  & 1217.115
  & 4.91
  \\
          & DP Swapping & 0.969 & 2117.62
  & 75.48
  \\
          & DP $k$-anonymity & 0.999  & 5992.0  & 48.9 \\
    \bottomrule
    \end{tabular}%
}
  \caption{TX dataset data release: Comparison of DP mechanisms in terms of $\delta$, $\ell_1$ norm of the empirical bias and $\alpha$-fairness. \label{tab:TXcompareDPmethods}}
\end{table}

\begin{table}[!h]
\small
\centering
\resizebox{0.95\linewidth}{!}
{
    \begin{tabular}{c|l|r|r|r}
    \toprule
    $\epsilon$ & Mechanism & \multicolumn{1}{c|}{$\delta$} & Bias ($\ell_1$ norm) & $\alpha$-fairness \\
    \midrule
    \multirow{5}[2]{*}{0.5} & Laplace & \textbf{0} & \textbf{2992.88    
} & \textbf{4.02
} \\
          & Discrete Gaussian & 0.363  & 3798.96
  & 4.885

  \\
          & DP Suppression & 0.999  & 3687.445
  &  4.39
     \\
          & DP Swapping & 0.868 &  35385.63
 & 580.015
  \\
          & DP $k$-anonymity & 0.878  & 10260.6   & 34.5 \\
    \midrule
    \multirow{5}[2]{*}{1} & Laplace & \textbf{0} & \textbf{1372.36} & \textbf{2} \\
          & Discrete Gaussian & 0.132  & 2570.38
  & 3.235

      \\
          & DP Suppression & 0.999  & 3911.82
  & 4.56
  \\
          & DP Swapping & 0.874 &  32134.71
 &  553.66  
 \\
          & DP $k$-anonymity & 0.906  & 14668.5  & 46.1 \\
    \midrule
    \multirow{5}[2]{*}{2} & Laplace & \textbf{0} & \textbf{628.335} & \textbf{0.975} \\
          & Discrete Gaussian & 0.017  & 1728.1

  & 2.62

      \\
          & DP Suppression & 0.999  &  3969.165
 & 4.76
     \\
          & DP Swapping & 0.899 & 22242.12
  &  339.09
 \\
          & DP $k$-anonymity & 0.981  & 18529.7  & 54.4  \\
    \midrule
    \multirow{5}[2]{*}{4} & Laplace & \textbf{0} & \textbf{274.36} & \textbf{0.48
} \\
          & Discrete Gaussian & 3E-4  & 1162

  & 1.71

  \\
          & DP Suppression & 0.999  & 3962.755
  & 4.93
  \\
          & DP Swapping & 0.969 & 5634.95
  & 108.565
  \\
          & DP $k$-anonymity & 0.999  & 20456.9  &   61.5   \\
    \bottomrule
    \end{tabular}%
}
  \caption{Outlier dataset data release: Comparison of DP mechanisms in terms of $\delta$, $\ell_1$ norm of the empirical bias and $\alpha$-fairness.\label{tab:OTcompareDPmethods}}
\end{table}

\subsection{Classification}
\label{app:classification}
\begin{figure*}[!t]
\centering
        \includegraphics[width=0.44\linewidth]{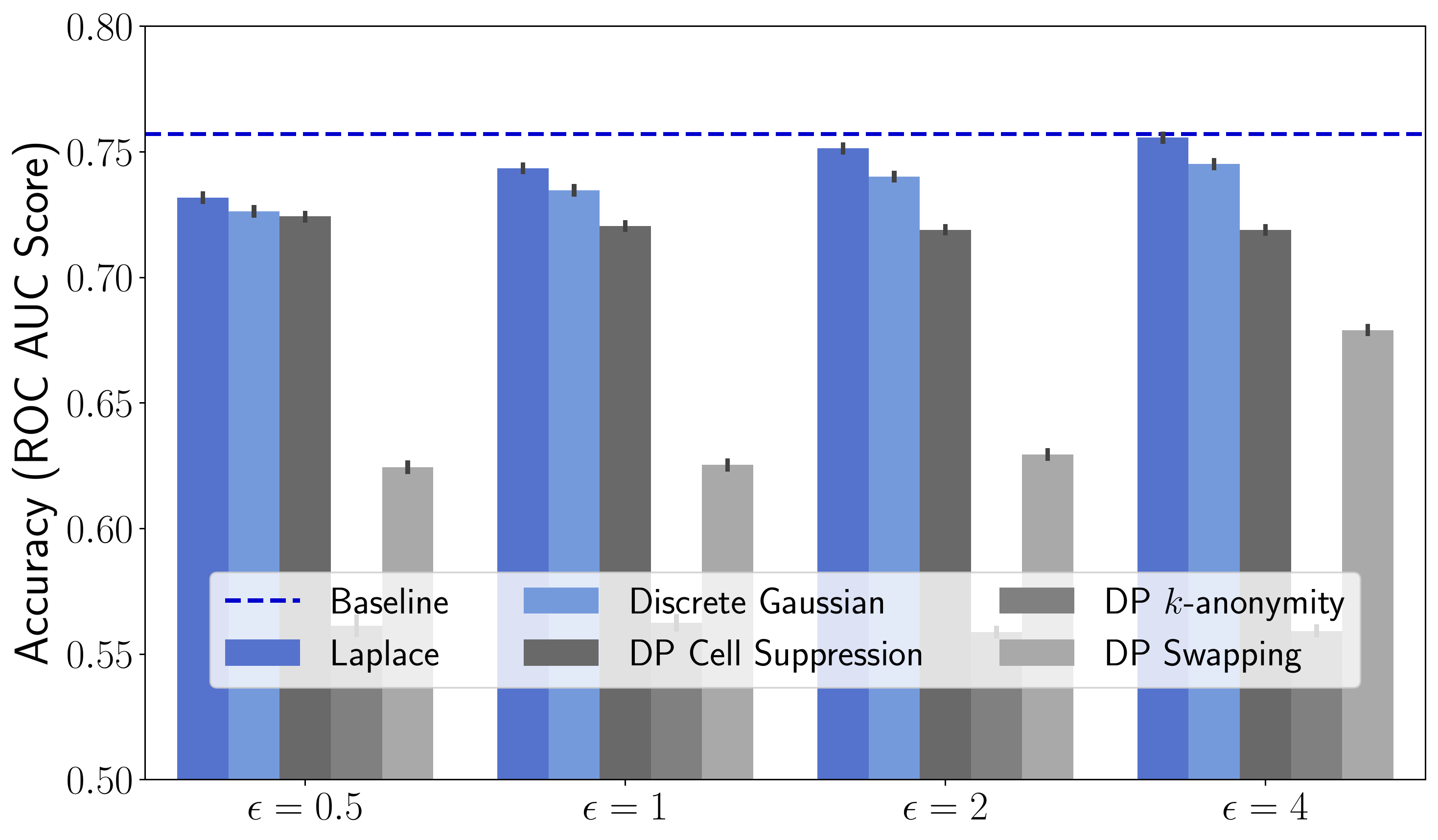}
        \includegraphics[width=0.44\linewidth]{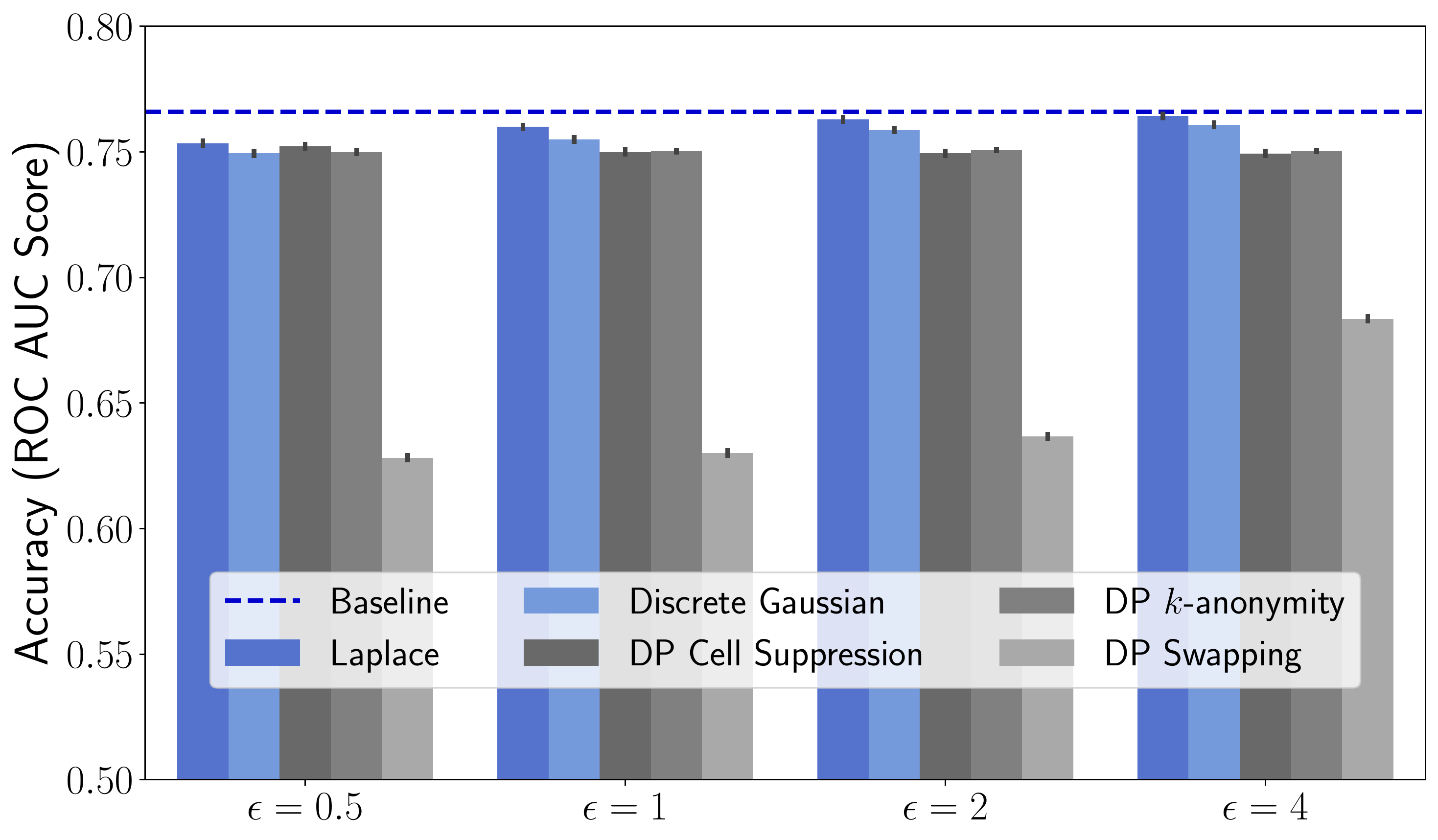}
    \caption{Results for Logistic Regression for Texas (left) and Outlier (right) datasets}
    \label{fig:OTLog_errors}
\end{figure*}
This subsection provides plots for accuracies of the logistic regression task described in the paper using various DA methods over the Texas and Outlier dataset
(Figure \ref{fig:OTLog_errors}). For these datasets as well, it is seen that training logistic regression classifiers with data produced by DP mechanisms (Laplace and discrete Gaussian) yields close-to-baseline classification accuracy. Also, it is seen that using data produced by traditional DA mechanisms yields accuracies that are lower and further away from the baseline accuracy than for any of the DP mechanisms.

\begin{figure*}[!t]
\centering
\includegraphics[width=0.3\textwidth]{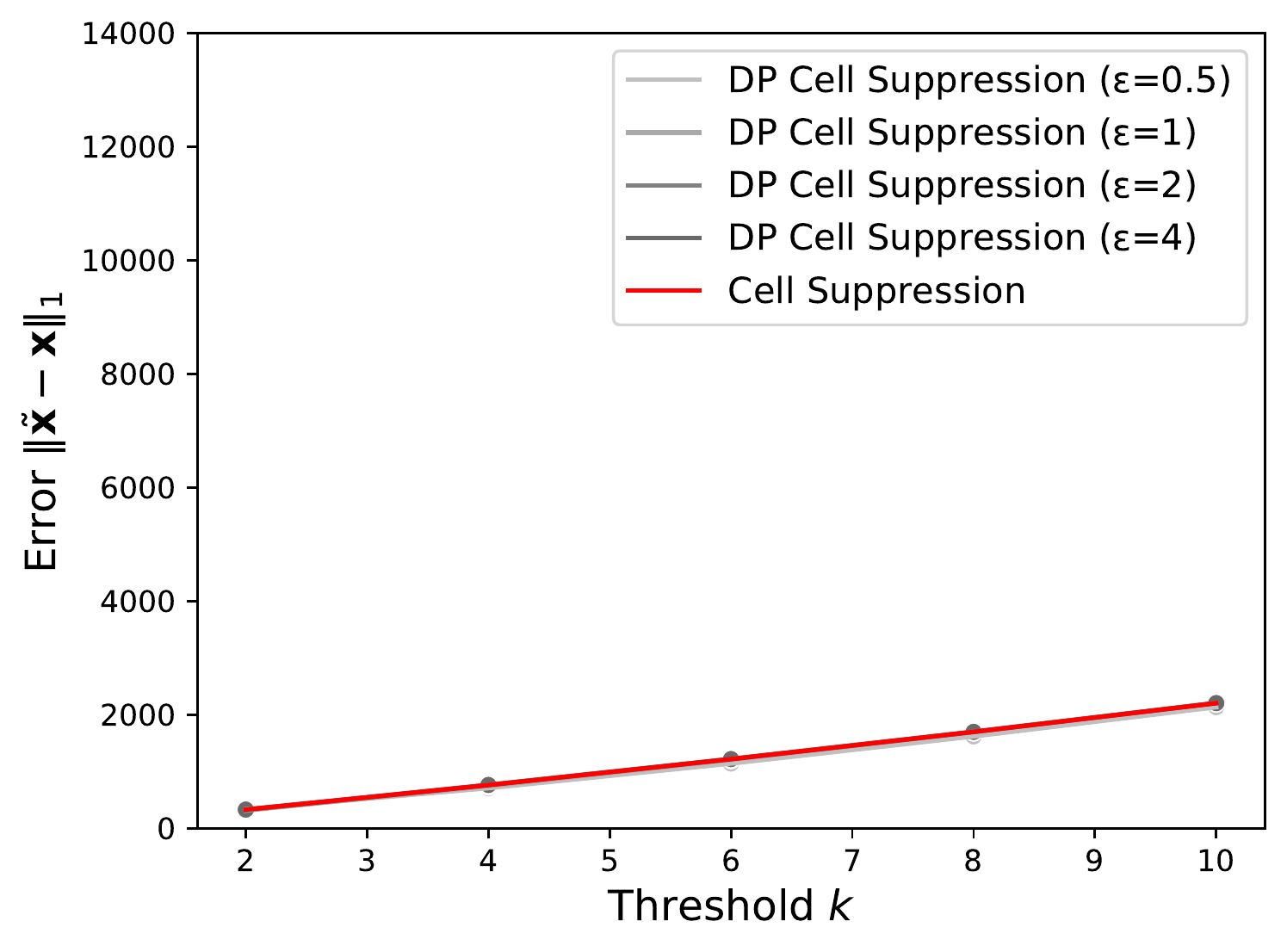}
\includegraphics[width=0.332 \textwidth]{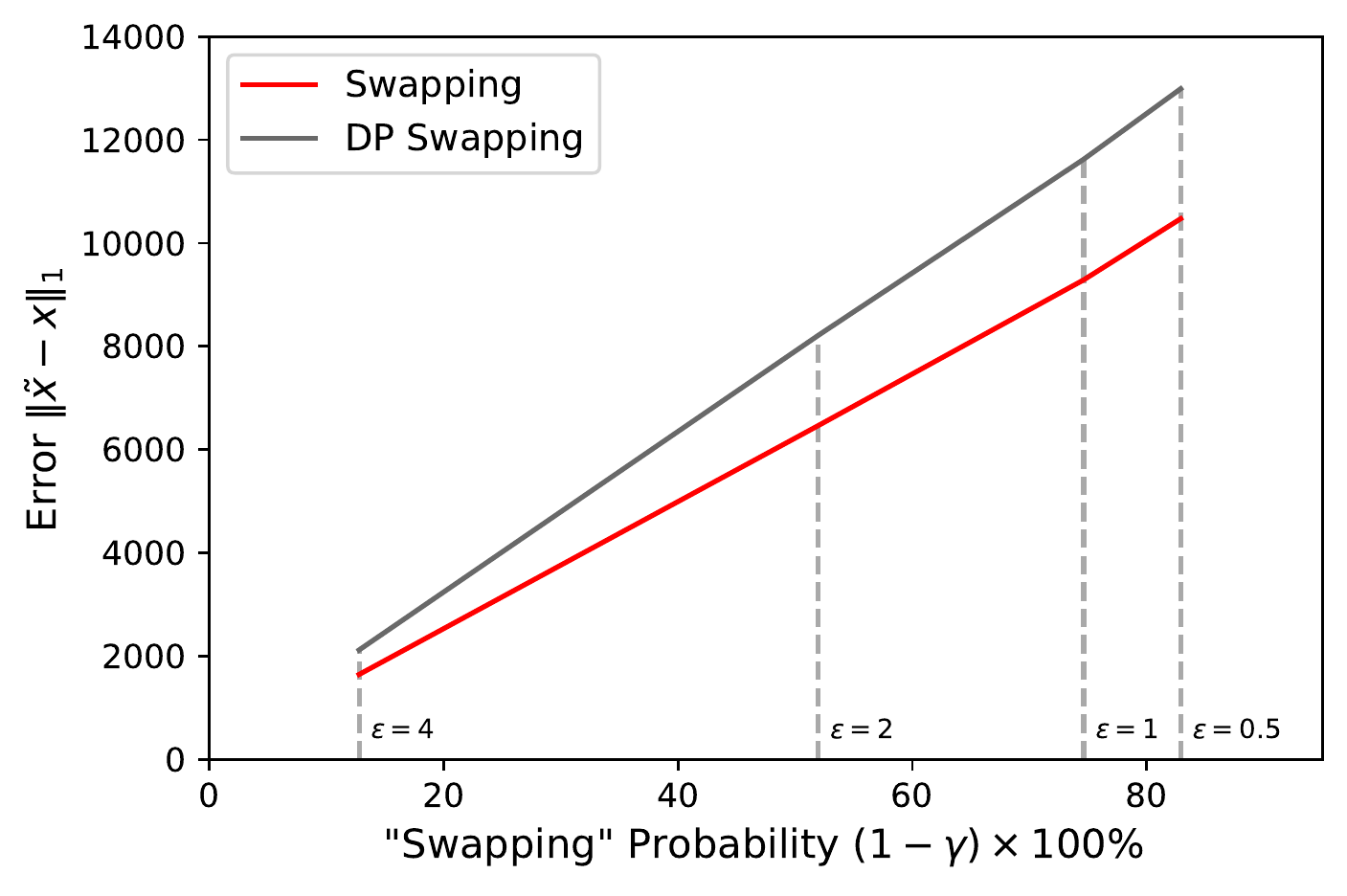}
\includegraphics[width=0.327\textwidth]{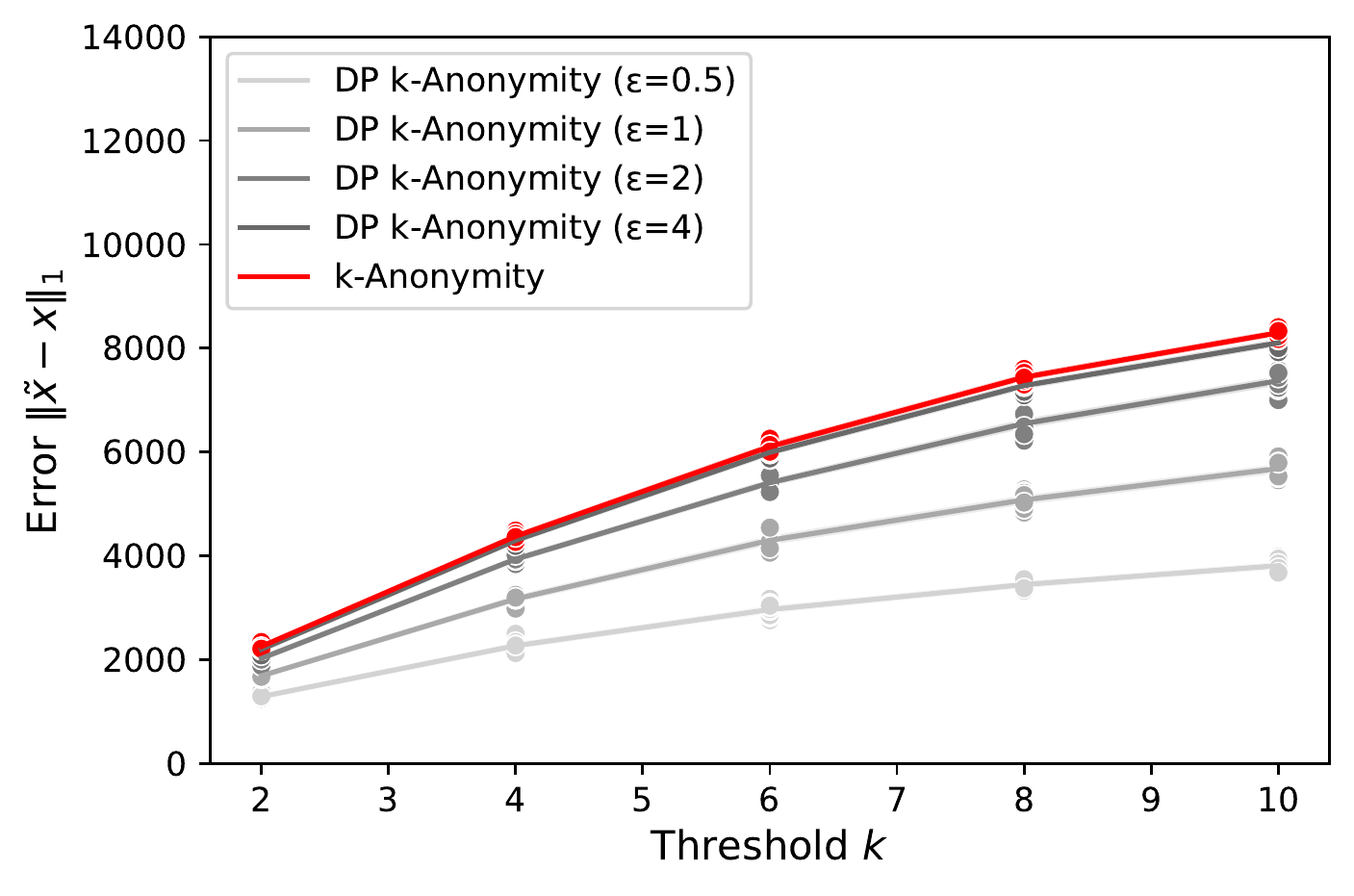}
\caption{TX ACS dataset: Errors $\Vert \tilde{\bm{x}}- \bm{x}\Vert_1$ for cell suppression (left), 
swapping (center) and $k$-anonymity (right) and their differentially 
private counterparts (average of 200 repetitions).}
\label{fig:TX_error}
\end{figure*} 

\begin{figure*}[!t]
\centering
\includegraphics[width=0.3\textwidth]{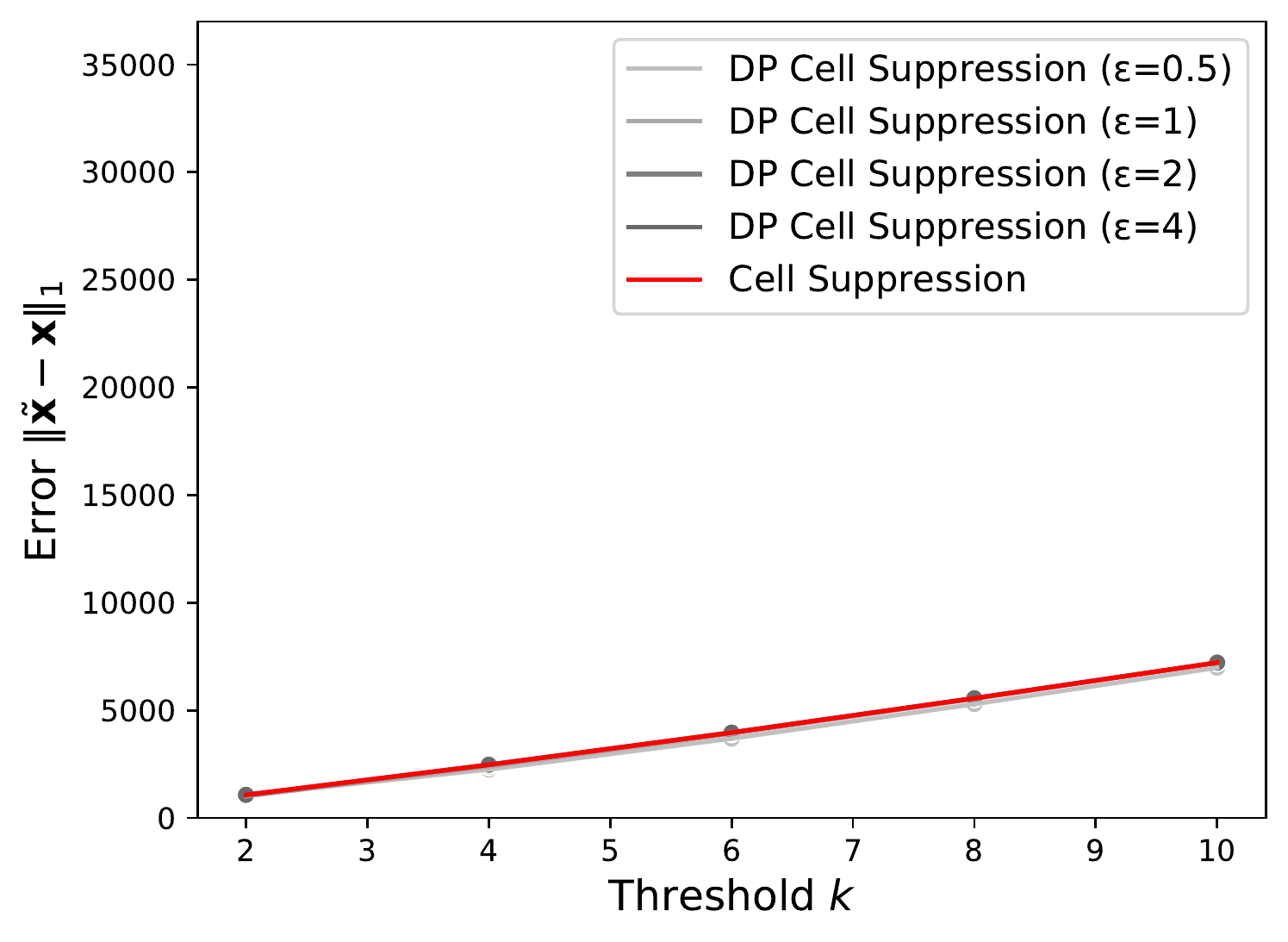}
\includegraphics[width=0.332 \textwidth]{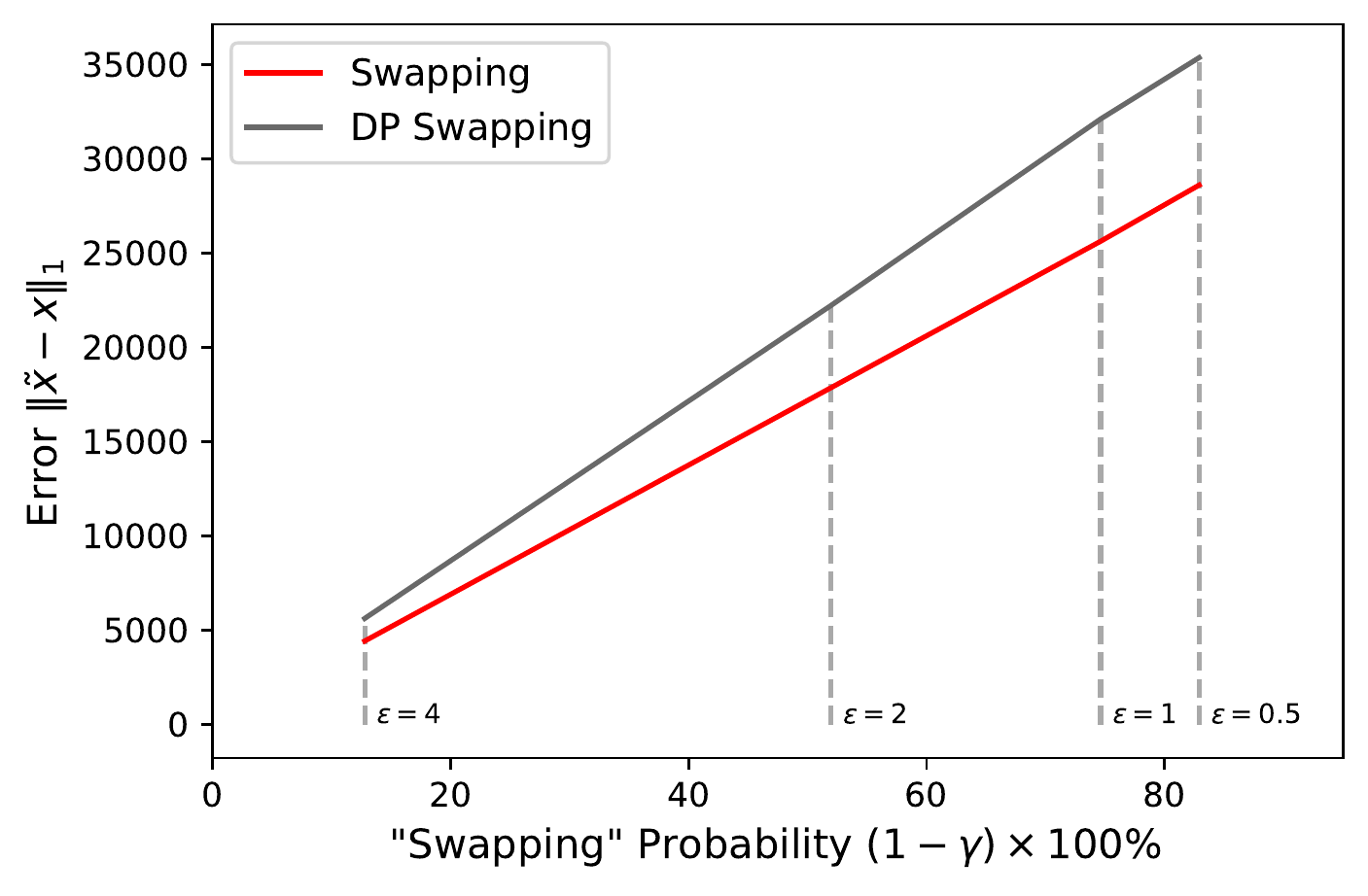}
\includegraphics[width=0.327\textwidth]{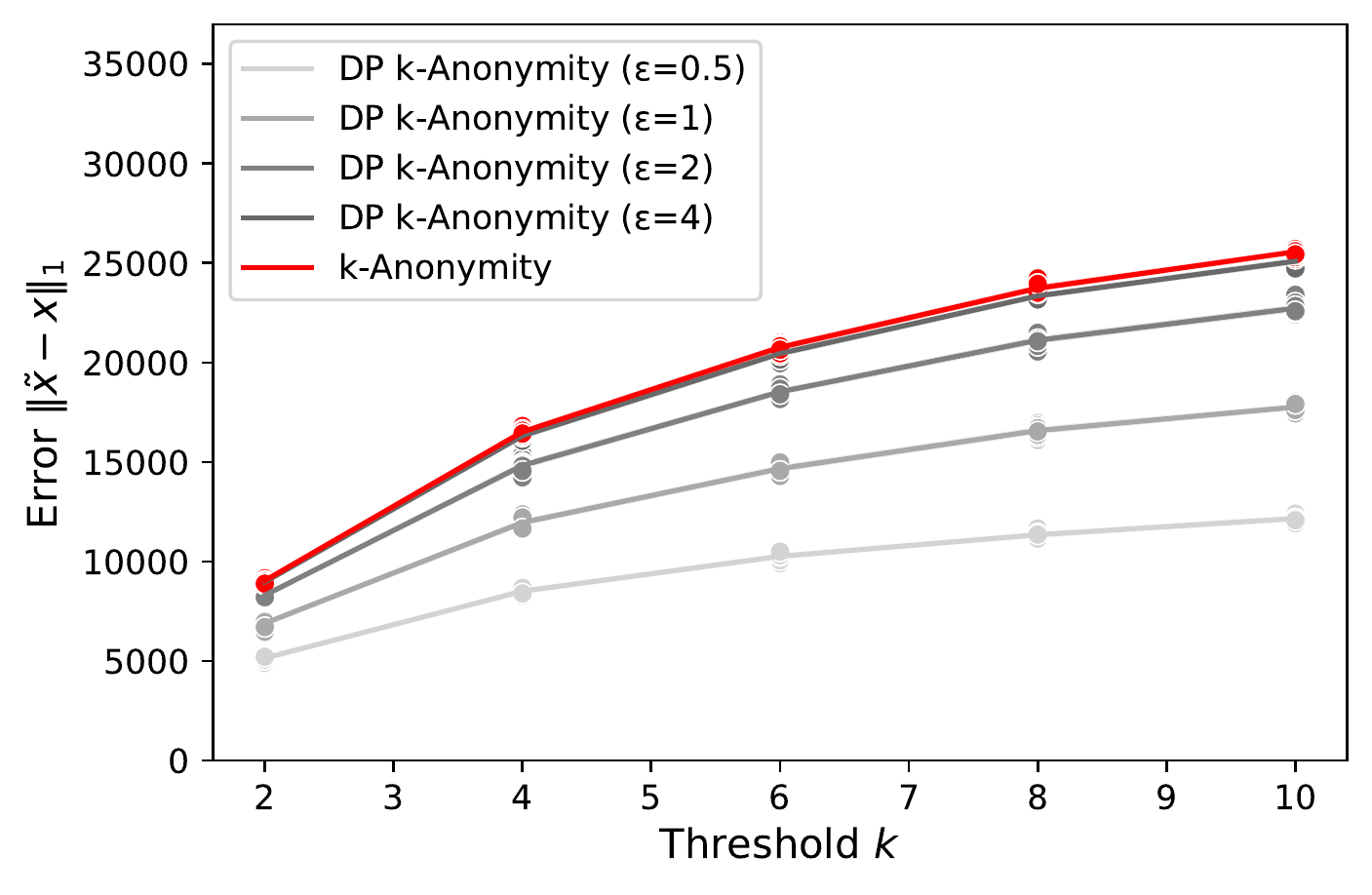}
\caption{Outlier ACS dataset: Errors $\Vert \tilde{\bm{x}}- \bm{x}\Vert_1$ for cell suppression (left), 
swapping (center) and $k$-anonymity (right) and their differentially 
private counterparts (average of 200 repetitions).}
\label{fig:OT_error}
\end{figure*}

\begin{figure*}[!t]
\centering
\includegraphics[width=0.3\textwidth]{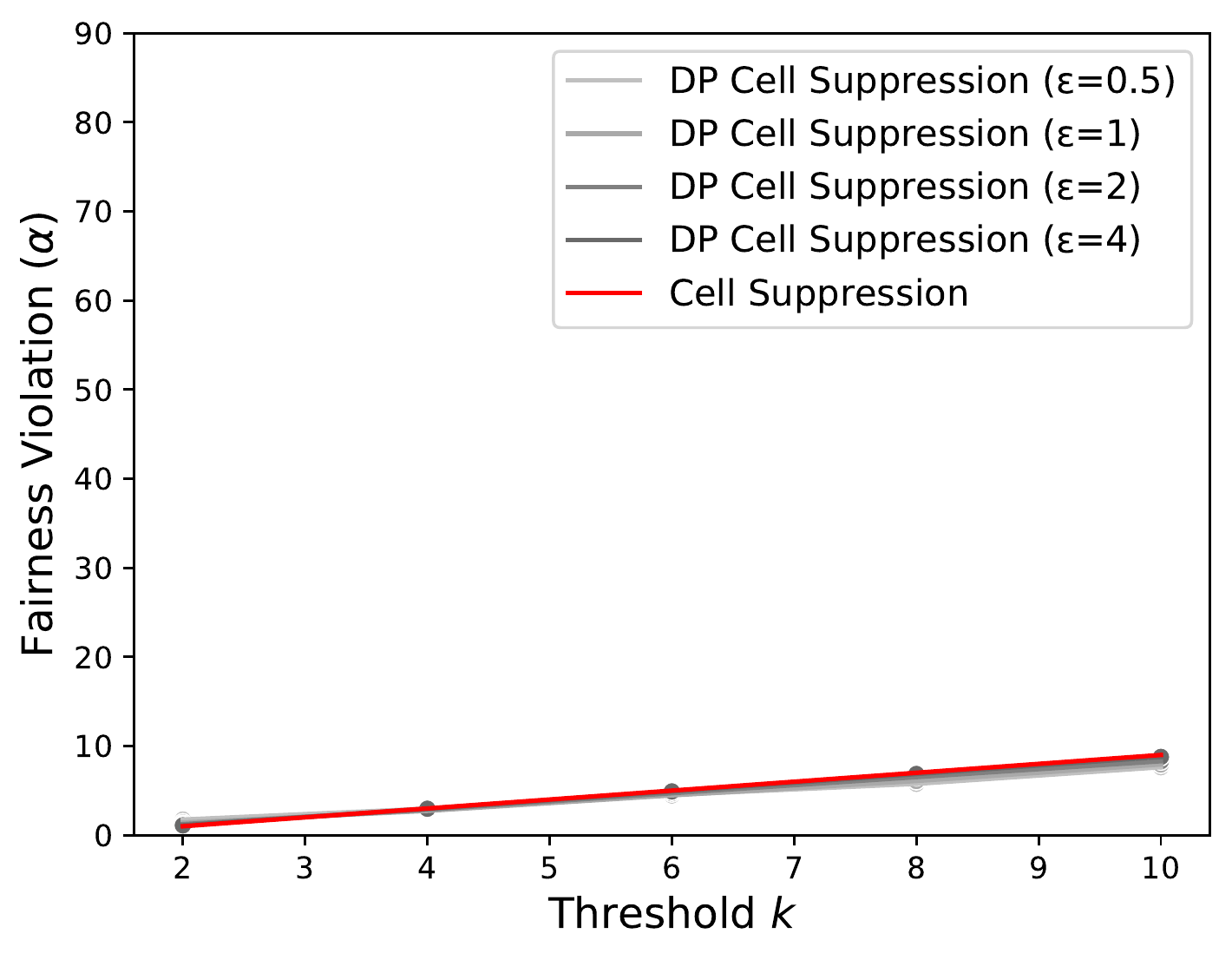}
\includegraphics[width=0.332 \textwidth]{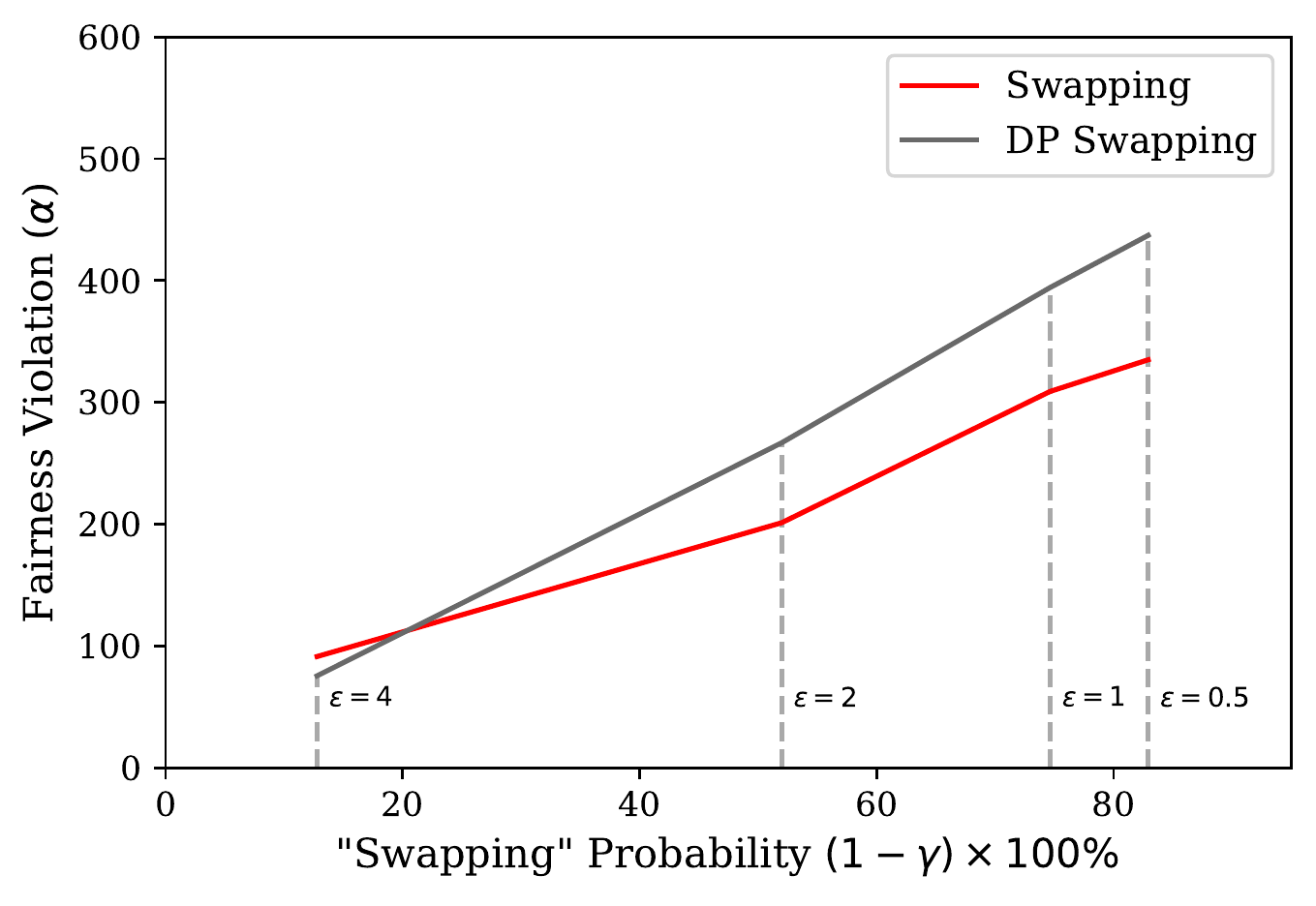}
\includegraphics[width=0.327\textwidth]{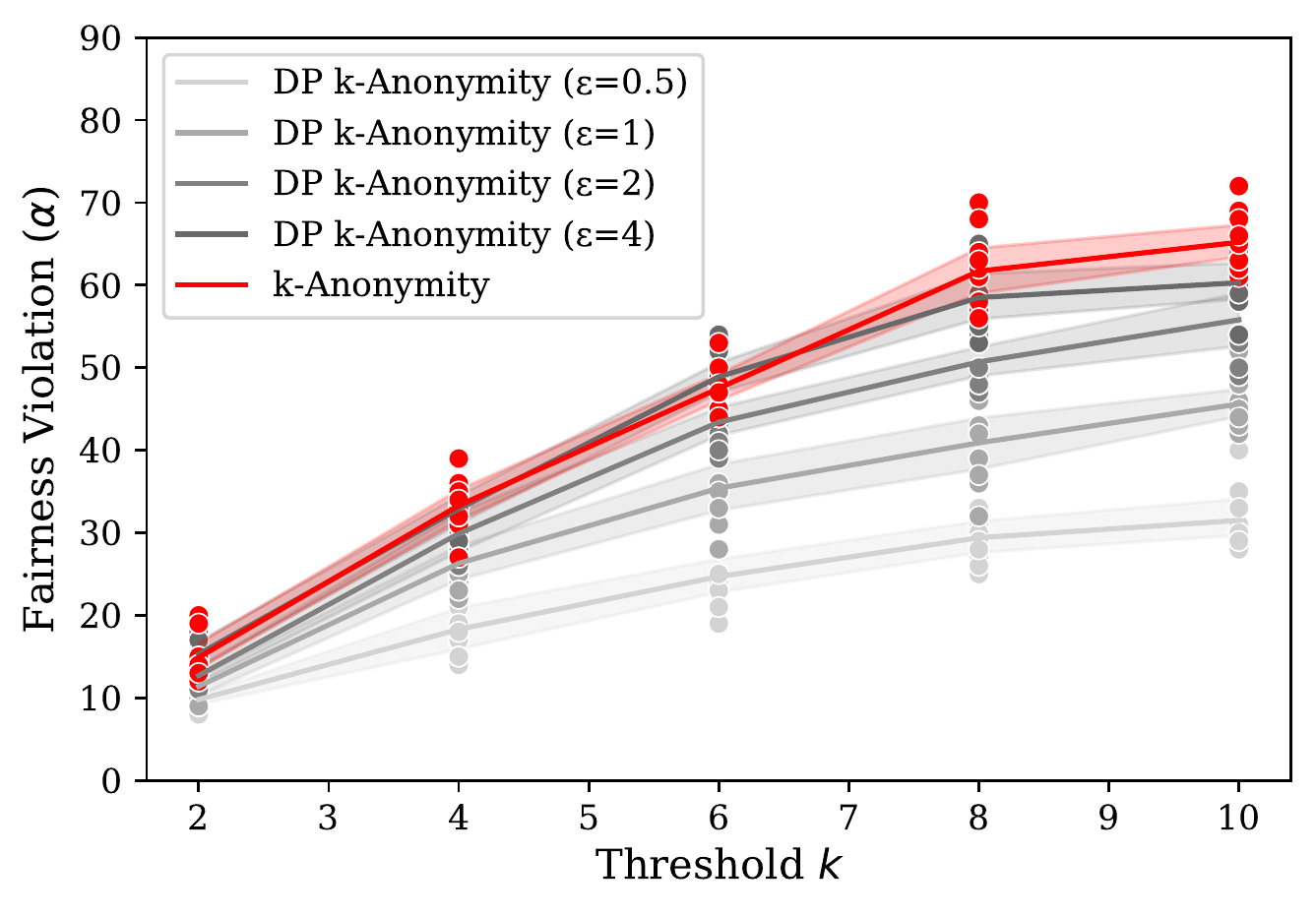}
\caption{TX ACS dataset: Fairness values $\alpha$ for cell suppression (left), 
swapping (center) and $k$-anonymity (right) and their differentially 
private counterparts (average of 200 repetitions).}
\label{fig:TX_fairness}
\end{figure*}

\begin{figure*}[!t]
\centering
\includegraphics[width=0.3\textwidth]{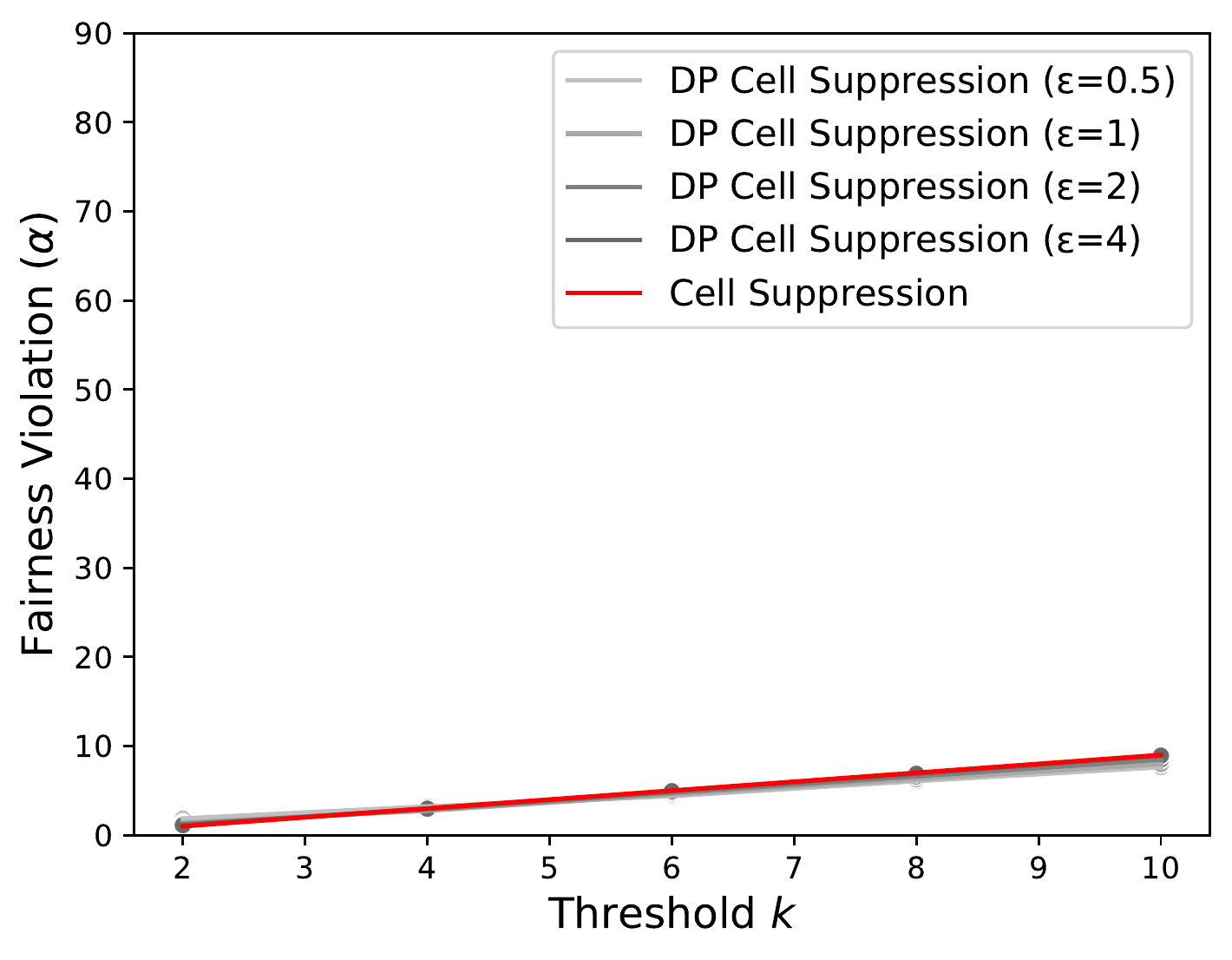}
\includegraphics[width=0.332 \textwidth]{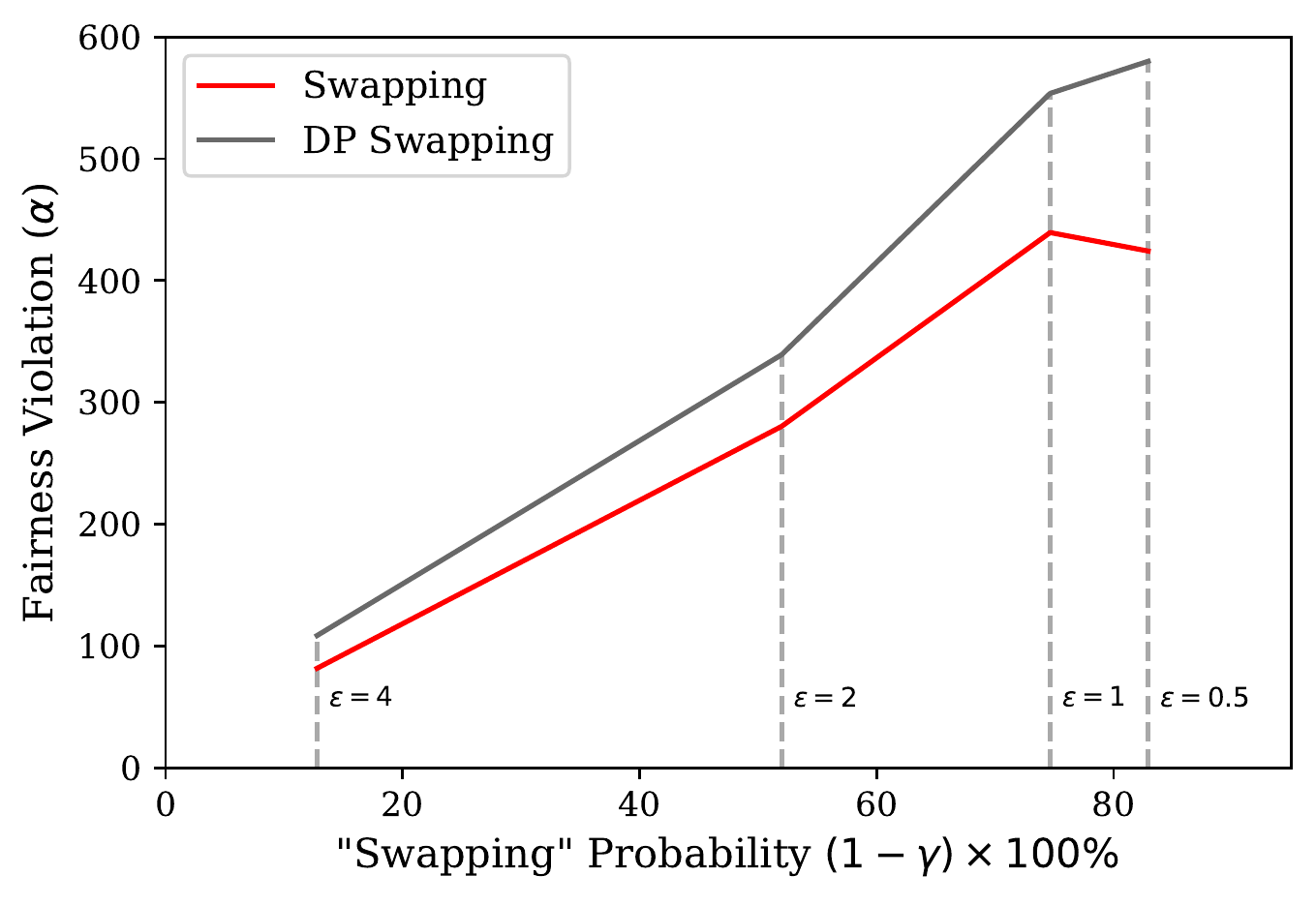}
\includegraphics[width=0.327\textwidth]{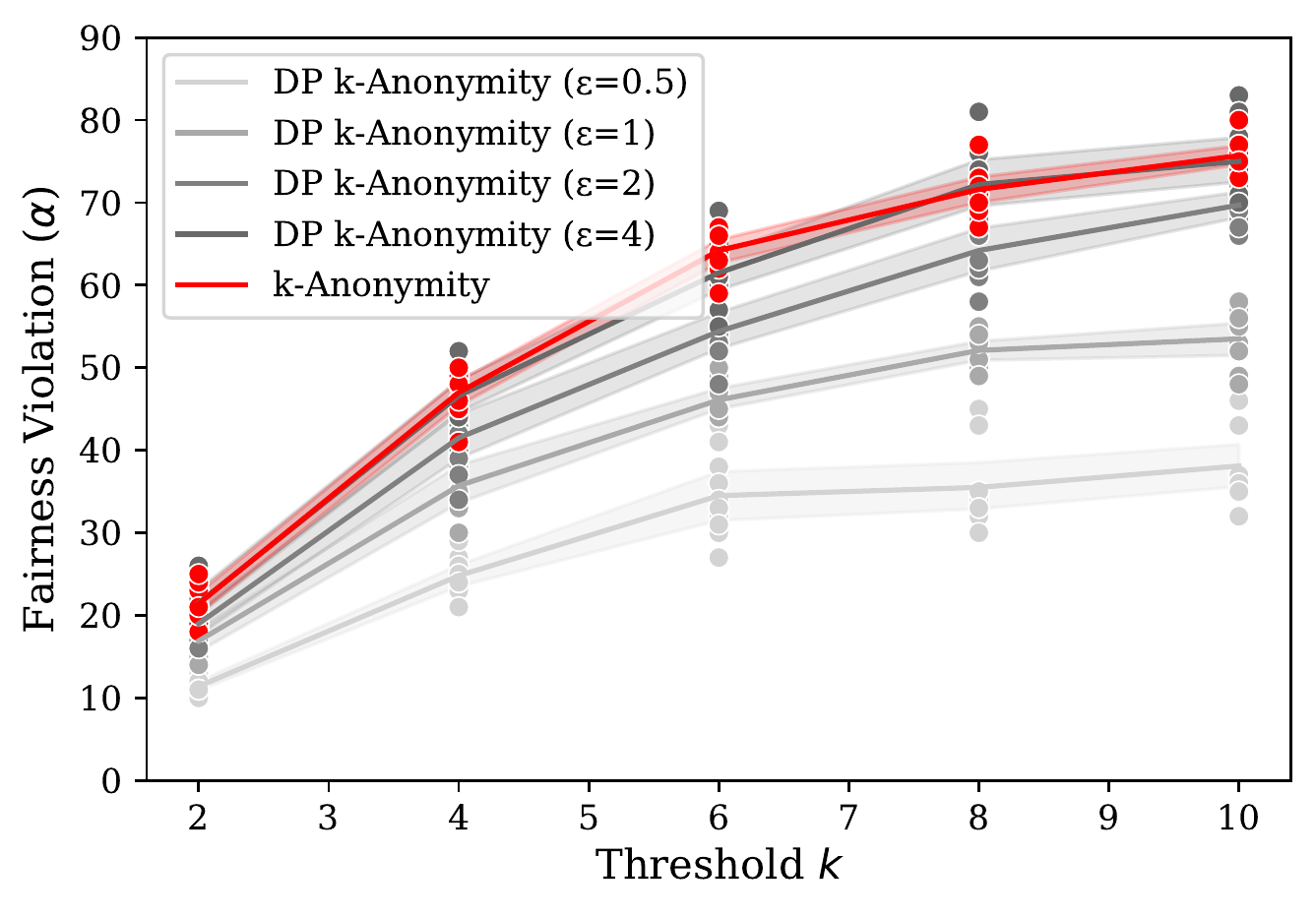}
\caption{Outlier ACS dataset: Fairness values $\alpha$ for cell suppression (left), 
swapping (center) and $k$-anonymity (right) and their differentially 
private counterparts (average of 200 repetitions).}
\label{fig:OT_fairness}
\end{figure*}

\end{document}